\documentclass[twoside,11pt,final]{entics} 
\usepackage{amsmath}

\usepackage{enticsmacro}
\usepackage{graphicx}
\usepackage{tikz-cd}
\usepackage[all]{xy}
\usepackage{enumitem}
\usepackage{scalerel}
\usepackage{wrapfig}
\usepackage{stmaryrd}
\sloppy

\volume{5}			


\newtheorem{counterexample}{Counterexample}

\newcommand{\bb}{\mathbb}
\newcommand{\mc}{\mathcal}
\newcommand{\mt}{\textnormal}
\newcommand{\tts}{\textstyle}
\newcommand{\ds}{\displaystyle}
\newcommand{\pown}{P^{\raisebox{1pt}{\scaleto{+}{4pt}}\hspace{-1pt}}}

\input{widebar}

\begin{document}
\begin{frontmatter}
  \title{Traces via Strategies in Two-Player Games}
  \author{Benjamin Plummer\thanksref{a}\thanksref{email}\thanksref{main}}	
  \author{Corina C\^irstea\thanksref{a}\thanksref{coemail}\thanksref{co}}		
   \thanks[email]{Email: \href{bjp1g19@soton.ac.uk} {\texttt{\normalshape
        bjp1g19@soton.ac.uk}}} 
  \address[a]{School of Electronics and Computer Science\\University of Southampton\\ United Kingdom} 
  \thanks[coemail]{Email:  \href{cc2@ecs.soton.ac.uk} 
  {\texttt{\normalshape
        cc2@ecs.soton.ac.uk}}}
    \thanks[main]{The first author would like to thank Alexandre Goy for enlightening discussions about the contents of his thesis.}
  \thanks[co]{C\^{\i}rstea was funded by a Leverhulme Trust Research Project Grant (RPG-2020-232).}
\begin{abstract} 
Traces form a coarse notion of semantic equivalence
between states of a process,
and have been studied coalgebraically for various
types of system.
We instantiate the finitary coalgebraic trace semantics framework of Hasuo et al.
for controller-versus-environment games,
encompassing both nondeterministic and probabilistic environments.
Although our choice of monads is guided by the constraints of this abstract framework,
they enable us to recover
familiar game-theoretic concepts.
Concretely,
we show that in these games, each element in the trace map corresponds
to a collection (a subset or distribution) of plays the controller can force.
Furthermore, each element can be seen as the outcome
of following a controller strategy.
Our results are parametrised by a weak distributive
law, which
computes what the controller can force in a single step.

\end{abstract}
\begin{keyword}
    Two-player game, Markov decision process, coalgebra, trace semantics,
    strategy
\end{keyword}
\end{frontmatter}

\section{Introduction}

The problem of program synthesis can be phrased using a two-player game between
a controller and its environment: where the controller must achieve some
linear-time property irrespective of environment choices.
We make steps towards a \textit{coalgebraic} framework for game-based synthesis,
which allows us to uniformly treat modelling the environment as
nondeterministic or stochastic.

The process-theoretic notion of \textit{trace} is well studied in program semantics. A trace arises from a sequence of
choices a system can make, and records the observable behaviour of the resulting execution.
\emph{Plays} in two-player games are more general: players make interleaved choices and
have opposing objectives.
We work with an
elegant coalgebraic representation of two-player games, where each player is modelled
with a monad.
Combining the controller and environment monads with a weak distributive law,
we obtain a monad for the composite system, to which the general
coalgebraic theory of finite traces \cite{hasuo_generic_2007} is shown to apply.
We then prove a close connection between controller strategies and traces,
phrasing game-theoretic notions like plays and strategies categorically along
the way.
Our parametric approach handles non-deterministic and probabilistic
environments uniformly, with two different monads.

We focus on two-player games where the observable outcome of a completed play consists
of a finite sequence of basic observations, each arising after a simple
controller-environment interaction.
To illustrate, consider the game on the following page, made up of controller states
(boxes), controller transitions (solid arrows),
environment states (black dots),
and environment transitions (dashed arrows).
A single observation from
$A=\{a,b,c,d,e\}$ is output after each transition
(when a play has not terminated), and an observation from $B=\{\checkmark\}$
is output when a play terminates.
We have chosen to model the environment as nondeterministic in this example.


\begin{wrapfigure}{r}{0cm}
    \hspace{-2em}
\begin{tikzpicture}
    \node[draw] at (0,0) (n1) {};
    \node[circle,inner sep=1pt,fill] at (1,0.5) (n2) {};
    \node[circle,inner sep=1pt,fill]at (1,-0.5) (n3) {};
    \node[draw] at (2,0.5) (n4) {};
    \node[circle,inner sep=1pt,fill] at (2,1.5) (n5) {};
    \node[draw] at (2,-0.5) (n6) {};
        
    \node[draw] at (4,0.5) (n8) {};
    \node[circle,inner sep=1pt,fill] at (3, 0.5) (n9) {};
    \node[circle,inner sep=1pt,fill] at (3, -0.5) (n10) {};
    \node[draw] at (4,-0.5) (n11) {};

    \node[circle,inner sep=1pt,fill] at (4.7,0.5) (e1) {};
    \node[] at (5.4,0.5) (e2) {};

    \node[circle,inner sep=1pt,fill] at (2,-1.15) (e3) {};
    \node[inner sep=0pt] at (2,-1.8) (e4) {};

    \draw[-stealth] (n1) -- (n2);
    \draw[-stealth] (n1) -- (n3);
    \draw[-stealth, dashed] (n2) -- (n4);
    \draw[-stealth] (n4) edge[bend right] (n5);
    \draw[-stealth, dashed] (n5) edge[bend right] (n4);
    \draw[-stealth, dashed] (n3) -- (n6);
    
    \draw[-stealth, dashed] (n3) -- (n4);
    \draw[-stealth] (n4) -- (n9);
    \draw[-stealth, dashed] (n9) -- (n8);
    \draw[-stealth] (n6) edge[bend left] (n10);
    \draw[-stealth, dashed] (n10) edge[bend left] (n6);
    \draw[-stealth, dashed] (n10) -- (n11);

    \draw[-stealth, dashed] (e1) -- (e2) {};
    \draw[-stealth] (n8) -- (e1) {};

    \node at (1.3, 0.65) {$a$};
    \node at (1.55, 1.3) {$b,c$};
    \node at (3.4, 0.7) {$d$};
    \node at (5.5, 0.55) {$\checkmark$};
    \node at (1.3, 0) {$a$};
    \node at (1.3, -0.7) {$b$};
    \node at (2.5, -0.8) {$e$};
    \node at (3.5, -0.35) {$e$};
    \node at (2.05,-1.9) {$\checkmark$};

    \node at (-0.4,0) {$x_0$};

    \draw[-stealth] (n6) -- (e3);
    \draw[-stealth, dashed] (e3) -- (e4);
\end{tikzpicture}
\end{wrapfigure}

A one-step controller-environment interaction is viewed as a single transition.
First, the controller chooses an environment state,
then the environment picks a pair of observation from $A$ and controller state,
or to terminate with an observation from $B$.
Each completed play from a state can be associated with a \textit{trace}:
a sequence of observations from $A$ ending with termination and a final observation from $B$.
Our main theorem shows that the trace semantics of a state, obtained by instantiating the coalgebraic theory of finite traces to our composite monad, is the set of subsets of terminating traces which
the controller can force.
For example, from $x_0$, the subsets of traces the controller can force include
$\{b\checkmark,ad\checkmark\}$, $\{abd\checkmark,acd\checkmark\}$ and
$\{b\checkmark, acd\checkmark,abcd\checkmark,abbcd\checkmark,abbbcd\checkmark,abbbbd\checkmark\}$, which correspond
to history-dependent controller strategies which force them.
Notice that some strategies are not {\it finitely completing}, i.e. do not force a set of finite traces.
When the environment is modelled probabilistically,
the subsets of enforceable traces become distributions of traces
that the controller can force.

In coalgebraic trace semantics,
the type of system under consideration
is modelled with a monad (describing the computation type, e.g. non-deterministic or probabilistic)
composed with an endofunctor (describing the type of observations made along a trace).
We begin, following work on alternating automata
\cite{goy_compositionality_2021,goy_powerset-like_2021},
by identifying a composite monad built from
the weak distributive law of powerset over itself
as a suitable model for controller-environment interactions in two-player games.
We then refine our component monads -
to satisfy the requirements of the finite trace semantics theorem in \cite{hasuo_generic_2007} -
to full powerset for controller choices
and finite, non-empty powerset for environment choices.
In the process of examining the conditions of \cite{hasuo_generic_2007},
we reveal two mistakes in the literature:
namely that composition in the Kleisli category of the monad featured in \cite{bonchi_convexity_2022}
is not left-strict;
and that the monad in \cite{jacobs_traces_2009}
is not commutative.
Section~\ref{sect:traces_and_execs} proceeds by proving the conditions required 
to obtain a trace map, and then introduces \textit{execution maps} (where executions are the coalgebraic counterpart of game plays)
as a special kind of trace map.
The required conditions also hold when using full powerset
for controller choices, and the finite distribution monad for environment choices,
giving us a second example where the environment evolves probabilistically.
We finish the section by showing that the trace map factors through the execution map.

A standard definition of a player strategy in a two-player game,
is a function from \textit{partial plays} ending in a state controlled by that player,
to a valid move for that player. 
Such strategies need only be defined over partial plays
which they can force, so are partial functions.
In Section~\ref{sect:strategies},
we give an equivalent definition of a controller strategy as a chain of maps in the
Kleisli category of the monad modelling the environment.
Then, in Section~\ref{sect:executions_via_strategies},
we characterise the trace map
in terms of collections (subsets or distributions) of finite traces that a controller strategy can force.
We show that a
collection of traces is in the trace map at a state $x$ if and only if,
there exists a strategy
from $x$
forcing exactly that collection.
We see this as a direct result
of the underlying weak distributive law computing
one-step outcomes of strategies.

Our general goal is to develop theory to underpin
a general coalgebraic synthesis tool.
The main result, Theorem~\ref{sect:traces_and_execs},
confirms we have found a suitable monad for representing
two-player games
for controller synthesis.
The practical contribution
is that outcomes of {\it finitely completing} strategies in a game,
are the finite trace semantics
of said game, so can be computed with a least fixed-point
computation.
Computing (finitely completing) controllers is thus reduced to an inductive computation.
Our choice to model a non-deterministic environment using the \emph{non-empty} finite powerset, also
makes sense from a practical standpoint: a real environment can never deadlock, and it is common practice to restrict the possible environment inputs to be finite\footnote{A similar assumption is made in geometric logic \cite{Vickers},
where there is arbitrary disjunction and finite conjunction.} to make the synthesis problem decidable.
The composite monad will give \textit{convex} sets of collections (subsets or distributions),
meaning that sets of subsets are closed under binary union,
and sets of distributions are closed under convex combination.
In finitely branching games, with a non-deterministic environment,
this yields a greatest choice at each controller state, the union of all the other choices.
This can allow strategy synthesis by approximating down from the largest strategy, yielding the most permissive strategy.
For probabilistic enrivonments,
convexity allows 
\textit{randomised strategies},
which are standard in the theory of Markov Decision
Processes \cite{mdps}, to be accounted for.

Finally, we comment on the role of category theory in this paper.
Our main result, equating trace semantics and outcomes of strategies,
cannot be known explicitly, as
there is no definition of trace semantics of two-player
games in the literature.
What our coalgebraic approach gives us,
is a ready-made definition of finite trace semantics:
it is a fact (established in Section~\ref{sect:traces_and_execs})
that there is a final coalgebra in the category
where our games live.
Thus, the role of coalgebra, and more generally category theory,
is pivotal in our work.
In Section~\ref{sect:strategies},
we give a category-theoretic definition of strategies
and plays, which is not only enlightening,
but also provides a clear road map for the proof
in Section~\ref{sect:executions_via_strategies}.

\subsection{Related Work}

A composite monad which we do not consider,
but is conceptually very similar,
is full powerset combined with full finite powerset.
This monad is a special case of the monad
considered in \cite{bonchi_convexity_2022},
which combines full powerset
with a \textit{multiset monad over a semifield},
when we choose the Boolean semifield.
However, as we will show,
this monad is inadequate for trace semantics given
by Theorem \ref{theorem:traces}.
Similarly,
the monad considered in \cite{jacobs_coalgebraic_2008}
could instantiate
to non-empty powerset combined with finite powerset.
This is also not suitable for Theorem \ref{theorem:traces},
however is adapted to work in \cite{jacobs_coalgebraic_2008}.
In doing so, they require an additional assumption
which, when expressed in terms of games,
amounts to the controller always being able to force
the environment to deadlock immediately.
This would be unsatisfactory in 
controller-versus-environment games, as
the controller would always have a (trivial) winning strategy.

When looking for monads on the double powerset functor,
the \textit{neighbourhood monad} $\mc{N}$ comes to mind
(generated from the contravariant powerset being dual
adjoint to itself).
The multiplication $\mu^\mc{N}$
does not model games,
for example it only returns a non-empty set
when given a subset containing  an upwards closed set.
Similarly, double covariant powerset on a function
always returns upwards closed sets.
To fix these oddities, a common choice
for work on game logic \cite{hansen-weak-complete5,hansen-parity-games}
is to use the \textit{monotone neighbourhood monad} $\mc{M}$;
it has been suggested in \cite{kojima} to use
$\mc{M}$ to give a path-based semantics for a coalgebraic CTL.
This monad is generated by restricting
the dual adjunction between ${\bf Set}$ and itself to one
between ${\bf Set}$ and ${\bf Poset}$ (see e.g. \cite{jacobs_recipe_2015})
- and was derived independently in \cite{bertrand_coalgebraic_2018}
in the context of alternating automata, using
a distributive law.
We view the monotone neighbourhood monad as ``fixing''
the neighbourhood monad because the contravariant nature
is tamed: on upward closed sets $PP(f)$ agrees with $\mc{N}(f)$
and the multiplication agrees with the $\exists$ $\forall$
behaviour of picking a strategy\footnote{The monotone neighbourhood monad is
not a submonad of $\widetilde{PP}$,
because the units do not agree.}.
This monad is also not suitable for the
assumptions required in Theorem \ref{theorem:traces}, because
its Kleisli category is not enriched in $\omega$-cpos
(see Counterexample~\ref{counter:omega-cpo}).
It may also be possible, analogous to the approach taken
in this paper, to restrict to upwards closed
sets of {\it finite} subsets, we leave investigating
this to future work.

In \cite{BOWLER201823},
the authors consider a non-deterministic
programming language,
where game-like behaviour arises from
I/O interaction.
The key difference with our work,
is that the input to the program
is part of the (branching-time) behaviour -
captured with a functor rather
than a monad.
The non-determinism of the program,
is treated with the powerset monad,
thus the trace semantics is a set of possible computations
(rather than a set of subsets, like in ours).

Finally, while we have chosen to work with the Kleisli approach to finite trace semantics
from \cite{hasuo_generic_2007},
it is worth mentioning other approaches and variants.
Our games are generative (of the shape $TF$),
which rules out the Eilenberg-Moore
approach (which treats systems of type $GT$).
In \cite{jacobs_trace_2012},
the authors give a way of casting the Kleisli approach to the Eilenberg-Moore one,
using a natural transformation $TF\rightarrow GT$.
We find it unlikely that a suitable functor $G$ and
natural transformation exist in our case.
Work in \cite{frank_coalgebraic_2022} provides a
different set of assumptions to obtain finite trace semantics,
in regards to our work the only assumption that makes a difference
is they do not assume a zero object in the Kleisli category of the monad,
however they do still require the stronger condition of left-strictness
(see \cite[Lemma 3.5]{hasuo_generic_2007}).
It may be possible to phrase our work
in terms of graded monads \cite{milius-graded}.
Although, we currently rely on structure in
the Kleisli category of our monad (e.g. $\omega$-cpo enrichedness and
the existence of a certain limit)
which
is afforded to us by the approach in \cite{hasuo_generic_2007,Jacobs_2016},
hence leave a treatment in terms of graded monads to future work.

\section{Outline}

We now sketch the categorical ideas of our approach. 
Assume we model the controller with the full covariant powerset monad
$P:{\bf Set}\rightarrow{\bf Set}$,
and the environment with some monad $T:{\bf Set}\to{\bf Set}$.
Suppose we have some way to combine these into a monad
$\widetilde{PT}:{\bf Set}\rightarrow{\bf Set}$, which fits
the framework for finite trace semantics in \cite{hasuo_generic_2007}.
We model two-player games as $\widetilde{PT}H$ coalgebras,
with a functor $H:{\bf Set}\rightarrow{\bf Set}$
describing the possible observations after one interaction.
We will take $H(X)=B+A\times X$, focussing on games whose plays can terminate
with an observation $b\in B$, or proceed to a new state $x\in X$ with an
observation $a\in A$, following a controller-then-environment
choice of moves.
Instantiating \cite{hasuo_generic_2007},
we get a trace map ${\sf tr}_c:X\rightarrow \widetilde{PT}(A^*B)$
for each coalgebra $c:X\rightarrow\widetilde{PT}H(X)$,
where $A^*B$ is the initial $H$-algebra.
We answer two questions:
\begin{enumerate}
    \item There are no established notions of trace semantics in games,
        so what do the contents of ${\sf tr}_c(x)$ correspond to?
    \item How do strategies fit into the categorical picture?
\end{enumerate}

We will summarise our answers shortly.
First, fix
a coalgebra $c:X\rightarrow \widetilde{PT}H(X)$,
that we view as a game.
The states $x\in X$ correspond to controller states,
whereas elements $U\in c(x)\subseteq T(AX+B)$ are collections
of observations from one-step plays which the controller can force from $x$.
These elements can be thought of as environment states.
Similarly, iterating (in the Kleisli category of the monad $\widetilde{PT}$) the coalgebra
gives a map $X\to\widetilde{PT}H^n(X)$,
assigning each state to the set of collections of observations of $n$-step 
plays which the controller can force.

To answer the first question,
we follow \cite{hasuo_generic_2007},
and unpack the generic construction of the trace map.
The colimit of the initial sequence of the endofunctor $H$
gives the initial $H$-algebra $A^*B$.
The elements of this algebra are the possible observable outcomes of plays,
each given by some $\kappa_n(a_1 \ldots a_{n-1} b)$ with $n \ge 1$:
\begin{center}
\tikz[overlay]{
        \node[draw] at (1,-0.5) {{\bf Set}};
    }
\begin{tikzcd}
    0 \ar[r, "!"] \ar[rrrd, bend right=10, "\kappa_0"] &[2em] H(0) \ar[r, "H(!)"] \ar[rrd, bend right=5, "\kappa_1"]
        &[2em] H^2(0) \ar[r, "H^2(!)"] \ar[dr, "\kappa_2"] &[2em]
        \cdots
        \\[-1em]
        &&& A^*B
\end{tikzcd}
\end{center}
The trace map from \cite{hasuo_generic_2007}
arises from the observation that the initial $H$-algebra is a final
$\widebar{H}$-coalgebra (where $\widebar{H}:{\bf Kl}(\widetilde{PT}) \to {\bf Kl}(\widetilde{PT})$
is the extension of $H:{\bf Set}\to{\bf Set}$).
\begin{center}
\tikz[overlay]{
        \node[draw] at (-0.7,0) {{\bf Kl}$(\widetilde{PT})$};
    }
\begin{tikzcd}
    X \ar[r, "c"] \ar[d, "!"] &[2em] \widebar{H}(X) \ar[r, "\widebar{H}(c)"] \ar[d, "\widebar{H}(!)"]
    &[2em] \widebar{H^2}(X)  \ar[d, "\widebar{H^2}(!)"] \ar[r, "\widebar{H^2}(c)"]&[2em] \cdots
    \\[-0.5em]
    0   &[2em] \widebar{H}(0) \ar[l, "!"']
        &[2em] \widebar{H^2}(0)  \ar[l, "\widebar{H}(!)"'] &[2em]
        \cdots \ar[l, "\widebar{H^2}(!)"']
                \\[-0.5em]
        &&& A^*B \ar[ulll, bend left=10, "\pi_0"'] \ar[ull, bend left=5, "\pi_1"'] \ar[ul, "\pi_2"']
\end{tikzcd}
\end{center}
Specifically, iterating $c$ up to some depth and then projecting into the
final sequence of $\widebar H$ yields a cone over this final sequence,
and the map ${\sf tr}_c:X\rightarrow A^*B$ arises from the limiting property
of $A^*B$ in ${\bf Kl}(\widetilde{PT})$.
For $x \in X$, each element of ${\sf tr}_c(x)$ is a collection of observations
of completed plays, which the controller can force.
This provides an answer to the first question.

For the second question,
note that strategies (in the standard sense) resolve controller choices,
so should exist in the Kleisli category of the monad $T$ used to model
environment choices. 
In Section~\ref{sect:strategies}, roughly speaking,
we capture strategies using a family of maps $\sigma_{n+1}:\mt{Im}(\sigma_n)\rightarrow \widehat{H^{n+1}}(X)$ for each $n\in\omega$, depicted below.

\begin{center}
\tikz[overlay]{
        \node[draw] at (-0.7,0) {{\bf Kl}$(T)$};
    }
\begin{tikzcd}
      & X  & \widehat{H}(X)  & \widehat{H^2}(X) & \dots
     \\[-0.5em]
      1 \ar[ur, "\sigma_0"] \ar[r, dashed]
      & \mt{Im}(\sigma_0) \ar[ur, "\sigma_1"] \ar[r, dashed] \ar[u, >->]
      & \mt{Im}(\sigma_1) \ar[ur, "\sigma_2"] \ar[r, dashed] \ar[u, >->]
      & \mt{Im}(\sigma_2) \ar[ur, "\sigma_3"] \ar[r, dashed] \ar[u, >->] & \cdots
\end{tikzcd}
\end{center}
Here, $\widehat{H}$ extends 
$H$ to ${\bf Kl}(T)$. The map $\sigma_0$ picks the initial state of the play, whereas the maps $\sigma_{n+1}$ pick subsequent controller moves from states that can be reached via $\sigma_0, \ldots,\sigma_n$, according to the allowed moves in those states (as specified by $(X,c)$). Just like $\widebar{H^n}(X)$, the elements of $\widehat{H^n}(X)$ are either completed plays of length up to $n$ or incomplete plays of length exactly $n$. By composing along the bottom of the above diagram we obtain maps ${\sf plays}_n^\sigma:1\rightarrow \widehat{H^n}(X)$, which give the set of complete plays (of length less than or equal to $n$) and partial plays (of length $n$) which the strategy $\sigma$ can force.
We show in Section~\ref{sect:strategies} that we can lift these maps
into ${\bf Kl}(\widetilde{PT})$, and they again form a cone over the final sequence.

\begin{center}
\tikz[overlay]{
        \node[draw] at (-0.7,0) {{\bf Kl}$(\widetilde{PT})$};
    }
\begin{tikzcd}
    1 \ar[d, "{\sf plays}^\sigma_0"]
    \ar[dr, bend left=5, "{\sf plays}^\sigma_1"{yshift=-3pt}]
    \ar[drr, bend left=10, "{\sf plays}^\sigma_2"]
    \\[-0.5em]
    X \ar[d, "!"] &[2em] \widebar{H}(X) \ar[d, "\widebar{H}(!)"]
    &[2em] \widebar{H^2}(X)  \ar[d, "\widebar{H^2}(!)"] &[2em] \cdots
    \\[-0.5em]
    0   &[2em] \widebar{H}(0) \ar[l, "!"']
        &[2em] \widebar{H^2}(0)  \ar[l, "\widebar{H}(!)"'] &[2em]
        \cdots \ar[l, "\widebar{H^2}(!)"']
\end{tikzcd}
\end{center}
This establishes a unique mediating map $1\rightarrow A^*B$
in ${\bf Kl}(\widetilde{PT})$ for each strategy $\sigma$,
which we see as the outcome of $\sigma$:
which finite completed traces $\sigma$ can force.
We complete the answer to our second question in
Section~\ref{sect:executions_via_strategies},
by proving that the trace semantics ${\sf tr}_c(x)$ at a state $x$
coincides with the union of all the finite observable outcomes of strategies from $x$.
This gives us that there is a set $U\in{\sf tr}_c(x)$
if and only if, there is a finitely completing strategy which forces
a set of completed plays underlying the traces in $U$.

\section{Preliminaries}
\label{sect:prelim}

\subsection{Two-player Games}
\label{sect:two_player_games}

We start by introducing the kind of two-player game we are concerned with,
and the subsequent game-theoretic concepts.
These are represented coalgebraically in the remainder of the paper.

Fix two disjoint sets $A$ and $B$,
representing the continuing observations and terminating
observations respectively.
Our two-player games are player over {\it bipartite game graphs}
$(X,Y,E_1,E_2)$, where
$X$ and $Y$ are disjoint sets of controller and environment states respectively,
and $E_1\subseteq X\times Y$ and $E_2\subseteq Y\times (B + A\times X)$
are the controller and environment edge relations respectively.
For technical reasons discussed in Section~\ref{sect:traces_and_execs},
we put two restrictions on the environment's edge relation $E_2$:
that it is \textit{image-finite} and \textit{left-total}.
Image-finiteness means
$\{u\in B+A\times X\mid E_2(y,u)\}$ is finite for each environment state $y\in Y$.
$E_2$ being left-total means for each environment state $y\in Y$, there exists
some $u\in B+A\times X$ such that $E_2(y,u)$.

A \textit{partial play} over a game graph 
is an element $x_0a_1x_1\dots a_nx_n\in (XA)^*X$
such that there is an environment state $y\in Y$ with
$E_1(x_i,y)$ and $E_2(y,(a_{i+1},x_{i+1}))$, for all $0\le i<n$.
A \textit{completed play} is an element $\rho xb\in (XA)^*XB$,
such that $\rho x$ is a partial play and there
exists an environment state $y\in Y$ with $E_1(x,y)$ and $E_2(y,b)$.

Let $\sigma:(XA)^*X\rightarrow Y$ be a \textit{partial function} (i.e.
not defined over the entire domain),
which respects our game: $E_1(x,\sigma(\rho x))$ for every $\rho x\in (XA)^*X$
which $\sigma$ is defined over.
We say that a partial play $x_0a_1x_1\dots a_nx_n\in (XA)^*X$
\textit{conforms to} $\sigma$, if for all $0\le i < n$:
$\sigma$ is defined over $x_0a_1x_1\dots a_ix_i$
and $E_2(\sigma(x_0a_1x_1\dots a_ix_i),(a_{i+1},x_{i+1}))$.
Similarly, a completed play $\rho xb\in (XA)^*XB$,
made up of partial play $\rho x$ and a terminating observation $b$,
\textit{conforms to} $\sigma$,
when $\rho x$ conforms to $\sigma$ and $\sigma(\rho x)$
is defined with $E_2(\sigma(\rho x),b)$.
We call a pair $(x,\sigma)$
a \textit{pointed standard strategy},
precisely when $\sigma$ is defined {\bf exactly} over the
$\sigma$-conform partial plays which start in $x$.
An $n${\it-step partial outcome} of a pointed standard strategy $(x,\sigma)$,
denoted ${\sf plays}^\sigma_n(x)$,
is the set of all partial plays of length $n$ (elements of $(XA)^nX$)
and of complete plays of length less than $n$ (elements of $(XA)^{<n}XB$)
which conform to $\sigma$.
The {\it completed outcome} ${\sf plays}^\sigma(x)\subseteq (XA)^*XB$
of a pointed standard strategy $(x,\sigma)$,
is the set of all completed plays which conform to $\sigma$.

The objective of the game,
from the point of view of the controller,
is to force a completed play within a set of ``good" outcomes,
where ``goodness" is a property of the \emph{trace} $a_1 \ldots a_n b$
underlying a completed play $x_0a_1x_1\dots a_nx_nb$.
The reader may have noticed that our notion of play differs
slightly from the standard one \cite{Mazala2002},
in that it does not record environment states.
This is not an issue, precisely because the environment states visited
along a play have no impact on whether the resulting outcome is good or not;
and as a result, strategies can not benefit
from recording environment states in the play history.
The controller states also do not affect the goodness of an outcome,
but they need to be explicit because the strategy may depend on them.

\subsection{Markov Decision Processes}

\label{sect:mdps}

We also consider a probabilistic variant of these games,
which are essentially \textit{Markov decision processes} (MDPs).
These games are still four-tuples $(X,Y,E_1,E_2)$,
but now $E_2$ is a function $Y\times (B+A\times X)\rightarrow[0,1]$
such that $E_2(y):B+A\times X\to [0,1]$
is a finitely supported probability distribution.

Here, a \emph{partial play} is an element $x_0a_1x_1\dots a_nx_n\in (XA)^*X$
such that for all $0\le i<n$, we have that there
exists some $y\in Y$ with $E_1(x_i,y)$ and $E_2(y,(a_{i+1},x_{i+1}))>0$.
A \emph{completed play} is a sequence $\rho xb\in(XA)^*XB$, such that $\rho x$
is a partial play, and there exists a $y\in Y$
with $E_1(x,y)$ and $E_2(y,b)>0$.

Let $\sigma:(XA)^*X\rightarrow Y$ be a partial function,
which respects our MDP: $E_1(x,\sigma(\rho x))$ for every $\rho x\in (XA)^*X$
which $\sigma$ is defined over.
A partial play $x_0a_1x_1\dots a_nx_n$ \emph{conforms to $\sigma$},
if for all $0\le i <n$ we have
$E_2(\sigma(x_0a_0x_1\dots a_ix_i),(a_{i+1},x_{i+1}))>0$.
Again, a completed play $\rho xb$ {\it conforms to }$\sigma$,
if $\rho x$ is a partial play which conforms to $\sigma$,
and $E_2(\sigma(\rho x),b)>0$.
We call a pair $(x,\sigma)$ a {\it pointed standard strategy},
precisely when $\sigma$ is defined {\bf exactly} over
the $\sigma$-conform partial plays which start in $x$.
The $n$-step partial outcome
and the $n$-step completed outcome,
then become distributions over partial and completed plays.
We leave the definitions to the reader.

\subsection{Linear Functors}
\label{sect:linear_funcs}

We require a restriction of set-based polynomial 
functors (which are standard in coalgebra, see \cite[p. 49]{Jacobs_2016})
to \textit{linear functors}.
To reduce clutter, we often use juxtaposition to denote the product of functors
(we never need to denote the application of a constant functor),
e.g. $AB:=A\times B$.
Technically, linear functors are a class of functors built inductively
out of arbitrary coproducts and $1$, a constant functor
assigning every set to a singleton set.
Linear functors have a general form $H(Y)\cong A+BY$,
a consequence of $C(A+BY)\cong CA+CBY$
and $(A_0+B_0Y)+(A_1+B_1Y)\cong (A_0+A_1)+(B_0+B_1)Y$ (extended to arbitrary coproducts).
We reserve $H$ for a linear functor,
and use $H_X$ for the linear functor $H_X(Y)=X\times H(Y)$.

Shorthand $Y^n\cong Y^{\{0,\dots,n-1\}}\cong\prod_{0\le i\le n-1}Y$ is used
for lists of elements of $Y$ of length $n$.
Similarly, we use $Y^{<n}\cong \coprod_{0\le i < n}Y^i$ for
lists of elements of $Y$ of length less than $n$. 
Finally, we use $Y^*=\coprod_{n\in\omega}Y^{<n}$ for finite lists.
We record the $n$-fold compositions and initial algebras
of $H$ and $H_X$ in the table below.

\begin{center}
\setlength{\tabcolsep}{12pt}
\begin{tabular}{ccc}
    $F$ & $F^n(Y)$ & Initial algebra
    \\
    \hline
    \\[-1em]
    $H$ & $A^{<n}B+A^nY$ & $A^*B$
    \\
    $H_X$ & $(XA)^{<n}XB+(XA)^nY$ & $(XA)^*XB$
\end{tabular}
\end{center}

\subsection{Distributive Laws}
\label{sect:distrib_laws}

We assume knowledge of the definition of a monad,
whose multiplication and unit, with functor part $T$,
are referred to as $\mu^T$ and $\eta^T$.
Let $P,\pown,Q$ denote the full powerset monad,
the non-empty powerset monad, and the non-empty finite powerset monad
respectively.
We have submonads $Q\rightarrowtail \pown \rightarrowtail P$.
Let $D$ denote the finite distribution monad,
mapping a set $X$ to the set of distributions over $X$ with finite support.
We will the monad morphism ${\sf supp}:T\to P$, for $T=Q,D$,
which is an inclusion when $T=Q$, and
maps a distribution to its {\it support} when $T=D$.

\begin{definition}
    Given two monads $S,T:\bb{C}\rightarrow\bb{C}$,
    a \emph{distributive law} of $T$ over $S$ 
    is a natural transformation $\delta:TS\rightarrow ST$ such
    that
    \begin{center}
    \setlength{\tabcolsep}{0.1em}
    \begin{tabular}{cccc}
    
    \tikz[overlay]{
        \node at (2.4,0) {\scalebox{0.8}{$[\mu^S]$} };
    }
    \begin{tikzcd}
        TSS \ar[r, "\delta"]
            \ar[d, "T\mu^S"]
        &[-1em] STS \ar[r, "S\delta"]
        &[-1em]  SST \ar[d, "\mu^S"]
        \\
        TS \ar[rr, "\delta"] & & ST
    \end{tikzcd}
    &
    \tikz[overlay]{
        \node at (2.4,0) {\scalebox{0.8}{$[\mu^T]$} };
    }
    \begin{tikzcd}
        TTS \ar[r, "T\delta"]
            \ar[d, "\mu^T"]
        &[-1em]  TST \ar[r, "\delta"]
        &[-1em]  STT \ar[d, "S\mu^T"]
        \\
        TS \ar[rr, "\delta"] & & ST
    \end{tikzcd}
    &
    \tikz[overlay]{
        \node at (1.9,0) {\scalebox{0.8}{$[\eta^S]$} };
    }
    \begin{tikzcd}
        &[-1em] T \ar[dl, "T\eta^S"'] \ar[dr, "\eta^S"] &[-1em] 
        \\
        TS \ar[rr, "\delta"] && ST
    \end{tikzcd}
    &
    \tikz[overlay]{
        \node at (1.9,0) {\scalebox{0.8}{$[\eta^T]$} };
    }
    \begin{tikzcd}
    &[-1em]  S \ar[dl, "\eta^T"'] \ar[dr, "S\eta^T"]
    \\
    TS \ar[rr,"\delta"] &&[-1em]  ST
    \end{tikzcd}
    \end{tabular}
    \end{center}
    commute.
    A \emph{weak distributive law} of $T$ over $S$ is a natural transformation $TS\rightarrow ST$
    such that $[\mu^S,\mu^T,\eta^S]$ hold.
    Note that any distributive law is a weak distributive law.
    A \emph{functor-monad distributive law} of a functor $F$ over a monad $S$
    is a natural transformation
    $FS\rightarrow SF$ such that $[\mu^S,\eta^S]$ (substituting $T$ for $F$)
    hold.
    In this paper, we refer to a functor-monad distributive law
    $FS\rightarrow SF$ as a distributive law
    when we are not considering any
    monad structure on $F$.
\end{definition}
We also require strength maps for a monad.
Given a monad $T:{\bf Set}\rightarrow{\bf Set}$,
the \textit{left strength map} is a functor-monad distributive law
${\sf stl}^T_{A,X}:A\times T(X)\rightarrow T(A\times X)$
between $A\times(-)$ and $T$ which interacts well
with the coherence isomorphisms for $\times$.
This map must exist and is unique in ${\bf Set}$.
There is also a unique {\it right strength map}
${\sf str}_{X,A}^T:T(X)\times A\rightarrow
T(X\times A)$, defined analogously.
A monad $T$ is \emph{commutative} when the following holds for all sets $X$ and $Y$:
\begin{eqnarray}
\label{eq:comm}    \mu^T_{X\times Y}\circ T({\sf str}_{X,Y})\circ {\sf stl}_{TX,Y} =\mu^T_{X\times Y}\circ T({\sf stl}_{X,Y})\circ {\sf str}_{X,TY}
\end{eqnarray}

\begin{example}
    \label{example-distributive-laws}
    This paper is concerned with the following distributive laws.
    \begin{enumerate}[label=(\roman*)]
        \item There is a weak distributive law (first appearing in \cite{garner_vietoris_2020})
        $\delta^{PP}:PP\rightarrow PP$:
        \[\delta^{PP}(\{U_i\}_{i\in I}):=\{\bigcup_{i\in I}V_i\mid
            \forall i \in I\,(\emptyset
            \subset V_i\subseteq U_i)\}\]
        that restricts to a law $\delta^{P\pown}:\pown P\rightarrow P\pown$.
        There is a variant $\delta^{PQ}:QP\rightarrow PQ$, which
        only takes finite subsets \cite{goy_compositionality_2021}:
        \[\delta^{PQ}(\{U_i\}_{i\in I}):=\{\bigcup_{i\in I}V_i\mid
            \forall i \in I\,(\emptyset
            \subset V_i\subseteq_\omega U_i)\}\]
        \item There is a weak distributive law $\delta^{PD}:DP\rightarrow PD$,
            distributing probability over nondeterminism \cite{goy_combining_2020}:
            \[\delta^{PD}([U_i\mapsto p_i]_{i\in I}):=
            \{\mu^D_X[\varphi_i\mapsto p_i]\mid\forall i \in I\, (\varphi_i\in D(X)
            \mt{ and }{\sf supp}\,\varphi_i\subseteq U_i) \}\]
        \item We have already noted that ${\sf str}$ is a functor-monad distributive law.
        The right strength is defined w.r.t.~the monoidal product
        $(\times,1)$ in ${\bf Set}$. 
        We can define another strength map w.r.t.~the monoidal product
        $(+,0)$ in ${\bf Set}$:
        \[ [\eta^T_{B+X}\circ{\sf inl}, T({\sf inr})]:B+T(X)\rightarrow
            T(B+X)\]
        Composing these functor-monad distributive laws
        we get a distributive law $\lambda_X:A+B(T(X))\rightarrow T(A+B(X))$,
        i.e. linear functors distribute over any ${\bf Set}$ monad.
    \end{enumerate}
\end{example}

\subsection{Composite Monads}
\label{sect:composite_monads}

Any weak distributive law induces a composite monad \cite{garner_vietoris_2020}.
To combine two ${\bf Set}$ monads $S$ and $T$ with $\delta^{ST} : TS \rightarrow ST$,
we define the associated \textit{convex closure operator} \cite{bonchi_convexity_2022}:
a natural transformation
${\sf cl}^{ST}_X:ST(X)\rightarrow ST(X)$,
defined by $S\mu^T\circ\delta^{ST}T\circ\eta^TST$.
The functor part of the composite monad is the image of the convex closure operator,
denoted by $\widetilde{ST}$.
The multiplication and unit of the composite,
relying on \cite[Lemma 2.10, p. 48]{goy_compositionality_2021},
can be given as the standard expressions
$\mu^S\mu^T\circ S\delta^{ST}$ and $\eta^S\eta^T$.
When $\delta^{ST}$ is a (full) distributive law,
$\widetilde{ST}\cong ST$.

\begin{example}
    The closure operator associated with $\delta^{PQ}$
    can be calculated:
    \[{\sf cl}_X^{PQ}(\mc{U})=\{\tts\bigcup\mc{V}\mid\emptyset\subset\mc{V}\subseteq_\omega\mc{U}\}\]
    It closes a set of finite subsets under finite, non-empty union.
    We call a set of finite subsets $\mc{U}$ \textit{convex}
    precisely when ${\sf cl}_X^{PQ}(\mc{U})=\mc{U}$,
    i.e. when $U,V\in\mc{U}\implies U\cup V\in \mc{U}$.
    The functor part $\widetilde{PQ}(X)$ is the set containing all convex sets of subsets.
    The closure operator for $\delta^{PD}$
    is given by
    \[{\sf cl}^{PD}_X(U)=\{\mu^D(\Phi)\mid \Phi\in DD(X),\, {\sf supp}\,\Phi\subseteq U\}\]
    Similarly, we call a set $U$ of distributions \textit{convex} precisely
    when ${\sf cl}^{PD}_X(U)=U$.
    $\widetilde{PD}(X)$ is the set of all convex sets of distributions.
    Both of these closure operators satisfy standard closure operator properties:
    $\mc{U}\subseteq{\sf cl}(\mc{U})$,
    $\mc{U}\subseteq\mc{U}'\Rightarrow {\sf cl}(\mc{U})\subseteq{\sf cl}(\mc{U}')$
    and ${\sf cl}(\mc{U})\subseteq {\sf cl}\circ{\sf cl}(\mc{U})$.
\end{example}

\begin{remark}
The category of algebras associated with the monad $\widetilde{PQ}$ is
a complete lattice $(X,\bigvee)$ with a meet semi-lattice (without top)
$(X,\wedge)$, such that
 $x \land \bigvee_i y_i = \bigvee_i (x\land y_i)$ holds,
 i.e. $\wedge$ distributes over $\bigvee$.
Similarly, the algebraic theory of the monad $\widetilde{PD}$
is a complete lattice $(X,\bigvee)$ and a convex algebra $(X,+_r)$
such that $+_r$ distributes over $\bigvee$.
This is discussed in \cite[Section 3.2.2]{goy_combining_2020}.
\end{remark}

\subsection{Kleisli Categories}

Given a monad $T:\bb{C}\rightarrow\bb{C}$, we can form its Kleisli category,
denoted ${\bf Kl}(T)$, which has the objects of $\bb{C}$
and homsets ${\bf Kl}(T)(X,Y):=\bb{C}(X,T(Y))$.
Kleisli composition $\odot$ is defined by composition with $\mu^T$:
$g\odot f:=\mu^T\circ T(g)\circ f$,
and has the identity $\eta^T_X$ for each object $X$.
We have an identity-on-objects functor $J_T:\bb{C}\rightarrow{\bf Kl}(T)$, defined as
$J_T(f)=\eta^T\circ f$, which is left adjoint to the forgetful functor ${\bf Kl}(T)\rightarrow\bb{C}$.
We denote $J_T(f)$ as $\widebar{f}$.
An \emph{extension} of $F:\bb{C}\rightarrow\bb{C}$ to ${\bf Kl}(T)$
is a functor $\widebar{F}:{\bf Kl}(T)\rightarrow {\bf Kl}(T)$ such
that $\widebar{F}J_T=J_TF$.
Kleisli extensions
are in one-to-one correspondence with
functor-monad distributive laws (see e.g. \cite[Chapter 5]{jacobs_coalgebraic_2008}).
Given a distributive law $\lambda:FT\rightarrow TF$,
we can define $\widebar{F}(f):=\lambda\circ F(f)$.
We often require extensions of iterated functors
in this paper,
so highlight that extension distributes over functor composition:
$\widebar{F^n}=\widebar{F}^n$.
We also have $\widebar{F}(\widebar{f})=\widebar{F(f)}$, by $\widebar{F}$ being an extension.

\subsection{Coalgebraic Traces}
\label{coalg-traces}
A \emph{finite trace} is an element of the initial $H$-algebra $A^*B$, i.e.~a finite sequence of observations $a_1\dots a_nb$ with $a_i \in A$ and $b\in B$.

To state the main trace semantics theorem from \cite{hasuo_generic_2007},
we require some domain-theoretic concepts.
An \emph{$\omega$-complete partial order} ($\omega$-cpo),
is a partial order $(X,\le)$ such that every $\omega$-chain $x_0\le x_1\le \cdots$ has a join
$\bigvee\{x_n\}_{n\in\omega}$.
A category is $\omega${\bf-cpo}-enriched when homsets can be equipped with $\omega$-cpo structure,
and composition preserves joins of $\omega$-chains separately in both arguments.
In an $\omega${\bf-cpo}-enriched category, a morphism $i:X\rightarrow Y$
is an \textit{embedding} precisely when there is a (necessarily unique) \textit{projection}
$p:Y\rightarrow X$ such that $p\circ i=\mt{id}_X$ and $i\circ p\le \mt{id}_Y$
(where $\le$ is the order on morphisms $X\rightarrow X$).
We also need the concept of a \textit{zero map} between objects $X$ to $Y$,
which exist in categories where there is a final and initial object $0$.
The zero map is the composite $X\xrightarrow{!}0\xrightarrow{!}Y$.
Finally, a functor $F:\bb{C}\rightarrow \bb{D}$ between
two $\omega${\bf-cpo}-enriched categories is \textit{locally monotone} when $f\le g$ implies $F(f)\le F(g)$.

We require two results: the first from work in domain theory \cite{smyth_category-theoretic_1982},
and the second which applies this result to obtain finite coalgebraic trace semantics \cite{hasuo_generic_2007}.
Both of these results feature in \cite[Chapter 5.3]{Jacobs_2016},
which our formulation is based on,
but phrased in terms of the slightly more general $\omega$-cpos (following \cite{hasuo_generic_2007})
rather than dcpos (the proofs in \cite{Jacobs_2016} only rely on joins of $\omega$-chains).

\begin{proposition}[{\normalfont \cite[Proposition 5.3.3]{Jacobs_2016}}]
    \label{prop:(co)limit-coincide}
    Let $\bb{C}$ be an $\omega${\bf-cpo}-enriched category
    with some $\omega$-chain $X_0\xrightarrow{i_0}X_1\xrightarrow{i_1}X_2\xrightarrow{i_2}\cdots$
    of embeddings,
    with colimit $(A,\kappa_n:X_n\rightarrow A)$ in $\bb{C}$.
    Each coprojection $\kappa_n$ is an embedding,
    call the associated projection $\pi_n:A\rightarrow X_n$.
    $(A,\pi_n)$ is a limit of the $\omega$-cochain of the projections
    $X_0\xleftarrow{p_0}X_1\xleftarrow{p_1}X_2\xleftarrow{p_2}\cdots$.
    The mediating morphism of a cone $(f_n:Y\rightarrow X_n)_{n\in\omega}$
    can be calculated with the join of an $\omega$-chain of morphisms $Y\rightarrow A$:
    \[\bigvee_{n\in\omega}(Y\xrightarrow{f_n}X_n\xrightarrow{\kappa_n}A)\]
\end{proposition}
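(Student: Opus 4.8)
The plan is to exploit the interplay between the colimit universal property and the $\omega${\bf-cpo} enrichment, which are the only two sources of structure available. Write $p_n:X_{n+1}\to X_n$ for the projection associated with the embedding $i_n$, and for $m\le n$ write $i_{m,n}:X_m\to X_n$ and $p_{n,m}:X_n\to X_m$ for the evident composite embedding and projection; each $i_{m,n}$ is then an embedding with projection $p_{n,m}$.

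First I would construct the projections $\pi_n$. For fixed $n$, the family $c^{(n)}_m:X_m\to X_n$ given by $i_{m,n}$ when $m\le n$ and by $p_{m,n}$ when $m\ge n$ forms a cocone over the chain $(X_m,i_m)$; the verification $c^{(n)}_{m+1}\circ i_m=c^{(n)}_m$ splits into the cases $m<n$, $m=n-1$, and $m\ge n$, each settled by a (co)projection identity. The colimit then yields a unique $\pi_n:A\to X_n$ with $\pi_n\circ\kappa_m=c^{(n)}_m$; in particular $\pi_n\circ\kappa_n=\mt{id}_{X_n}$, the first embedding condition. A second appeal to uniqueness against the same universal property shows $\pi_n=p_n\circ\pi_{n+1}$, i.e.~$(A,\pi_n)$ is already a cone over the cochain of projections, disposing of half of the limit claim.

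The crux is the identity $\bigvee_n\kappa_n\circ\pi_n=\mt{id}_A$. Using $\kappa_n=\kappa_{n+1}\circ i_n$, the relation $\pi_n=p_n\circ\pi_{n+1}$, the embedding inequality $i_n\circ p_n\le\mt{id}$, and monotonicity of composition, one gets $\kappa_n\circ\pi_n\le\kappa_{n+1}\circ\pi_{n+1}$, so the family is a genuine $\omega$-chain and its join $e$ exists. Since composition preserves joins of $\omega$-chains separately in each argument, $e\circ\kappa_m=\bigvee_n\kappa_n\circ(\pi_n\circ\kappa_m)=\bigvee_n\kappa_n\circ c^{(n)}_m$; for $n\ge m$ each term equals $\kappa_n\circ i_{m,n}=\kappa_m$, so the increasing chain is eventually constant at $\kappa_m$ and its join is $\kappa_m$. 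Thus $e$ and $\mt{id}_A$ agree after precomposition with every $\kappa_m$, whence $e=\mt{id}_A$ by the uniqueness clause of the colimit property. In particular $\kappa_n\circ\pi_n\le\mt{id}_A$, completing the proof that each $\kappa_n$ is an embedding with projection $\pi_n$.

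Finally, for the limit property and the displayed formula, given a cone $(f_n:Y\to X_n)$ with $p_n\circ f_{n+1}=f_n$, the same embedding inequality gives $\kappa_n\circ f_n\le\kappa_{n+1}\circ f_{n+1}$, so $f:=\bigvee_n\kappa_n\circ f_n$ is well defined. Computing $\pi_m\circ f=\bigvee_n c^{(m)}_n\circ f_n$ and observing that each term is $\le f_m$, with equality for $n\ge m$ (using $i_{n,m}\circ p_{m,n}\le\mt{id}$ below level $m$ and the cone equations above it), yields $\pi_m\circ f=f_m$, so $f$ mediates. Uniqueness follows by rewriting any mediating $g$ as $g=(\bigvee_n\kappa_n\circ\pi_n)\circ g=\bigvee_n\kappa_n\circ(\pi_n\circ g)=\bigvee_n\kappa_n\circ f_n=f$. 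I expect the identity $\bigvee_n\kappa_n\circ\pi_n=\mt{id}_A$ to be the main obstacle: it is the one point where the colimit universal property and the $\omega$-continuity of composition must be combined, and every remaining ingredient (the embedding conditions, the cone equation, the mediating formula, and its uniqueness) is bookkeeping resting on it.
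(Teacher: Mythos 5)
Your proof is correct and follows essentially the same route as the standard argument in the cited reference (the paper itself only cites \cite[Proposition~5.3.3]{Jacobs_2016} without reproving it): construct $\pi_n$ from the mixed cocone of embeddings and projections, establish $\bigvee_n \kappa_n\circ\pi_n=\mathrm{id}_A$ by testing against the colimit coprojections, and derive the limit property and the join formula from that identity. No gaps.
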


The following theorem, from \cite{hasuo_generic_2007},
gives a categorical account of finite traces for systems modelled as coalgebras of certain functors on $\bf Set$.

\begin{theorem}[{\normalfont \cite[Proposition 5.3.4]{Jacobs_2016}}]
    \label{theorem:traces}
    Given a monad $T:{\bf Set}\rightarrow{\bf Set}$ and a functor $F:{\bf Set}\rightarrow{\bf Set}$,
    if
    \begin{enumerate}
        \item ${\bf Kl}(T)$ is $\omega${\bf-cpo}-enriched
        \item $T0 \to 1$ is an isomorphism
        \item Zero maps are bottom elements in the Kleisli homsets
        \item We have a distributive law $\lambda:FT\rightarrow TF$
        \item The extension $\widebar{F}:{\bf Kl}(T)\rightarrow{\bf Kl}(T)$ is locally monotone
        \item The initial $F$-algebra $F(I_F)\xrightarrow[\sim]{\alpha}I_F$ is the colimit of $0\xrightarrow{!} F(0)\xrightarrow{F!} F^2(0)\rightarrow\cdots$
    \end{enumerate}
    Then $J$ takes the
    initial $F$-algebra to a final $\widebar{F}$-coalgebra, i.e.
    $J(\alpha^{-1}):I_F\rightarrow\widebar{F}(I_F)$ is final in ${\bf Coalg}_{\bf Kl{(T)}}(\widebar{F})$.
\end{theorem}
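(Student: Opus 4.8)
The plan is to combine Proposition~\ref{prop:(co)limit-coincide} with the finite-trace machinery in the obvious way: the initial $F$-algebra $\alpha:F(I_F)\to I_F$ is given (hypothesis~6) as the colimit of the initial sequence $0\to F(0)\to F^2(0)\to\cdots$ in $\mathbf{Set}$, and I want to transport this colimit across $J:\mathbf{Set}\to\mathbf{Kl}(T)$ into a \emph{limit} in $\mathbf{Kl}(T)$, then recognise that limit as a final $\widebar F$-coalgebra. Let me sketch which diagram I'm tracking.

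**First I would** apply $J$ to the initial sequence to obtain an $\omega$-chain $J(0)\xrightarrow{J!}J(F0)\xrightarrow{JF!}\cdots$ in $\mathbf{Kl}(T)$, and argue that each map $\widebar{F^n}(\,\overline{!}\,):\widebar{F^n}(0)\to\widebar{F^{n+1}}(0)$ is an \emph{embedding} in the sense of the enrichment (hypotheses 1--3). The projection is the zero-map $!$ going the other way: hypothesis~2 ($T0\cong 1$) makes $0$ a zero object in $\mathbf{Kl}(T)$, so zero maps exist, and hypothesis~3 says these are the bottom elements of the homsets. Concretely $p\circ i=\mathrm{id}$ should hold because the composite $\widebar{F^n}(0)\to\widebar{F^{n+1}}(0)\to\widebar{F^n}(0)$ factors through the initial-object structure and $\widebar F$ preserves it, while $i\circ p\le\mathrm{id}$ follows from $i\circ p$ being a zero map (it routes through $0$) together with hypothesis~3. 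The left adjoint $J$ preserves the $\mathbf{Set}$-colimit, so $(I_F,J(\kappa_n))$ is a colimit of this chain in $\mathbf{Kl}(T)$. Now Proposition~\ref{prop:(co)limit-coincide} applies verbatim: the coprojections $J(\kappa_n)$ are embeddings, and $(I_F,\pi_n)$ with $\pi_n$ the associated projections is a \emph{limit} of the cochain of projections.

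**Next**, I would identify this limit with a final coalgebra. Writing the final sequence of $\widebar F$ as $0\xleftarrow{!}\widebar F(0)\xleftarrow{\widebar F!}\widebar{F^2}(0)\xleftarrow{}\cdots$ (using $\widebar{F^n}=\widebar F^n$ from the extension property), the cochain of projections above \emph{is} this final sequence, so $I_F$ is its limit. The standard Ad\'amek--Koubek/Worrell argument then yields finality: because $\widebar F$ is locally monotone (hypothesis~5) and hence $\omega$-continuous enough to preserve this particular limit — more precisely, because the limit is a fixed point $J(\alpha^{-1}):I_F\xrightarrow{\sim}\widebar F(I_F)$ obtained from $\alpha$ being an isomorphism in $\mathbf{Set}$ and $J$ preserving isomorphisms — any $\widebar F$-coalgebra $c:X\to\widebar F(X)$ induces a cone over the final sequence by iterating $c$ and projecting, and the mediating map into the limit $I_F$ is the unique coalgebra morphism. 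The uniqueness-of-mediating-map clause, together with the explicit join formula $\bigvee_n J(\kappa_n)\odot f_n$ from Proposition~\ref{prop:(co)limit-coincide}, is exactly what packages the iterate-and-take-joins construction into a single final map.

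**The hard part** is the step verifying that the chain maps $\widebar{F^n}(\,\overline{!}\,)$ are genuinely embeddings and that the colimit survives the passage through $J$ as the \emph{right} kind of object — i.e.\ that the abstract colimit/limit coincidence of Proposition~\ref{prop:(co)limit-coincide} really delivers a limit over precisely the final $\widebar F$-sequence rather than some reindexed variant. This rests delicately on hypotheses 2 and 3 working together so that the backward maps are forced to be the zero-projections, and on $\widebar F$ being locally monotone so that it preserves embeddings (sending the embedding $i$ with projection $p$ to $\widebar F(i)$ with projection $\widebar F(p)$, using $\widebar F(p)\circ\widebar F(i)=\widebar F(\mathrm{id})=\mathrm{id}$ and $\widebar F(i)\circ\widebar F(p)=\widebar F(i\circ p)\le\widebar F(\mathrm{id})=\mathrm{id}$). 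Everything else is bookkeeping that Proposition~\ref{prop:(co)limit-coincide} and the extension identities have already set up; I do not expect genuine obstacles beyond carefully checking these embedding/projection equations.
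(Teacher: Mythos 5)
This theorem is imported verbatim from \cite[Proposition 5.3.4]{Jacobs_2016} (originally \cite{hasuo_generic_2007}) and the paper gives no proof of its own, but your reconstruction follows the same route as the cited proof and as the paper's outline: push the initial sequence along $J$ (a left adjoint, so the colimit survives), check that the chain maps are embeddings whose projections are the reversed zero maps (conditions 1--3 plus local monotonicity to propagate the embedding--projection equations through $\widebar{F^n}$), apply Proposition~\ref{prop:(co)limit-coincide} to turn the colimit into a limit of the final sequence, and conclude finality by the dual Ad\'amek argument. The one wobble is your aside that local monotonicity makes $\widebar{F}$ ``$\omega$-continuous enough'': local monotonicity does not imply local continuity, and the actual reason $\widebar{F}$ carries the limit of the final sequence to the limit of the shifted sequence is that it preserves the embeddings (which you do verify) \emph{and} preserves the colimiting cocone, since $\widebar{F}J=JF$, $J$ preserves colimits, and $F$ preserves the ${\bf Set}$-colimit of the initial sequence by hypothesis~6 --- after which Proposition~\ref{prop:(co)limit-coincide} applies again to the shifted chain and the usual mediating-map comparison gives existence and uniqueness of the coalgebra morphism.
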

In the above theorem, the monad $T$ describes the type of computation (e.g.~non-deterministic),
whereas the functor $F$ specifies what can be observed along single computation paths.
The final $\widebar{F}$-coalgebra gives us a unique $\widebar{F}$-coalgebra
morphism
from any $\widebar{F}$-coalgebra $c:X\to TF(X)$ into $J(\alpha^{-1}):I_F\to TF(I_F)$.
This coalgebra morphism $(X,c)\to (I_F,J(\alpha^{-1}))$
is usually referred to as the \textit{trace map},
because it recovers the definition of finite traces
in many different examples \cite{hasuo_generic_2007}.

\section{Traces and Executions}
\label{sect:traces_and_execs}

We now show that a two-player game modelled as a coalgebra
$X\to \widetilde{PT}H(X)$, is susceptible to
the coalgebraic framework for finite trace semantics given
by Hasuo et al. \cite{hasuo_generic_2007}.
We have two different choices for the monad $T$.
We either take $T=Q$,
which gives us something similar
to the two-player games discussed in Section~\ref{sect:two_player_games},
or $T=D$, which give the MDPs
from Section~\ref{sect:mdps}.

Thus, we must show that
the requirements of Theorem~\ref{theorem:traces}
are met by the monads $\widetilde{PQ}$ and $\widetilde{PD}$,
and a linear functor $H$.
To motivate our choice of monads $P$ and $Q$,
and our restriction to linear functors, we also
examine what doesn't work.
Two of these failures exist, in some form, as mistakes in the literature.
These are outlined in Counterexamples \ref{counter:left_strict} and \ref{counter:commutativity}.
Once we have established
the existence of trace maps, we introduce \emph{execution maps} (as a special type of trace maps) and show that every trace map factors through an execution map.

\subsection{Kleisli Enrichment and Zero Maps}

This subsection examines the requirements for Theorem~\ref{theorem:traces}
for the monad $\widetilde{PT}$,
namely whether $(i)$ ${\bf Kl}(\widetilde{PT})$ is $\omega$-{\bf cpo} enriched,
$(ii)$ $\widetilde{PT}(0)\cong 1$ and $(iii)$ the zero map
$X\rightarrow 0\rightarrow Y$ in ${\bf Kl}(\widetilde{PT})$ is the least element
in ${\bf Kl}(\widetilde{PT})(X,Y)$.
When choosing a powerset monad $T$ to model the environment non-deterministically,
we explain why the finiteness and non-emptiness restrictions are required.
To ensure $(ii)$, we must choose a non-empty variant.
Counterexample~\ref{counter:omega-cpo} exhibits that ${\bf Kl}(\widetilde{P\pown})$
is in fact not $\omega$-{\bf cpo} enriched (at least not in the natural way), meanwhile
Proposition~\ref{prop:kl_cpo_enriched} shows that ${\bf Kl}(\widetilde{PQ})$ is.
Hence, we are left with the choice of finite non-empty powerset $Q$ to model
a non-deterministic environment.
For a probabilistic environment, the standard choice of the finite distribution
monad $D$ works out of the box: it already contains finite collections,
and distributions summing to one mean they can't be ``empty'' (have no non-zero elements).

To show ${\bf Kl}(\widetilde{PT})$ is $\omega${\bf -cpo} enriched,
we follow the general approach
in \cite[Chapter 5]{bonchi_convexity_2022}.
Let $T=P,Q$ or $\pown$ (we briefly consider $\pown$ only to rule it out after
the next counterexample).
Firstly, $\widetilde{PT}(Y)$ is a complete lattice with
an order given by standard subset inclusion and joins given by
$\bigvee:={\sf cl}^{PT}\circ\bigcup$.
This order is lifted pointwise to an order $\sqsubseteq$
on morphisms $X\rightarrow \widetilde{PT}(Y)$.
What is left is to show that Kleisli composition preserves
joins in each argument:
\[ 
    \bigvee_{n\in\omega}g_n\odot f=\bigvee_{n\in\omega}(g_n\odot f)
    \hspace{3em}
    g\odot\bigvee_{n\in\omega}f_n=\bigvee_{n\in\omega}(g\odot f_n)
\]
We see that the left condition breaks in the case of $\widetilde{P\pown}$.

\begin{counterexample}
    \label{counter:omega-cpo}
    Take $f:\{x\}\rightarrow\widetilde{P\pown}(\omega)$ and a $\omega$-chain $\{g_i:\omega\rightarrow\widetilde{P\pown}(\{w,z\})\}_{i\in\omega}$.
    Define $f(x):=\{\omega\}$ and
    \[g_i(n)=\begin{cases}
        \{\{a\}\} & \mt{if }n\le i
        \\
        \{\{a\},\{b\},\{a,b\}\} & \mt{otherwise}
    \end{cases}\]
    We find $\{b\}\in \bigvee\{g_i\}_{i\in\omega}\odot f(x)$ because $\{b\}\in g_i(i)$ for each $i\in\omega$.
    However $\{b\}\not\in\bigvee\{g_i\odot f\}_{i\in\omega}(x)$
    because there is no $i\in\omega$ such that $\{b\}\in g_i(n)$ for all $n\in\omega$.
    Hence ${\bf Kl}(\widetilde{P\pown})$ is not {\bf $\omega$-cpo}-enriched.
    The same reasoning will apply to the monotone neighbourhood monad.
\end{counterexample}

Arguably the most difficult property to establish
is that ${\bf Kl}(\widetilde{PT})$ is $\omega${\bf-cpo}-enriched.
Note that enrichment in the
category of directed complete partial orders
has been proven for the monad considered in \cite{jacobs_coalgebraic_2008},
in \cite{brengos} and \cite{goncharov}.

\begin{proposition}
    \label{prop:kl_cpo_enriched}
    ${\bf Kl}(\widetilde{PT})$ is $\omega${\bf-cpo}-enriched, for $T=Q,D$.
\end{proposition}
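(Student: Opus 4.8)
The plan is to establish the two required properties of an $\omega$\textbf{-cpo}-enrichment: first that each homset carries $\omega$-cpo structure, and second that Kleisli composition $\odot$ preserves joins of $\omega$-chains separately in each argument. As noted in the excerpt, $\widetilde{PT}(Y)$ is a complete lattice under subset inclusion with joins $\bigvee := {\sf cl}^{PT}\circ\bigcup$, and this lifts pointwise to an order $\sqsubseteq$ on each homset $X\to\widetilde{PT}(Y)$; pointwise joins of $\omega$-chains exist because joins exist in the fibre. So the genuine content is preservation of $\omega$-chain joins by composition, in both arguments, and this is what I would spend the proof on.

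For preservation, recall $g\odot f = \mu^{\widetilde{PT}}\circ\widetilde{PT}(g)\circ f$, with $\mu^{\widetilde{PT}}$ built from $\mu^P\mu^T\circ P\delta^{PT}$ restricted to the convex image. I would handle the two arguments separately. Right-argument preservation, $g\odot\bigvee_n f_n=\bigvee_n(g\odot f_n)$, should be the easier direction: $f\mapsto g\odot f$ is post-composition in the Kleisli sense, and since joins are computed by $\mathsf{cl}\circ\bigcup$ and $\mu^{\widetilde{PT}}\circ\widetilde{PT}(g)$ distributes over unions (and respects closure), the calculation reduces to checking that applying $\widetilde{PT}(g)$ and then $\mu^{\widetilde{PT}}$ commutes with directed unions of collections. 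Left-argument preservation, $\bigl(\bigvee_n g_n\bigr)\odot f=\bigvee_n(g_n\odot f)$, is where the real difficulty lies, and it is exactly the condition that fails for $\widetilde{P\pown}$ in Counterexample~\ref{counter:omega-cpo}. The crux is that an element of $\bigl(\bigvee_n g_n\bigr)\odot f(x)$ is obtained by choosing, for each environment-collection $U\in f(x)$, a witness from $\bigvee_n g_n$, i.e. from the closure of $\bigcup_n g_n$; one must show such a global choice can always be realised at some finite stage $N$ so that it already appears in $g_N\odot f(x)$, hence in $\bigvee_n(g_n\odot f)(x)$. The finiteness of $T=Q$ (and finite support for $T=D$) is precisely what licenses this: because the witnessing subset/distribution ranges over a finite set, the $\omega$-chain stabilises the required memberships at a finite index, defeating the diagonal counterexample that finitary branching would otherwise permit.

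Concretely, I would argue the nontrivial inclusion $\bigl(\bigvee_n g_n\bigr)\odot f\sqsubseteq\bigvee_n(g_n\odot f)$ by unfolding the definition of $\delta^{PT}$ and the closure operator from the earlier examples. Take an element of the left side; it arises from a selection of $T$-values $\varphi_U\le$ something in $\bigcup_n g_n(\,\cdot\,)$, one for each branch determined by $f(x)$, glued by $\mu^T$ and then closed. Since each branch's chosen value lives in some $g_{n_U}$ and there are only finitely many distinct branch-values to account for (by image-finiteness of the environment / finite support), I take $N=\max_U n_U$; monotonicity of the chain gives all chosen values simultaneously in $g_N$, so the whole glued-and-closed element lies in $g_N\odot f(x)\subseteq\bigvee_n(g_n\odot f)(x)$. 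The reverse inclusion is immediate from monotonicity of $\odot$ in the left argument, which follows from local monotonicity of the relevant maps. I would run the $T=Q$ and $T=D$ cases in parallel, pointing out that the only structural facts used are: $\mathsf{cl}^{PT}$ is a monotone, union-respecting closure operator (established in the Example), $\mu^{\widetilde{PT}}$ distributes over unions, and the $T$-witnesses range over a finite carrier. The main obstacle, and the step I would write most carefully, is this finite-stabilisation argument for the left-argument join, since it is the single place where the counterexample for $\pown$ is avoided and where the distinction between $Q$ and $\pown$ (finite versus arbitrary non-empty subsets) becomes decisive.
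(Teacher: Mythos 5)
Your proposal matches the paper's proof in both structure and substance: the appendix proof likewise reduces everything to bilinearity of $\odot$, treats the left-argument join as the nontrivial case, and discharges it by exactly your finite-stabilisation argument (intersecting the finitely many finite witnesses with the increasing chain $g_i(y)$ and taking a maximum stage $k$, with finite support playing the same role for $T=D$), while the reverse inclusion follows from convexity and monotonicity of ${\sf cl}$. The approach is essentially identical, so no further comparison is needed.
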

\begin{proof} 
    \hyperref[appendix:kl_cpo_enriched]{Appendix}.
\end{proof}


Bonchi and Santamaria \cite[Theorem 5.14]{bonchi_convexity_2022}
prove that composition in the Kleisli category ${\bf Kl}(\widetilde{PT})$ is \textit{left strict}: meaning we have that
$\bot\odot f=\bot$.
We find this to be a mistake, because left strictness is a sufficient
condition for the
isomorphism $\widetilde{PT}(0)\cong 1$ (see \cite[Lemma 3.5]{hasuo_generic_2007}),
which also does not hold for the choice of $T$ in \cite{bonchi_convexity_2022}.
We construct a direct counterexample for $T=P_f$,
the finite powerset monad, below.

\begin{counterexample}
    \label{counter:left_strict}
    Take $f:\{x\}\rightarrow \widetilde{PP_f}(Y)$
    as $x\mapsto \{\emptyset\}$.
    We have $\bot\odot f(x)=\{\emptyset\}$,
    which means this composite is not equal to $\bot$
    (we should have $x\mapsto\emptyset$), further
    details are in the \hyperref[appendix:left_strict_counter_info]{Appendix}.
    \[\bot\odot f(x)=\mu^P\mu^{P_f}\circ P\delta^{P_fP}\circ PP_f(\bot)(\{\emptyset\})
    =\mu^P\mu^{P_f}\circ P\delta^{P_fP}(\{\emptyset\})
    =\mu^P\mu^{P_f}(\{\{\emptyset\}\})
    =\{\emptyset\}
    \]
\end{counterexample}

The condition of left strictness says that \textit{controller deadlocks are preserved by pre-composition}.
In our counterexample, the environment deadlocks first (in the $f$ above),
and as a result, the environment also deadlocks in the composition:
$\bot\odot f(x) =\{\emptyset\}$.
This is different from the controller deadlocking: $\bot(x) =\emptyset$. 
This counterexample also exists in monad considered in \cite{bonchi_convexity_2022}
(we believe for any positive semifield:
by having $f$ point to the singleton ``null distribution'').

The requirement which is violated in Theorem~\ref{theorem:traces}
is that $\widetilde{PP_f}(0)\cong\{\emptyset,\{\emptyset\}\}\not\cong 1$.
A natural fix is to prevent one player from deadlocking;
we choose to disallow environment deadlocks, as this ensures bottom elements remain intact.

\begin{proposition}
    Zero maps in ${\bf Kl}(\widetilde{PT})$ form bottom elements,
    for $T=Q,D$.
\end{proposition}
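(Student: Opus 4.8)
The plan is to show that for $T = Q, D$, the zero map $\bot \colon X \to 0 \to Y$ in ${\bf Kl}(\widetilde{PT})$ is the least element of the Kleisli homset ${\bf Kl}(\widetilde{PT})(X,Y)$ under the pointwise order $\sqsubseteq$. Since the order is lifted pointwise from the order on $\widetilde{PT}(Y)$ (given by subset inclusion), it suffices to work at a single point: I must identify the value $\bot(x) \in \widetilde{PT}(Y)$ of the zero map and then show that it is contained in (i.e.\ a subset of) $g(x)$ for every other morphism $g$ and every $x \in X$. So first I would compute the zero map concretely. By Proposition~\ref{prop:kl_cpo_enriched} we already know ${\bf Kl}(\widetilde{PT})$ is $\omega${\bf-cpo}-enriched, so the order and joins are as described (with $\bigvee = {\sf cl}^{PT} \circ \bigcup$); what remains is purely to locate the bottom element.

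The crucial first step is to establish what $\widetilde{PT}(0)$ is, since the zero map factors through $0$. The value $!\colon 0 \to \widetilde{PT}(Y)$ out of the initial object is the empty map, and composing with $X \xrightarrow{!} 0$ sends every $x$ to the same element of $\widetilde{PT}(Y)$, namely the image under the relevant structure map of the unique element of $\widetilde{PT}(0)$. For $T = Q$, the environment monad forbids the empty set, so $\widetilde{PQ}(0)$ consists of convex sets of \emph{non-empty} subsets of $0 = \emptyset$; since there are no non-empty subsets of $\emptyset$, the only such collection is $\emptyset$ itself, giving $\widetilde{PQ}(0) \cong 1$ (this is precisely condition $(ii)$, which the non-emptiness restriction was chosen to secure, in contrast to Counterexample~\ref{counter:left_strict}). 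Likewise for $T = D$, a distribution over $\emptyset$ cannot exist, so $\widetilde{PD}(0)$ is the set of convex sets of distributions over $\emptyset$, which again has only the empty collection, so $\widetilde{PD}(0) \cong 1$. I would then trace through how this single element is sent into $\widetilde{PT}(Y)$ by the map $\widetilde{PT}(!)$, and conclude $\bot(x) = \emptyset$, the empty convex collection, for each $x$.

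Once $\bot(x) = \emptyset$ is established, the comparison is immediate: the empty collection is a subset of every element of $\widetilde{PT}(Y)$, so $\emptyset \subseteq g(x)$ for all $g$ and all $x$, whence $\bot \sqsubseteq g$ pointwise. The main obstacle, and the only place requiring genuine care, is the very first computation: verifying that $\widetilde{PT}(0) \cong 1$ and that the induced zero value is the empty collection $\emptyset$ rather than, say, $\{\emptyset\}$. This is exactly the distinction highlighted in Counterexample~\ref{counter:left_strict}, where the failing monad $\widetilde{PP_f}$ has $\widetilde{PP_f}(0) \cong \{\emptyset, \{\emptyset\}\} \not\cong 1$; the presence of the spurious element $\{\emptyset\}$ is precisely what causes the bottom element to be non-trivial there. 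Thus the proof must make explicit that the non-emptiness of $Q$ (and the normalisation of $D$) rules out this extra element, and I would present the concrete evaluation through the convex-closure description of $\widetilde{PT}$ to confirm the empty collection is the unique, and hence least, value. The remaining monotonicity argument is then routine and I would not grind through it in detail.
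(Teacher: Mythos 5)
Your proposal is correct and follows essentially the same route as the paper: identify $\widetilde{PT}(0)=\{\emptyset\}$ (using non-emptiness of $Q$, resp.\ normalisation of $D$), trace the unique element through $\widetilde{PT}(!)$ and $\mu^{\widetilde{PT}}$ to get $\bot(x)=\emptyset$, and conclude by subset inclusion. The paper's proof is exactly this computation, noting that the multiplication and the distributive law preserve the empty set.
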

\begin{proof}
    We have $\widetilde{PT}(0)=\{\emptyset\}$ for $T=Q,D$, hence:
    \begin{center}
    \begin{tikzcd}
        X \ar[r, "!"]
        & \widetilde{PT}(0) \ar[r, "\widetilde{PT}(!_Y)"]
        &[2em] \widetilde{PT}\widetilde{PT}(Y) \ar[r, "\mu^{\widetilde{PT}}"]
        & \widetilde{PT}(Y)
        \\[-1.8em]
        x \ar[r, mapsto]
        & \emptyset \ar[r, mapsto]
        & \emptyset \ar[r, mapsto]
        & \emptyset
    \end{tikzcd}
    \end{center}
    because the multiplication $\mu^{\widetilde{PT}}$ (and the distributive law $\delta^{PT}$)
    preserves the empty set.
\end{proof}

\subsection{Commutativity and a Functor-Monad Distributive Law}

Now we have an appropriate $\omega${\bf-cpo} structure on the Kleisli homsets,
we turn our attention to the behaviour functor $F:{\bf Set}\rightarrow{\bf Set}$,
and meeting the requirements $(iv),(v)$ and $(vi)$ of
Theorem~\ref{theorem:traces}.
For $(iv)$, we require a functor-monad distributive law
$F\widetilde{PT}\rightarrow \widetilde{PT}F$.
It is known that any polynomial functor distributes over a
commutative monad \cite[Lemma 2.4]{hasuo_generic_2007},
and \cite[Lemma 5.2]{jacobs_coalgebraic_2008} claims their
similar monad is commutative.
Unfortunately, this appears to be a mistake,
as the following counterexample demonstrates.
Note that this was also noticed recently
(and independently) in \cite{staton}.

\begin{counterexample}
    \label{counter:commutativity}
    Condition (\ref{eq:comm}), required for the commutativity of $\widetilde{PQ}$, fails
    for the sets $\{\{x_1\},\{x_2\},\{x_1,x_2\}\}$ and $\{\{y_1,y_2\}\}$.
    If we think of these sets as one-step interactions in games,
    composing them in different orders
    give different sets that can be forced.
    This counterexample also exists for $\widetilde{PD}$,
    the monotone neighbourhood monad, and the semifield monads considered in \cite{jacobs_coalgebraic_2008,bonchi_convexity_2022}.
    Details are in the \hyperref[appendix:commutativity_counter_info]{Appendix}.
\end{counterexample}

The underlying reason for this failure
is that $\delta^{PQ}$ is not a \textit{weak distributive law of commutative monads}
in the sense of \cite{jacobs_semantics_1994}.
To solve this issue, we restrict $F$ to be a linear functor $H$,
and obtain the required functor-monad distributive law
$\lambda:H\widetilde{PT}\rightarrow \widetilde{PT}H$
automatically (recall Example \ref{example-distributive-laws} (ii)).
Point $(iv)$ automatically holds for a linear functor (see Section \ref{sect:distrib_laws}),
and so does point $(vi)$.
Finally, $(v)$ is easily proven.

\begin{proposition}
    \label{prop:locally_monotone}
    The extension $\widebar{H}:{\bf Kl}(\widetilde{PT})
        \rightarrow{\bf Kl}(\widetilde{PT})$ is
    locally monotone.
\end{proposition}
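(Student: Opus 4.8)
The plan is to unfold both the extension $\widebar H$ and the pointwise order $\sqsubseteq$, and then check monotonicity summand-by-summand using the linear shape $H(Y)=B+A\times Y$. Recall that for $f:X\to\widetilde{PT}(Y)$ we have $\widebar H(f)=\lambda_Y\circ H(f)$, that the order on $\widetilde{PT}(Y)$ is subset inclusion, and that $f\sqsubseteq g$ means $f(x)\subseteq g(x)$ for every $x$. Since the order on Kleisli morphisms is defined pointwise, it suffices to fix $f\sqsubseteq g$ and show $\lambda_Y(H(f)(w))\subseteq\lambda_Y(H(g)(w))$ for every $w\in H(X)=B+A\times X$.

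First I would split along the coproduct. By Example~\ref{example-distributive-laws}(iii), the distributive law $\lambda$ for a linear functor is assembled from the coproduct injections and the strength maps: on the $B$-summand it acts as $b\mapsto\eta^{\widetilde{PT}}({\sf inl}(b))$, and on the $A\times\widetilde{PT}(X)$-summand as $(a,\mc U)\mapsto\widetilde{PT}({\sf inr})\bigl({\sf stl}^{\widetilde{PT}}_{A,X}(a,\mc U)\bigr)$. On the $B$-summand $\widebar H(f)$ does not depend on $f$ at all, so $\widebar H(f)(b)=\widebar H(g)(b)$ and inclusion is immediate. The only content is therefore on the $A\times X$-summand, where $H(f)(a,x)=(a,f(x))$, so the claim reduces to
\[
    \widetilde{PT}({\sf inr})\bigl({\sf stl}^{\widetilde{PT}}_{A,X}(a,f(x))\bigr)
    \subseteq
    \widetilde{PT}({\sf inr})\bigl({\sf stl}^{\widetilde{PT}}_{A,X}(a,g(x))\bigr),
\]
under the hypothesis $f(x)\subseteq g(x)$.

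This is now purely a question of monotonicity with respect to subset inclusion of two maps: the functorial action $\widetilde{PT}({\sf inr})$ and the left strength ${\sf stl}^{\widetilde{PT}}_{A,X}(a,-)$ in its second argument. Both follow from the explicit descriptions of the composite monad: $\widetilde{PT}(h)$ is the direct-image-style operation that re-closes under ${\sf cl}^{PT}$, which visibly preserves $\subseteq$; and ${\sf stl}^{\widetilde{PT}}_{A,X}(a,-)$ tags each collection in its argument with $a$ and re-closes, which likewise preserves $\subseteq$. Since ${\sf cl}^{PT}$ is itself monotone (recorded alongside its definition) and $\bigcup$, supports, and direct images are all monotone, every constituent of $\lambda$ preserves inclusion; chaining the two monotone maps above yields the required inclusion, and hence $\widebar H(f)\sqsubseteq\widebar H(g)$.

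The hard part will be the bookkeeping around the strength of the composite monad $\widetilde{PT}$: because $\widetilde{PT}$ is built from a weak distributive law, ${\sf stl}^{\widetilde{PT}}$ is not primitive but is induced from the strengths of $P$ and of $T$ together with $\delta^{PT}$ and the closure ${\sf cl}^{PT}$. One must either write it out concretely for $T=Q$ and $T=D$ (pairing $a$ into each finite subset, respectively each distribution, of a convex set) and read off monotonicity directly, or argue abstractly that each piece composing ${\sf stl}^{\widetilde{PT}}$ is monotone. Either way the verification is routine once monotonicity of ${\sf cl}^{PT}$ is in hand, as there are no order-reversing ingredients anywhere in $\lambda$.
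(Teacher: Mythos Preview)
Your proposal is correct and follows the same approach as the paper: unfold $\widebar H(f)=\lambda\circ H(f)$ and verify monotonicity pointwise. The paper's proof simply states this setup and declares it ``easily verified''; you have carried out that verification explicitly by splitting along the coproduct and checking monotonicity of the strength and functorial action, which is exactly the routine check the paper leaves to the reader.
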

\begin{proof}
    Given some $f,g:X\rightarrow\widetilde{PT}(Y)$ such that
    $f\sqsubseteq g$, we must establish $\lambda\circ Hf\sqsubseteq \lambda\circ Hg$,
    where $\lambda:H\widetilde{PT}\rightarrow \widetilde{PT}H$ is the
    functor-monad distributive law, this is easily verified.
\end{proof}

\subsection{Trace Maps}

\begin{wrapfigure}{r}{0cm}
\hspace{-2em}
\begin{tikzcd}
    X
    \ar[r, "{\sf tr}_c", dashed]
    \ar[d, "c"]
    &[2em]
    A^*B
    \\[-0.5em]
    \widebar{H}(X)
    \ar[r, "\widebar{H}({\sf tr}_c)", dashed]
    &
    \widebar{H}(A^*B)
    \ar[u, "\rotatebox{90}{$\sim$}"']
\end{tikzcd}
\end{wrapfigure}
Instantiating Theorem~\ref{theorem:traces} for $\widetilde{PT}H$-coalgebras gives
us a final $\widebar{H}$-coalgebra in ${\bf Kl}(\widetilde{PT})$.
Hence, given a coalgebra $c:X\rightarrow\widebar{H}(X)$,
we can form the \textit{trace map} ${\sf tr}_c:X\rightarrow A^*B$
in ${\bf Kl}(\widetilde{PT})$
as the unique $\widebar{H}$-morphism into the final $\widebar{H}$-coalgebra.
Recall (from Section~\ref{sect:linear_funcs}) that $A^*B$ is the carrier of the initial $H$-algebra.

This takes a state and maps it to a convex set
of collections of $A^*B$.
We can give this map directly using
Proposition \ref{prop:(co)limit-coincide}.
Let $\kappa_n:H^n(0)\rightarrow A^*B$ be the coprojection in ${\bf Set}$
over the initial sequence, and recall that $c_n$
is the iterated coalgebra map
$X\xrightarrow{c}\widebar{H}(X)\xrightarrow{\widebar{H}(c)}\dots\xrightarrow{\widebar{H^{n-1}}(c)} \widebar{H^n}(X)$. Then
\[ {\sf tr}_c=\bigvee_{n\in\omega}(X\xrightarrow{c_n}\widebar{H^n}(X)\xrightarrow{\widebar{H^n}(!)}\widebar{H^n}(0)\xrightarrow{\widebar{\kappa_n}}A^*B)\,. \]

\subsection{Execution Maps}

Recall that a finite trace is an element of $A^*B$. Traces describe the \emph{observable} outcomes of plays.
To capture the actual plays, we also need to incorporate information about the controller states they visit.
We call the resulting concept executions. Formally, an \emph{execution} is an element of the initial algebra $(XA)^*XB$ of the functor $H_X$ from Section~\ref{sect:linear_funcs}, i.e.~a sequence $x_0a_1x_1\dots a_nx_nb$.

\begin{wrapfigure}{r}{0cm}
\hspace{-2em}
\begin{tikzcd}
    X
    \ar[r, "{\sf exec}_c", dashed]
    \ar[d, "c^*"]
    &[2em]
    (XA)^*XB
    \\[-0.5em]
    \widebar{H_X}(X)
    \ar[r, "\widebar{H_X}({\sf exec}_c)", dashed]
    &
    \widebar{H_X}((XA)^*XB)
    \ar[u, "\rotatebox{90}{$\sim$}"']
\end{tikzcd}
\end{wrapfigure}
Following \cite{cirstea_maximal_2011},
we modify an $\widebar{H}$-coalgebra $c$ to include state information,
obtaining a $\widebar{H_X}$-coalgebra $c^*$:
\[c^*:=(X\xrightarrow{\langle\mt{id},c\rangle} X\times TH(X)\xrightarrow{\mt{stl}}T(X\times H(X))\]
Then, as $H_X$ is still a linear functor,
we can apply Theorem \ref{theorem:traces} again
to obtain the \textit{execution map}
${\sf exec}_c={\sf tr}_{c^*}:X\rightarrow (XA)^*XB$ in ${\bf Kl}(\widetilde{PT})$.

We will use the following standard result, from the theory of coalgebras,
to prove
that a trace map for a coalgebra $c$ arises from 
an execution map for $c^*$, followed by forgetting
the states. 

\begin{proposition}
    \label{prop:beh_factors_nat_trans}
    Suppose functors $F,G:\bb{C}\rightarrow\bb{C}$
    with final coalgebras $\zeta_F:Z_F\xrightarrow{\sim} F(Z_F)$
    and $\zeta_G:Z_G\xrightarrow{\sim} G(Z_G)$ respectively.
    A natural transformation $\alpha:F\rightarrow G$ induces a $G$-coalgebra morphism
    $f_\alpha:Z_F\rightarrow Z_G$
    and a functor $E_\alpha:{\bf Coalg}(F)\rightarrow{\bf Coalg}(G)$
    such that for any coalgebra $c:X\rightarrow F(X)$
    we have that
    \begin{center}
        \begin{tikzcd}
            X \ar[r, dashed, "\mt{beh}_c"] \ar[rr, bend right, dashed, "\mt{beh}_{E_\alpha(c)}"']&
                Z_F \ar[r, "f_\alpha", dashed] & Z_G
        \end{tikzcd}
    \end{center}
\end{proposition}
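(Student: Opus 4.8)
The plan is to exhibit $E_\alpha$ as the canonical functor between coalgebra categories induced by $\alpha$, and then to deduce the commuting triangle purely from the universal property of the final $G$-coalgebra. First I would define $E_\alpha$ on objects by post-composition with $\alpha$: an $F$-coalgebra $c:X\to F(X)$ is sent to the $G$-coalgebra $\alpha_X\circ c:X\to G(X)$ on the same carrier, and a coalgebra morphism is sent to itself as a map of underlying carriers. The only thing requiring verification is well-definedness on morphisms, which is exactly where naturality of $\alpha$ enters: if $h:(X,c)\to(Y,d)$ satisfies $F(h)\circ c=d\circ h$, then
\[
G(h)\circ(\alpha_X\circ c)=\alpha_Y\circ F(h)\circ c=\alpha_Y\circ d\circ h,
\]
so $h$ is also a $G$-coalgebra morphism $(X,\alpha_X\circ c)\to(Y,\alpha_Y\circ d)$. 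Functoriality is then immediate, since $E_\alpha$ acts as the identity on underlying maps and therefore preserves identities and composites.

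Next I would define $f_\alpha$ as the behaviour morphism of the image of the final $F$-coalgebra: applying $E_\alpha$ to $\zeta_F$ yields the $G$-coalgebra $\alpha_{Z_F}\circ\zeta_F:Z_F\to G(Z_F)$, and finality of $(Z_G,\zeta_G)$ supplies a unique $G$-coalgebra morphism $f_\alpha:=\mt{beh}_{E_\alpha(\zeta_F)}:(Z_F,\alpha_{Z_F}\circ\zeta_F)\to(Z_G,\zeta_G)$.

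To obtain the triangle I would argue by uniqueness. By finality of $(Z_G,\zeta_G)$, the map $\mt{beh}_{E_\alpha(c)}$ is the \emph{unique} $G$-coalgebra morphism from $(X,\alpha_X\circ c)$ to $(Z_G,\zeta_G)$, so it suffices to show that $f_\alpha\circ\mt{beh}_c$ is such a morphism. Now $\mt{beh}_c:(X,c)\to(Z_F,\zeta_F)$ is an $F$-coalgebra morphism, so applying the functor $E_\alpha$ shows it is simultaneously a $G$-coalgebra morphism $(X,\alpha_X\circ c)\to(Z_F,\alpha_{Z_F}\circ\zeta_F)$; composing with the $G$-coalgebra morphism $f_\alpha$ yields a $G$-coalgebra morphism $(X,\alpha_X\circ c)\to(Z_G,\zeta_G)$. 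Uniqueness then forces $f_\alpha\circ\mt{beh}_c=\mt{beh}_{E_\alpha(c)}$, which is the asserted commuting triangle.

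There is no serious obstacle here: this is a standard fact about functors induced on coalgebra categories by natural transformations, and the entire content reduces to the naturality computation of the first step together with the universal property of the final coalgebra. The only point requiring mild care is keeping track of which coalgebra structure each map is a morphism for when applying $E_\alpha$ to $\mt{beh}_c$, so that the source and target coalgebras of $f_\alpha$ and of $E_\alpha(\mt{beh}_c)$ match up and the composite is legitimately a $G$-coalgebra morphism.
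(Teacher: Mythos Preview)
Your proof is correct and follows essentially the same route as the paper: define $E_\alpha$ by post-composing with $\alpha$ (identity on morphisms, well-defined by naturality), set $f_\alpha:=\mt{beh}_{E_\alpha(\zeta_F)}$, and conclude the triangle from finality of $(Z_G,\zeta_G)$. The paper simply records the relevant diagram and these same definitions without spelling out the uniqueness argument, which you have made explicit.
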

\begin{proof}\hyperref[appendix:details_beh_factors_nat_trans]{Appendix}.
\end{proof}

\begin{proposition}[Traces via Executions]
    \label{prop:traces_via_executions}
    The projection $\pi_2:H_X\rightarrow H$ is a natural transformation,
    inducing a function $f_{\pi_2}:(XA)^*XB\rightarrow A^*B$.
    The trace map can be factored via the execution map and $f_{\pi_2}$:
    we have $f_{\widebar{\pi_2}}\odot{\sf exec}_c={\sf tr}_c$ for any $c:X\rightarrow H(X)$.
\end{proposition}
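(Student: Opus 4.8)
The plan is to apply Proposition~\ref{prop:beh_factors_nat_trans} inside the Kleisli category $\bb{C}={\bf Kl}(\widetilde{PT})$, with $F=\widebar{H_X}$, $G=\widebar{H}$ and $\alpha=\widebar{\pi_2}$. By Theorem~\ref{theorem:traces} applied to the linear functors $H_X$ and $H$, the objects $(XA)^*XB$ and $A^*B$ carry the final $\widebar{H_X}$- and $\widebar{H}$-coalgebras, so the hypotheses of Proposition~\ref{prop:beh_factors_nat_trans} are in place provided we know $\widebar{\pi_2}$ is a natural transformation $\widebar{H_X}\to\widebar{H}$. Recalling that ${\sf exec}_c={\sf tr}_{c^*}=\mt{beh}_{c^*}$ and ${\sf tr}_c=\mt{beh}_c$ (each trace map being the unique morphism into the relevant final coalgebra), the proposition will give $f_{\widebar{\pi_2}}\odot{\sf exec}_c=\mt{beh}_{E_{\widebar{\pi_2}}(c^*)}$, and it then remains to identify $E_{\widebar{\pi_2}}(c^*)$ with $c$.

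First I would check that $\widebar{\pi_2}=J(\pi_2)$ genuinely lifts to a natural transformation between the Kleisli extensions. This is the step that is not purely formal: it requires $\pi_2$ to commute with the distributive laws, i.e.\ $\widetilde{PT}(\pi_2)\circ\lambda^{H_X}=\lambda^H\circ(\pi_2\,\widetilde{PT})$. Using that the distributive law for $H_X=X\times H(-)$ is built from strengths as $\lambda^{H_X}={\sf stl}\circ(X\times\lambda^H)$ (Example~\ref{example-distributive-laws}(iii)), this reduces to the strength coherence law $\widetilde{PT}(\pi_2)\circ{\sf stl}=\pi_2$, after which both sides collapse to $\lambda^H\circ\pi_2$.

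Next I would carry out the key computation $E_{\widebar{\pi_2}}(c^*)=\widebar{\pi_2}\odot c^*=c$. Since $\widebar{\pi_2}=J(\pi_2)$, Kleisli composition gives $\widebar{\pi_2}\odot c^*=\widetilde{PT}(\pi_2)\circ c^*$; unfolding $c^*={\sf stl}\circ\langle\mt{id},c\rangle$ and reusing the same coherence $\widetilde{PT}(\pi_2)\circ{\sf stl}=\pi_2$ yields $\pi_2\circ\langle\mt{id},c\rangle=c$. Hence $\mt{beh}_{E_{\widebar{\pi_2}}(c^*)}=\mt{beh}_c={\sf tr}_c$, and combined with the previous step we obtain $f_{\widebar{\pi_2}}\odot{\sf exec}_c={\sf tr}_c$. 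Finally, to match the statement's $f_{\pi_2}$, I would note that $f_{\widebar{\pi_2}}=J(f_{\pi_2})$: both are the unique $\widebar{H}$-coalgebra morphism out of the same coalgebra, and applying $J$ to the defining $H_X$-algebra-morphism equation for $f_{\pi_2}$ turns it into exactly the $\widebar{H}$-coalgebra-morphism equation characterising $f_{\widebar{\pi_2}}$, so uniqueness of $\mt{beh}$ identifies them.

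I expect the main obstacle to be the second step: verifying that $\pi_2$ is compatible with the distributive laws, so that $\widebar{\pi_2}$ is a bona fide natural transformation of Kleisli endofunctors. Without this, Proposition~\ref{prop:beh_factors_nat_trans} does not apply. Everything downstream (the identity $E_{\widebar{\pi_2}}(c^*)=c$ and the identification $f_{\widebar{\pi_2}}=J(f_{\pi_2})$) reuses the single strength coherence law $\widetilde{PT}(\pi_2)\circ{\sf stl}=\pi_2$ and is otherwise routine.
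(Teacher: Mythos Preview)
Your proposal is correct and follows exactly the paper's approach: apply Proposition~\ref{prop:beh_factors_nat_trans} in ${\bf Kl}(\widetilde{PT})$ with $\alpha=\widebar{\pi_2}$ and verify $E_{\widebar{\pi_2}}(c^*)=c$. The paper's proof is the one-line computation $E_{\widebar{\pi_2}}(c^*)=\widebar{\pi_2}\odot c^*=c$, and you have simply been more careful than the paper in spelling out the supporting checks (naturality of $\widebar{\pi_2}$ via compatibility with the distributive laws, the strength coherence $\widetilde{PT}(\pi_2)\circ{\sf stl}=\pi_2$, and the identification $f_{\widebar{\pi_2}}=J(f_{\pi_2})$).
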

\begin{proof}
    This follows from Proposition \ref{prop:beh_factors_nat_trans}
    with the fact that $E_{\widebar{\pi_2}}(c^*)=\widebar{\pi_2}\odot c^*=\mu\circ\eta\circ\pi_2\circ c^*=c$.
\end{proof}

\section{Strategies}
\label{sect:strategies}

This section shows how we recover (i) the usual notion of strategy in a two-player
game, as a chain of maps
in the Kleisli category of the monad $T$, and (ii) the outcome of a strategy,
by lifting this chain to ${\bf Kl}(\widetilde{PT})$
(where our games, and the appropriate limit, exist).
Recall that a strategy $\sigma$
maps incomplete plays which conform to $\sigma$
ending in a controller state $x$,
to a successor of $x$ (an environment state). The latter is an element of $TH(X)$,
and thus the natural home for strategies is ${\bf Kl}(T)$.

The reason to lift each $\sigma_n:\mt{Im}
(\sigma_n)\to\widehat{H^n_X}(X)$ to ${\bf Kl}(\widetilde{PT})$
is threefold.
Firstly, we need $(XA)^*XB$ to be the limit of
the final sequence, which gives
us a unique mediating map $1\to (XA)^*XB$ in ${\bf Kl}(\widetilde{PT})$.
We also need to be able to reason that $\sigma_n$ picks
a successor in $c$, which is a morphism in ${\bf Kl}(\widetilde{PT})$,
the order structure in ${\bf Kl}(\widetilde{PT})$ lets us do this.
The final reason is conceptual,
we want the outcome of a strategy to be from the \textit{controllers perspective},
which is captured in ${\bf Kl}(\widetilde{PT})$, rather
than ${\bf Kl}(T)$ which is the environments perspective.
This distinction is particularly prominent when a particular strategy
does not force a collection of completed plays (see Remark~\ref{finite-approximation-remark}).

This lifting is provided by a functor $K$ in the following proposition.


\begin{proposition}
    \label{prop:kl_functors}
    There is an identity-on-objects functor
    $K:{\bf Kl}(T)\rightarrow{\bf Kl}(\widetilde{PT})$,
    defined by
    \[K(X\xrightarrow{f}T(Y)):=(X\xrightarrow{f}T(Y)\xrightarrow{\eta^P}PT(Y)
    \xrightarrow{{\sf cl}^{PT}}\widetilde{PT}(Y))\]
\end{proposition}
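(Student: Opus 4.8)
The plan is to verify that the proposed map $K$ satisfies the two functoriality axioms: preservation of identities and preservation of composition. Since $K$ is declared identity-on-objects, I need only check it acts correctly on morphisms.

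First I would check that $K$ preserves identities, i.e. that $K(\eta^T_X) = \eta^{\widetilde{PT}}_X$. Recall from Section~\ref{sect:composite_monads} that the unit of the composite monad is $\eta^P\eta^T$. So I would compute ${\sf cl}^{PT} \circ \eta^P \circ \eta^T$ and show it equals $\eta^P\eta^T$; this amounts to showing that ${\sf cl}^{PT}$ fixes singletons of the form $\{V\}$ where $V$ is a point (i.e.\ in the image of $\eta^T$), which should follow readily from the explicit description of the closure operator (it closes under finite union / convex combination, and a singleton is already closed).

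The main work is preservation of composition: for $f:X\to T(Y)$ and $g:Y\to T(Z)$ in ${\bf Kl}(T)$, I must show $K(g\odot_T f) = K(g)\odot_{\widetilde{PT}} K(f)$, where $\odot_T$ and $\odot_{\widetilde{PT}}$ denote Kleisli composition in the two categories. Unfolding the definitions, the left-hand side is ${\sf cl}^{PT}\circ\eta^P\circ(\mu^T\circ T(g)\circ f)$, while the right-hand side unpacks using $\mu^{\widetilde{PT}} = \mu^P\mu^T\circ P\delta^{PT}$ and the functorial action of $\widetilde{PT}$ on $K(g)$. The key is to track how the outer $\eta^P$ and the closure operator interact with the multiplication and distributive law. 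I expect this to reduce to a naturality-and-coherence diagram chase relating ${\sf cl}^{PT}$, $\eta^P$, $\delta^{PT}$, and the monad multiplications, exploiting the standard closure-operator properties recorded in the earlier example (idempotence, monotonicity, and extensiveness) together with the fact that $\delta^{PT}$ preserves singletons coming from $\eta^P$.

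The hard part will be the composition check, specifically controlling the placement of the closure operator: on the right-hand side the closure gets applied twice (once inside $K(f)$ and once inside $K(g)$) and then the multiplication of $\widetilde{PT}$ applies it again, whereas on the left it is applied only once at the very end. The crux is therefore an idempotency/absorption argument showing these repeated closures collapse to a single one, which is exactly the content of ${\sf cl}$ being a genuine closure operator compatible with the composite monad structure. I would isolate this as the main lemma, likely citing the coherence result \cite[Lemma 2.10, p.~48]{goy_compositionality_2021} that justifies the standard multiplication formula, and then complete the diagram chase. Given the routine nature of the remaining manipulations, I would relegate the full computation to an appendix and present only the governing diagram and the idempotency observation in the main text.
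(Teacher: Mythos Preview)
Your proposal is correct in principle: directly checking the two functoriality axioms for $K$ would work, and the steps you outline (identity preservation via ${\sf cl}^{PT}$ fixing singletons, composition preservation via a diagram chase with the distributive law and closure idempotence) are the right ones. However, the paper takes a much shorter route. It simply observes that the composite ${\sf cl}^{PT}\circ\eta^P:T\Rightarrow\widetilde{PT}$ is a \emph{monad morphism}, citing \cite[Proposition 2.7]{goy_compositionality_2021}, and then invokes the standard fact that any monad morphism $\alpha:T\Rightarrow S$ induces an identity-on-objects functor ${\bf Kl}(T)\to{\bf Kl}(S)$ by post-composition with $\alpha$.

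What you are proposing is effectively to re-derive, by hand, exactly the two coherence conditions that define a monad morphism (compatibility with units gives your identity check; compatibility with multiplications gives your composition check). The ``hard part'' you flag---collapsing the repeated closures---is precisely the content of the multiplication axiom for this monad morphism, which is already packaged in the cited result. So your approach buys self-containment at the cost of reproducing a known lemma, while the paper's approach buys brevity by recognising the structural reason the computation succeeds.
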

\begin{proof}
    Functoriality of $K$ follows from ${\sf cl}^{PT}\circ\eta^P$
    being a monad morphism,
    see \cite[Proposition 2.7]{goy_compositionality_2021}.
\end{proof}

We are now ready to give a categorical definition of a strategy.
It will involve a chain of maps that live in ${\bf Kl}(T)$,
so we also need liftings and extensions to ${\bf Kl}(T)$.
To differentiate Kleisli categories, we
use $\widehat{f}$ for lifting functions to ${\bf Kl}(T)$,
and $\widehat{F}$ for extensions to ${\bf Kl}(T)$.

In the following definition,
we also use the projection $\pi_1:X(B+AX)\rightarrow X$ out of the product.
For the reader's intuition,
we spell out the map
\[ H^n_X(\pi_1): (XA)^{<n}XB+(XA)^nX(B+AX)\rightarrow (XA)^{<n}XB+(XA)^nX \]
as preserving complete plays of length $n$ or less,
and mapping complete and partial plays of length $n+1$
to the partial plays of length $n$ which they extend.
We define $\mt{Im}(-)$ as
mapping a ${\bf Kl}(T)$ morphism 
$f:X\rightarrow TY$ to
the set $\bigcup\circ P({\sf supp})\circ Pf(X)$.

\begin{definition}
    \label{def:strategy}
    Let $(X,c)$ be an $\widebar{H}$-coalgebra,
    and $x\in X$ be a state.
    A {\it strategy $\sigma$ from $x$}
    is a chain of maps $\{\sigma_n\}_{n\in\omega}$ in ${\bf Kl}(T)$,
    where $\sigma_0:1\rightarrow X$ picks $\eta^T(x)$, while 
    $\sigma_{n+1}:\mt{Im}(\sigma_n)\rightarrow \widehat{H^{n+1}_X}(X)$
    for $n\in\omega$, is such that
    \begin{center}
    \begin{tikzcd}
        \widehat{H^n_X}(X) &[2em]  \widehat{H^{n+1}_X}(X) \ar[l, "\widehat{H^n_X}(\widehat{\pi_1})"']
        \\[-0.5em]
        \mt{Im}(\sigma_n) \ar[u, >->] \ar[ur, "\sigma_{n+1}"']
    \end{tikzcd}
    \hspace{3em}
    \tikz[overlay]{
        \node at (1.75,0.1) {{\scalebox{1}{$\sqsupseteq$}}};
    }
    \begin{tikzcd}
        \widebar{H^n_X}(X) \ar[r, "\widebar{H^n_X}(c^*)"]   &[2em] \widebar{H^{n+1}_X}(X)
        \\[-0.5em]
        \mt{Im}(\sigma_n) \ar[ur, "K(\sigma_{n+1})"'] \ar[u, >->]
    \end{tikzcd}
    \end{center}
    commute. Denote the set of strategies from $x$ as $\Sigma_c(x)$.
    Define an $n${\it-step strategy from $x$}
    as a finite collection of maps $\{\sigma_0,\dots, \sigma_n\}$
    with the same conditions.
    Denote the set of $n$-step strategies from $x$ by $\Sigma_{n,c}(x)$.
\end{definition}

The left condition states that $\sigma_{n+1}$ preserves completed plays
and extends incomplete plays (by either completing them or by adding
a successor). The right condition states
that $\sigma_n$ chooses out of the successors in $(X,c)$.
Notice that
$\sigma_n$ cannot force a play ending in a state $x\in X$
with no successors (i.e. a controller deadlock),
because then $\sigma_{n+1}$ is not definable,
as there is no way to satisfy the right condition above. 
Hence, there are $n$-step strategies $\{\sigma_0,\dots,\sigma_n\}$ which cannot
be extended to a (full) strategy $\{\sigma_n\}_{n\in\omega}$.

We recall (see Section~\ref{sect:two_player_games}
and \ref{sect:mdps}), that a standard strategy $(x,\sigma)$
for a $\widetilde{PT}H$-coalgebra
is a partial function from $(XA)^*X$ to controller states, defined over
exactly the partial plays which conform to it,
which extends partial plays ending in $x$ with a move chosen from the
controllers edge relation.
With a few inconsequential assumptions
on the two-player games and MDPs defined in Section \ref{sect:two_player_games} and \ref{sect:mdps},
they are equivalently coalgebras $X\to \widetilde{PT}H(X)$,
for $T=Q$ and $T=D$ respectively.
Furthermore, standard strategies are then in one-to-one correspondence
with strategies defined in Definition \ref{def:strategy}.


Given an strategy $\sigma$ at a state $x$, the set of plays from $x$
up to some depth $n$ naturally arises from composition in ${\bf Kl}(T)$.
For example, in the $T=Q$ case,
composition takes a union at each step, so composing with $\sigma_{n+1}$ takes a set
of complete plays (of length $\le n$) and partial plays (of length $n$)
to a set of complete (of length $\le n+1$) and partial plays (of length $n+1$).
This is the content of our next definition. Again,
this can be seen to agree with the $n$-step partial outcome
of a standard strategy in discussed Section \ref{sect:prelim}.

We require the dashed morphism and injection in the commutative
diagram below
(the unique surjective-injective factorisation of $\sigma_{n+1}$ in ${\bf Set}$),
which can be lifted into ${\bf Kl}(T)$ with $J_T$.
\begin{center}
    \begin{tikzcd}
        & TH_X^{n+1}(X)
        \\[-0.5em]
        \mt{Im}(\sigma_n) \ar[ur, "\sigma_{n+1}"] \ar[r, dashed] & \mt{Im}(\sigma_{n+1}) \ar[u, >->]
    \end{tikzcd}
\end{center}

\begin{definition} Let $\sigma$ be a strategy from $x$.
    For $n\in \omega$, define ${\sf plays}_n^\sigma(x):1\rightarrow \widehat{H^n_X}(X)$ as the composition of the dashed arrows to $\mt{Im}(\sigma_{n})$ with the injection
    to $\widebar{H^n_X}(X)$.
\begin{center}
\begin{tikzcd}
     & X & \widehat{H_X}(X) & \widehat{H^2_X}(X) & 
    \\[-0.5em]
    1 \ar[ur, "\sigma_0"] \ar[r, dashed]
    & \mt{Im}(\sigma_0) \ar[u, >->] \ar[r, dashed] \ar[ur, "\sigma_1"]& \ar[r, dashed] \mt{Im}(\sigma_1) \ar[u,>->] \ar[ur, "\sigma_2"] &
    \mt{Im}(\sigma_2) \ar[u,>->] \ar[r, dashed] & \cdots
\end{tikzcd}
\end{center}
    We refer to ${\sf plays}_n^\sigma(x)$ as the $n$-step partial outcome of $\sigma$.
\end{definition}

For simplicity, the following discusses $T=Q$, but the same
reasoning applies to $T=D$ too.
The map ${\sf plays}_n^\sigma(x):1\rightarrow \widehat{H^n_X}(X)$ gives the set
of complete plays of length less than or equal to $n$ and partial plays of length $n$,
which can be obtained from fixing the controller strategy $\sigma$ from $x$.
We now want the \textit{completed outcome} of a strategy:
all $\sigma$-conform complete plays from a state $x$.
It is tempting to try to apply Theorem \ref{theorem:traces} 
to obtain a final $\widebar{H}$-coalgebra in ${\bf Kl}(Q)$,
and then to obtain a mediating map from a cone over the final sequence.
However we can see, due to $Q$ being \textit{finite} subsets,
that joins of chains of morphisms are not guaranteed to exist.
For example, the union of $\{x_0\}\subseteq\{x_0,x_1\}\subseteq \{x_0,x_1,x_2\}\subseteq \dots$
is not a finite set.
This failure means ${\bf Kl}(Q)$ is not $\omega${\bf-cpo} enriched, so
Theorem \ref{theorem:traces} and Proposition \ref{prop:(co)limit-coincide} do not apply.
\

We solve this problem
by using $K$ to lift the chain to ${\bf Kl}(\widetilde{PT})$,
where we use it to form a cone over the final sequence.
Note that, from the proof of Theorem \ref{theorem:traces} in \cite{Jacobs_2016},
we know that
$0\xleftarrow{!}\widebar{H_X}(0)\xleftarrow{\widebar{H_X}(!)}
\widebar{H^2_X}(0)\leftarrow\cdots$
is the final sequence of $\widebar{H_X}$ in ${\bf Kl}(\widetilde{PT})$.

\begin{proposition}
\label{prop:cone}
    Fix a strategy $\sigma$ from $x$, in some $\widebar{H}$-coalgebra.
    $(1\xrightarrow{K({\sf plays}^\sigma_n(x))}\widebar{H^n_X}(X)\xrightarrow{\widebar{H^n_X}(!)}\widebar{H^n_X}(0))_{n \in \omega}$
    defines a cone 
    over the final sequence of $\widebar{H_X}$
    $0\xleftarrow{!}\widebar{H_X}(0)\xleftarrow{\widebar{H_X}(!)}\widebar{H^2_X}(0)\leftarrow\cdots$.
\end{proposition}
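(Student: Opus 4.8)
The plan is to verify, for every $n\in\omega$, the single cone equation $\widebar{H^n_X}(!)\odot f_{n+1}=f_n$, where $f_n:=\widebar{H^n_X}(!)\odot K({\sf plays}^\sigma_n(x))$ is the $n$-th leg and $\widebar{H^n_X}(!):\widebar{H^{n+1}_X}(0)\to\widebar{H^n_X}(0)$ is the connecting map of the final sequence. First I would simplify the left-hand side. By functoriality of $\widebar{H^n_X}$ the two nested occurrences of $!$ combine, $\widebar{H^n_X}(!)\odot\widebar{H^{n+1}_X}(!)=\widebar{H^n_X}\!\big(!\odot\widebar{H_X}(!)\big)$, and since $0$ is a zero object in ${\bf Kl}(\widetilde{PT})$ the inner composite $!\odot\widebar{H_X}(!):\widebar{H_X}(X)\to\widebar{H_X}(0)\to 0$ is just the unique map $\widebar{H_X}(X)\to 0$. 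Hence $\widebar{H^n_X}(!)\odot f_{n+1}=\widebar{H^n_X}(!)\odot K({\sf plays}^\sigma_{n+1}(x))$, now with $!:\widebar{H_X}(X)\to 0$.

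The heart of the argument is the one-step identity $\widehat{H^n_X}(\widehat{\pi_1})\odot{\sf plays}^\sigma_{n+1}(x)={\sf plays}^\sigma_n(x)$ in ${\bf Kl}(T)$. Writing ${\sf plays}^\sigma_{n+1}(x)=\sigma_{n+1}\odot d_n$ and ${\sf plays}^\sigma_n(x)=\iota_n\odot d_n$, where $d_n:1\to\mt{Im}(\sigma_n)$ is the composite of the dashed surjections and $\iota_n:\mt{Im}(\sigma_n)\rightarrowtail\widehat{H^n_X}(X)$ is the inclusion, this follows by precomposing the left triangle $\widehat{H^n_X}(\widehat{\pi_1})\odot\sigma_{n+1}=\iota_n$ of Definition~\ref{def:strategy} with $d_n$. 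I would then push it through $K$: by functoriality (Proposition~\ref{prop:kl_functors}) we get $K(\widehat{H^n_X}(\widehat{\pi_1}))\odot K({\sf plays}^\sigma_{n+1}(x))=K({\sf plays}^\sigma_n(x))$. Crucially, $K$ preserves pure maps, $K\circ J_T=J_{\widetilde{PT}}$, because $K(\widehat{g})={\sf cl}^{PT}\circ\eta^P\eta^T\circ g=\eta^P\eta^T\circ g=\widebar{g}$, the middle equality holding since ${\sf cl}^{PT}$ fixes each singleton $\{\eta^T(y)\}$. As extensions agree with liftings on pure maps ($\widebar{F}(\widebar{g})=\widebar{F(g)}$, and likewise in ${\bf Kl}(T)$), the map $\widehat{H^n_X}(\widehat{\pi_1})$ is pure and $K$ carries it to $\widebar{H^n_X}(\widebar{\pi_1})=\widebar{H^n_X}(K(\widehat{\pi_1}))$. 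Applying $\widebar{H^n_X}$-functoriality then gives $f_n=\widebar{H^n_X}\!\big(!\odot K(\widehat{\pi_1})\big)\odot K({\sf plays}^\sigma_{n+1}(x))$, and since $!\odot K(\widehat{\pi_1})$ and $!$ are both the unique map $\widebar{H_X}(X)\to 0$ into the zero object, this equals $\widebar{H^n_X}(!)\odot K({\sf plays}^\sigma_{n+1}(x))=\widebar{H^n_X}(!)\odot f_{n+1}$, closing the cone equation uniformly in $n$ (the case $n=0$ being the instance where the target is itself $0$).

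I expect the main obstacle to be careful bookkeeping rather than a single hard step: deriving the one-step identity relating ${\sf plays}^\sigma_{n+1}$ and ${\sf plays}^\sigma_n$ from the surjective--injective factorisations of Definition~\ref{def:strategy}, and keeping the three layers of structure straight --- the liftings and extensions $\widehat{(-)}$ in ${\bf Kl}(T)$, those $\widebar{(-)}$ in ${\bf Kl}(\widetilde{PT})$, and the connecting functor $K$ between them. The one genuinely load-bearing identity is $K\circ J_T=J_{\widetilde{PT}}$, whose only content is that ${\sf cl}^{PT}$ fixes the image of the composite unit $\eta^P\eta^T$; everything else reduces to functoriality and the terminality of the zero object.
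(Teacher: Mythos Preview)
Your proof is correct and follows essentially the same approach as the paper: both reduce the cone equation to finality of $0$ in ${\bf Kl}(\widetilde{PT})$ together with the left triangle of Definition~\ref{def:strategy}, transported along $K$. You make the identity $K\circ J_T=J_{\widetilde{PT}}$ explicit where the paper leaves it implicit in ``the rest commute from definitions'', but the structure of the argument is the same.
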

\begin{proof}
    The top square commutes by finality (0 is final in ${\bf Kl}(\widetilde{PT})$),
    the rest commute from definitions.
    \begin{center}
    \begin{tikzcd}
        &[3em]
        \widebar{H^n_X}(0) &[3em] \ar[l,"\widebar{H^n_X}(!)"'] \widebar{H^{n+1}_X}(0)
        \\[-1em]
        &
        \widebar{H^n_X}(X) \ar[u, "\widebar{H^n_X}(!)"] &[3em] \ar[l,"\widebar{H^n_X}(\widehat{\pi_1})"'] \widebar{H^{n+1}_X}(X) \ar[u, "\widebar{H^{n+1}_X}(!)"]
        \\[-1em]
        1 \ar[r,dashed] \ar[ur, "K({\sf plays}^\sigma_n(x))"] \ar[rru, bend right, "K({\sf plays}^\sigma_{n+1}(x))"']
        & \mt{Im}(\sigma_n) \ar[u, >->] \ar[ur,"K(\sigma_{n+1})"{yshift=-3px}]
    \end{tikzcd}
    \end{center}
\end{proof}

We are now ready to define the completed plays that conform to an $(X,c)$-strategy from a state $x$.

\begin{definition}
    \label{def:strategy_outcome}
    Fix an $\widebar{H}$-coalgebra $(X,c)$ and a state $x\in X$.
    For $\sigma\in\Sigma_c(x)$, define
    ${\sf plays}_{c}^{\sigma}(x):1\rightarrow (XA)^*XB$ 
    in ${\bf Kl}(\widetilde{PT})$,
    as the unique mediating map
    from the cone in Proposition~\ref{prop:cone}.
    This represents the completed outcome from a state $x$ using a strategy $\sigma$.
    Explicitly, from Proposition \ref{prop:(co)limit-coincide}, we have:
    \[ {\sf plays}^\sigma_c(x)=\bigvee_{n\in\omega}(1\xrightarrow{K({\sf plays}^\sigma_n(x))} \widebar{H^n_X}(X)
        \xrightarrow{\widebar{H^n_X}(!)} \widebar{H^n_X}(0) \xrightarrow{\widebar{\kappa_n}} (XA)^*XB
    ) \]
    where $\kappa_n:H^n_X(0)\rightarrow (XA)^*XB$ is the coprojection
    from the initial chain for $H_X$ into the initial $H_X$-algebra.
\end{definition}

\begin{remark}
    \label{finite-approximation-remark}
    We stress that the outcome of a strategy from a state is a map
    into $\widetilde{PT}((XA)^*XB)$, rather than $T((XA)^*XB)$.
    The set
    ${\sf plays}^\sigma_c(x)$ will be a singleton if all
    plays which conform to $\sigma$ are completed in a finite
    number of steps, and will otherwise be empty.
    For example, consider the game below with a non-deterministic environment, where there is only one strategy.
    \begin{center}
    \begin{tikzpicture}
        \node[draw] at (0,0) (a0) {};
         \node[circle,fill,inner sep=1pt] at (1.5,0)              (a1) {};
         \node                             at (3,0) (a2) {$b$};
        \draw[-stealth] (a0) -- (a1);
        \draw[-stealth, dashed] (a1) -- (a2) {};
        \draw[-stealth, dashed] (a1) to[bend left=60] node[midway, below] {$a$} (a0) ;
        \node at (0,0.3) {$x$};
    \end{tikzpicture}
    \end{center}
    We compute the $n$-step partial outcomes
    of the strategy, which include incomplete plays, as $\{x\},\{b,xax\},\{b,xaxb,xaxax\},\cdots$.
    The set ${\sf plays}^\sigma_c(x)$ will however be empty,
    as none of the $n$-step partial outcomes will ever be a set of completed plays
    (i.e. the strategy is not finitely completing).
    In Definition~\ref{def:strategy_outcome},
    the map $\widebar{H^n_X}(!)$ performs this operation
    of removing sets which contain incomplete plays,
    this is discussed further in Proposition~\ref{prop:filter_incomplete}.
    This example also points to another reason why defining the outcome of a strategy $\sigma$ should be done in ${\bf Kl}(\widetilde{PT})$ and not in ${\bf Kl}(T)$: unlike $\widebar{H^n_X}(!)$, if a
    similar map $\widehat{H^n_X}(!)$ existed in ${\bf Kl}(T)$, it would remove the incomplete plays. So even if joins of chains of morphisms did exist in ${\bf Kl}(T)$, the resulting joins would describe the outcomes that the \emph{environment} could force when playing against $\sigma$, and not the set of completed outcomes which the controller can force when playing $\sigma$.
\end{remark}

\section{Executions via Strategies}
\label{sect:executions_via_strategies}

\begin{wrapfigure}{r}{0cm}
\begin{tikzcd}
        &[-1em] X \ar[d, dashed]
            \ar[ddl, bend right, "\widebar{H^n_X}(!)\,\odot\, c^*_n"']
            \ar[ddr, bend left, "\widebar{H^{n+1}_X}(!)\,\odot\, c^*_{n+1}"]
        &[-1em]
        \\[-1em]
        & (XA)^*XB \ar[dl, "\pi_n"'] \ar[dr, "\pi_{n+1}"]
        \\[-1em]
        \widebar{H^n_X}(0) & & \ar[ll, "\widebar{H^n_X}(!)"] \widebar{H^{n+1}_X}(0)
\end{tikzcd}
\end{wrapfigure}

We are now ready to prove our main result,
Theorem~\ref{theorem:main},
which states that the executions from a state $x$
are precisely the union of the completed outcomes of all strategies from $x$.
The proof is spread over the three next lemmas.
The goal is to show
that the map which sends a state to
the union of the outcomes of all strategies for that state, depicted using a dashed arrow, makes
the diagram on the right commute.
The limiting property of $(XA)^*XB$ (recall that this is the limit of the
final sequence of $\widebar{H_X}$) then equates this map with the execution map.
Let $T$ be either $Q$ or $D$, and fix an arbitrary $\widetilde{PT}H$-coalgebra
$(X,c)$ for the remainder of the section.

Our first lemma describes how $n$-step partial outcomes of $n$-step strategies
from $x$
correspond to elements of $c^*_n(x)\in\widetilde{PT}H^n_X(X)$.
It is important we only consider $n$-step strategies rather than full
strategies here, as a deadlock in the future prevents some $\sigma_{n+i}$
from existing (see the comments under Definition~\ref{def:strategy}).
This lemma is an inductive version of what happens
when applying the weak distributives once:
where they reverse the branching by giving the one-step
outcome of each of the controller choices.

\begin{lemma}
    \label{lemma:main}
    Unfolding $c^*$ $n$ times at a state $x$ gives the set of
    $n$-step partial outcomes of all $n$-step strategies starting at $x$.
    \[ c^*_n(x)=\{{\sf plays}^\sigma_n(x)\mid\sigma\in\Sigma_{n,c}(x)\}\]
\end{lemma}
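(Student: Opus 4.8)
The plan is to induct on $n$, using the recursive description $c^*_{n+1}=\widebar{H_X^n}(c^*)\odot c^*_n$ of the iterated coalgebra in ${\bf Kl}(\widetilde{PT})$. For the base case $n=0$, the map $c^*_0$ is the Kleisli identity $\eta^{\widetilde{PT}}_X=\eta^P\eta^T$, so $c^*_0(x)=\{\{x\}\}$ (for $T=Q$; the analogous Dirac singleton for $T=D$); on the other side there is a unique $0$-step strategy, namely $\sigma_0=\eta^T(x)$, whose outcome ${\sf plays}^\sigma_0(x)$ is exactly the collection $\{x\}$, so both sides coincide. The inductive step is where the weak distributive law does the work, and is where I would concentrate the argument.

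The bridge between the two sides is a bijection between $(n+1)$-step strategies from $x$ and pairs $(\sigma',\sigma_{n+1})$ consisting of an $n$-step strategy $\sigma'\in\Sigma_{n,c}(x)$ together with a choice of extension $\sigma_{n+1}:\mt{Im}(\sigma_n)\to\widehat{H_X^{n+1}}(X)$ satisfying the two conditions of Definition~\ref{def:strategy}. Under this bijection the formula for ${\sf plays}^\sigma_{n+1}(x)$ factors through ${\sf plays}^{\sigma'}_n(x)$: composing the dashed chain up to $\mt{Im}(\sigma_n)$ reproduces ${\sf plays}^{\sigma'}_n(x)$, and the final leg applies $\sigma_{n+1}$. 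I would thus reduce the inductive step to showing that, for a fixed $n$-step strategy $\sigma'$ with outcome $U={\sf plays}^{\sigma'}_n(x)$, applying $\mu^{\widetilde{PT}}\circ\widetilde{PT}(\widebar{H_X^n}(c^*))$ to the closure of the singleton $\{U\}$ produces exactly $\{{\sf plays}^\sigma_{n+1}(x)\mid \sigma\text{ extends }\sigma'\}$.

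The heart of the proof is then to unfold $\mu^{\widetilde{PT}}=\mu^P\mu^T\circ P\delta^{PT}$ on this one element. The map $g=\widebar{H_X^n}(c^*)$ acts on a play $\rho\in U$ according to the left condition of Definition~\ref{def:strategy}: a completed play of length $\le n$ is sent to its unit $\{\{\rho\}\}$ and so is frozen, while a partial play $\rho'x'$ of length $n$ is sent to the $\rho'$-prefixing of $c^*(x')$, i.e.~the convex set of one-step environment responses the controller can force at $x'$. The weak distributive law $\delta^{PT}$ applied to the resulting $T$-collection of $\widetilde{PT}$-values is exactly the operation that ``reverses the branching'': for $T=Q$ it selects, independently for each partial play $\rho'x'\in U$, a non-empty finite subset $V_{\rho'x'}$ of some element of $c^*(x')$ and unions the results (dually, for $T=D$, it forms a convex combination indexed by the support of $U$, following the formula for $\delta^{PD}$). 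Each such family of choices is precisely the data of a valid extension $\sigma_{n+1}$ — the right condition $K(\sigma_{n+1})\sqsubseteq\widebar{H_X^n}(c^*)$ says exactly that the chosen successors lie among those offered by $c^*$ — and the union or convex combination it produces is exactly ${\sf plays}^\sigma_{n+1}(x)$. Enumerating all choices therefore enumerates all extensions, which combined with the inductive hypothesis (ranging over all $\sigma'$) yields all $(n+1)$-step strategies.

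The step I expect to be the main obstacle is making this last correspondence rigorous while keeping the convex-closure bookkeeping honest: one must check that the right-hand set of strategy outcomes is itself convex (closed under ${\sf cl}^{PT}$), so that it can match $c^*_{n+1}(x)\in\widetilde{PT}(H_X^{n+1}(X))$, and that the closures introduced by $\widetilde{PT}(g)$ and by $\mu^{\widetilde{PT}}$ correspond on the strategy side to closing the family of extensions under the pointwise union (respectively convex combination) of valid choice-functions. Verifying that such a pointwise combination of two valid extensions is again a valid extension — so that no spurious outcomes are created and none are missed — and that the freezing of completed plays interacts correctly with the distributive law, is the delicate part; the cases $T=Q$ and $T=D$ run in parallel but must each be checked against the explicit formulas for $\delta^{PQ}$ and $\delta^{PD}$ from Example~\ref{example-distributive-laws}.
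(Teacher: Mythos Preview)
Your proposal is correct and follows exactly the paper's approach: induction on $n$, with the inductive step decomposing $(n{+}1)$-step strategies as an $n$-step strategy plus an extension $\sigma_{n+1}$, unfolding $\mu^{\widetilde{PT}}=\mu^P\mu^T\circ P\delta^{PT}$ on the contribution of each fixed $n$-step outcome, and invoking convexity of $c^*(x')$ to close the correspondence (the paper writes this out as two explicit inclusions). One small correction to your description of $\delta^{PQ}$: for each partial $\rho'x'\in U$ it selects a non-empty finite subset $\mc V_{\rho'x'}$ of (the $\rho'x'$-prefixing of) $c^*(x')$ --- a finite \emph{collection of elements} of $c^*(x')$, not a subset of a single element --- and the matching extension is $\sigma_{n+1}(\rho'x')=\bigcup\mc V_{\rho'x'}$, which lies in $c^*(x')$ precisely by convexity; this is where your anticipated ``main obstacle'' is discharged.
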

\begin{proof}(Sketch)
    The proof proceeds by induction.
    The base case $(n=0)$ holds as both sides are $\{\eta^T(x)\}$.
    Now we provide a sketch the inductive case for $T=Q$.
    To simplify notation,
    we assume that ${\sf plays}^\sigma_n(x)=\{\rho^\sigma_1x^\sigma_1,
    \rho^\sigma_2x^\sigma_2\}$ for every $\sigma\in\Sigma_{n,c}(x)$,
    i.e. that each pointed $n$-step strategy $(x,\sigma)$ forces a set
    of two partial plays after $n$ steps.
    A similar argument works in general: we provide the details in the \hyperref[appendix:proof:prop:main]{Appendix}.
    \begin{align*}
        c^*_{n+1}(x)
        &=\widebar{H^n_X}(^*c)\odot c^*_n(x)
        \\&
        =\mu^P\mu^Q\circ P\delta^{PQ}\circ PQ(\lambda_n)\circ H_X^n(c)\circ c^*_n(x)
        \\
        &
        =\mu^P\mu^Q\circ P\delta^{PQ}\circ PQ(\lambda_n)\circ H_X^n(c)(
        \{\{\rho^\sigma_1x^\sigma_1,\rho^\sigma_2x^\sigma_2\}
            \mid\sigma\in\Sigma_{n,c}(x)\})
        \\
        &
        =\mu^P\mu^Q(\{\delta^{PQ}(\{\{\{\rho^\sigma_ix^\sigma_i\cdot U\}\mid U\in c^*(x^\sigma_i)\}\mid i\in 1,2\})
            \mid \sigma\in\Sigma_{n,c}(x)\})
        \\
        &
        =
        \mu^P\mu^Q(\{\{\mc{V}^\sigma_1\cup\mc{V}^\sigma_2\mid
                \mc{V}_i^\sigma\subseteq \{\{\rho^\sigma_ix^\sigma_i\cdot U\}
                \mid U\in c^*(x)\}\mt{ for } i\in 1,2\}
            \mid\sigma\in\Sigma_{n,c}(x)\})
        \\
        &
        =
        \{\textstyle\bigcup(\mc{V}^\sigma_1\cup\mc{V}^\sigma_2)\mid
                \sigma\in\Sigma_{n,c}(x),\,
                \mc{V}_i^\sigma\subseteq \{\{\rho^\sigma_ix^\sigma_i\cdot U\}
                \mid U\in c^*(x)\}\mt{ for } i\in 1,2
            \}
    \end{align*}
    The equality after the first line follows by assumption,
    and the inductive hypothesis.
    We also find that:
    \[ \{{\sf plays}^\sigma_{n+1}(x)\mid\sigma\in\Sigma_{n+1,c}(x)\}
    =\{\sigma_{n+1}(\rho^\sigma_1x^\sigma_1)\cup\sigma_{n+1}(\rho^\sigma_2x^\sigma_2)
    \mid \sigma\in\Sigma_{n+1,c}(x)\}\]
    We now show these two sets are equal. $(\supseteq)$
    Assume some $(n+1)$-step partial strategy $\sigma\in\Sigma_{n+1,c}(x)$.
    We immediately have an $n$-step partial strategy $\{\sigma_i\}_{i\le n}$
    (and we know it forces $\rho^\sigma_1x^\sigma_1$ and $\rho^\sigma_2x^\sigma_2$
    by assumption).
    Choosing $\mc{V}_1^\sigma:=\{\sigma_{n+1}(\rho^\sigma_1x^\sigma_1)\}$
    and $\mc{V}_2^\sigma:=\{\sigma_{n+1}(\rho^\sigma_2x^\sigma_2)\}$,
    recovers the element
    $\bigcup(\mc{V}^\sigma_1\cup\mc{V}^\sigma_2)=
    \sigma_{n+1}(\rho^\sigma_1x^\sigma_1)\cup\sigma_{n+1}(\rho^\sigma_1x^\sigma_1)$ in $c^*_{n+1}(x)$.
    For the
    $(\subseteq)$ direction,
    assume some $n$-step partial strategy $\sigma\in\Sigma_{n,c}(x)$,
    and appropriate sets of subsets $\mc{V}^\sigma_i$.
    We can extend $\sigma$ to a $(n+1)$-step strategy
    by choosing $\sigma_{n+1}(\rho^\sigma_1x^\sigma_1):=\bigcup\mc{V}^\sigma_1$
    and $\sigma_{n+1}(\rho^\sigma_2x^\sigma_2):=\bigcup\mc{V}^\sigma_2$.
    This gives us an element in the RHS, corresponding
    to each element in $c^*_{n+1}(X)$ because
    $\bigcup(\mc{V}^\sigma_1\cup\mc{V}^\sigma_2)=\bigcup\mc{V}^\sigma_1\cup
    \bigcup\mc{V}^\sigma_2$.
    Checking this is a well-defined $(n+1)$-step partial strategy
    follows from
    $\mc{V}^\sigma_i\subseteq\{\{\rho^\sigma_ix^\sigma_i\cdot U\}\mid
        U\in c^*(x)\}$.
\end{proof}

Recall that $!:X\rightarrow\widetilde{PT}(0)$ is the unique morphism
into the final object in ${\bf Kl}(\widetilde{PT})$.
The next lemma describes how composing with $\widebar{H^n_X}(!)$
equates the set of all the $n$-step partial outcomes of strategies
in $\Sigma_n(x)$, and the set of all the $n$-step partial outcomes of whole
strategies in $\Sigma_c(x)$.
The map
$\widebar{H^n_X}(!):(XA)^{<n}XB+(XA)^nX\rightarrow (XA)^{<n}XB$
in ${\bf Kl}(\widetilde{PT})$
removes collections which contain incomplete plays.
By direct calculation we can see:
\[ \widebar{H^n_X}(!)\odot U=\bigcup_{u\in U}(P\mu^T\circ\delta^{PT}
    \circ T(\lambda_n\circ H^n_X(!))(u))\]
Now
$\widebar{H^n_X}(!)=\lambda_n\circ H^n_X(!):H^n_X(X)\rightarrow \widetilde{PT}H^n_X(0)$
maps a completed execution $\chi$ to $\{\eta^T(\chi)\}$
and an incomplete execution $\rho$ to $\emptyset$.
Both laws $\delta^{PT}:TP\rightarrow PT$ we use the property that
$\delta(u)=\emptyset\iff \emptyset\in{\sf supp}\, u$.
Thus, in the expression above, only $u$'s which only contain
completed traces will be kept.
We summarise our findings in the following proposition, which will be used to prove Lemma \ref{lemma:equate_n_deep} below.

\begin{proposition}
    \label{prop:filter_incomplete}
    Composition with
    $\widebar{H^n_X}(!):H^n_X(X)\rightarrow\widetilde{PT}H^n_X(0)$
    in ${\bf Kl}(\widetilde{PT})$
    filters out collections (elements of $TH^n_X(X)$) which contain incomplete executions.
    Formally: given some $U\in\widetilde{PT}H^n_X(X)$,
    \[ \widebar{H^n_X}(!)\odot U=\{u\in TH^n_X(X)\mid u\in U\mt{ and }
    {\sf supp}\,u \subseteq H^n_X(0)\} \]
\end{proposition}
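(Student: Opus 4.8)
The plan is to start from the unfolded Kleisli composite recorded just above the statement,
\[ \widebar{H^n_X}(!)\odot U=\bigcup_{u\in U}\bigl(P\mu^T\circ\delta^{PT}\circ T(\lambda_n\circ H^n_X(!))(u)\bigr)\,, \]
and to evaluate each summand $P\mu^T\circ\delta^{PT}\circ T(\lambda_n\circ H^n_X(!))(u)$ for a fixed $u\in U$, splitting on whether ${\sf supp}\,u\subseteq H^n_X(0)$. Throughout I would silently use the inclusion $TH^n_X(0)\hookrightarrow TH^n_X(X)$ induced by the coproduct injection $H^n_X(0)\hookrightarrow H^n_X(X)$, so that a collection whose support consists only of completed executions is read interchangeably as an element of either set; this is what lets the right-hand side, a subset of $TH^n_X(X)$, be identified with the element of $\widetilde{PT}H^n_X(0)$ produced on the left.

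First I would record the pointwise behaviour of the extended map, already observed above: $\lambda_n\circ H^n_X(!)$ sends a completed execution $\chi\in H^n_X(0)$ to the singleton $\{\eta^T(\chi)\}$ and every incomplete execution to $\emptyset$. Pushing $u$ forward along this map and reading off supports then shows that $\emptyset\in{\sf supp}\bigl(T(\lambda_n\circ H^n_X(!))(u)\bigr)$ if and only if ${\sf supp}\,u$ contains an incomplete execution, i.e.\ exactly when ${\sf supp}\,u\not\subseteq H^n_X(0)$.

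With this in hand the case split is immediate. If ${\sf supp}\,u\not\subseteq H^n_X(0)$, then $\emptyset$ sits in the support above, so the property $\delta^{PT}(v)=\emptyset\iff\emptyset\in{\sf supp}\,v$ (recalled before the statement) forces $\delta^{PT}(\dots)=\emptyset$, and since $P\mu^T$ preserves $\emptyset$ the summand is empty. If instead ${\sf supp}\,u\subseteq H^n_X(0)$, every element of the support is sent to a singleton $\{\eta^T(\chi)\}$, and I would argue that $\delta^{PT}$ then returns a single collection, whereupon $P\mu^T$ collapses the inner $\eta^T$'s and recovers $u$, so the summand is exactly $\{u\}$. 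Unioning over $u\in U$, only the summands with ${\sf supp}\,u\subseteq H^n_X(0)$ contribute, each giving $\{u\}$, which is precisely $\{u\in U\mid{\sf supp}\,u\subseteq H^n_X(0)\}$.

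The hard part will be the singleton computation in the second case, which has to be checked against the explicit distributive laws of Example~\ref{example-distributive-laws}. For $\delta^{PQ}$ one notes that when every $U_i=\{\eta^Q(\chi_i)\}$ is a singleton the only admissible witnesses are $V_i=U_i$, so $\delta^{PQ}$ outputs the single union $\{\{\eta^Q(\chi_i)\}_i\}$ and $\mu^Q$ flattens it back to $u=\{\chi_i\}_i$. For $\delta^{PD}$ the constraint ${\sf supp}\,\varphi_\chi\subseteq\{\eta^D(\chi)\}$ pins each $\varphi_\chi$ to the Dirac on $\eta^D(\chi)$, again leaving a unique admissible choice whose image under $\mu^D$ is $u$. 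Both are routine once the shapes of the laws are substituted, and together they complete the verification that the surviving summands are exactly the singletons $\{u\}$ with ${\sf supp}\,u\subseteq H^n_X(0)$.
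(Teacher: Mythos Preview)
Your proposal is correct and follows the same line as the paper, which presents the argument as a discussion immediately preceding the proposition rather than as a separate proof. You go further than the paper in one respect: the paper only argues that summands with an incomplete execution vanish (via $\delta^{PT}(v)=\emptyset\iff\emptyset\in{\sf supp}\,v$) and leaves implicit that the surviving summands are precisely the singletons $\{u\}$, whereas you spell this out explicitly for both $T=Q$ and $T=D$ using the concrete forms of $\delta^{PQ}$ and $\delta^{PD}$.
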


\begin{lemma}
    \label{lemma:equate_n_deep}
    The $n$-step partial outcomes of $n$-step strategies which only contain completed
    plays, are equal to the $n$-step partial outcomes of full strategies which only
    contain completed plays.
    For all $x\in X$, we have
    \[\widebar{H^n_X}(!)\odot \{{\sf plays}^\sigma_n(x)\mid \sigma\in\Sigma_{n,c}(x)\}
        =\widebar{H^n_X}(!)\odot\{{\sf plays}^\sigma_n(x)\mid \sigma\in\Sigma_c(x)\}
    \]
\end{lemma}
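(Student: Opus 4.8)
The plan is to read $\widebar{H^n_X}(!)\odot(-)$ as the \emph{filtering} operation supplied by Proposition~\ref{prop:filter_incomplete}: for any $U\in\widetilde{PT}H^n_X(X)$ we have $\widebar{H^n_X}(!)\odot U=\{u\mid u\in U\text{ and }{\sf supp}\,u\subseteq H^n_X(0)\}$, and since $H^n_X(0)=(XA)^{<n}XB$ is exactly the set of completed plays, this keeps precisely those collections all of whose plays have terminated. In particular the operation is monotone for $\subseteq$. I would also record the fact that ${\sf supp}({\sf plays}^\sigma_n(x))=\mt{Im}(\sigma_n)$ (the support of the $n$-step partial outcome is the set of plays $\sigma_n$ forces), so that the filter selects exactly those $\sigma$ for which $\mt{Im}(\sigma_n)$ contains only completed plays. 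With this in hand the statement splits into two inclusions.

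For the inclusion $\supseteq$, note that every full strategy $\sigma=\{\sigma_m\}_{m\in\omega}\in\Sigma_c(x)$ restricts to an $n$-step strategy $\{\sigma_0,\dots,\sigma_n\}\in\Sigma_{n,c}(x)$ satisfying the same conditions, and ${\sf plays}^\sigma_n(x)$ depends only on $\sigma_0,\dots,\sigma_n$; hence $\{{\sf plays}^\sigma_n(x)\mid\sigma\in\Sigma_c(x)\}\subseteq\{{\sf plays}^\sigma_n(x)\mid\sigma\in\Sigma_{n,c}(x)\}$, and monotonicity of the filter yields the desired containment of right-hand side into left-hand side. For the reverse inclusion, I would take $u$ in the filtered left-hand side, so $u={\sf plays}^\sigma_n(x)$ for some $\sigma\in\Sigma_{n,c}(x)$ with ${\sf supp}\,u\subseteq(XA)^{<n}XB$; by the support fact, $\mt{Im}(\sigma_n)$ then consists only of completed plays. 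This is exactly the regime in which the future-deadlock obstruction noted after Definition~\ref{def:strategy} cannot arise: there are no partial plays left to extend. I would therefore extend $\sigma$ to a full strategy $\sigma'\in\Sigma_c(x)$ by letting each later map $\sigma_{n+k}$ ($k\ge 1$) act as the identity on completed plays, re-injecting each $\chi\in\mt{Im}(\sigma_{n+k-1})$ into the corresponding completed play of $H^{n+k}_X(X)$ via $\eta^T$. Because $H^m_X(\pi_1)$ preserves completed plays, the left condition of Definition~\ref{def:strategy} holds; because $\widebar{H^m_X}(c^*)$ likewise acts as the inclusion on the summand $(XA)^{<m}XB$, the right condition holds (in fact as an equality). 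Since $\sigma$ and $\sigma'$ agree on $\sigma_0,\dots,\sigma_n$ we get ${\sf plays}^{\sigma'}_n(x)=u$ with ${\sf supp}\,u\subseteq H^n_X(0)$, placing $u$ in the filtered right-hand side.

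The hard part will be the bookkeeping inside the extension step rather than any single clever idea. Concretely, I expect the two facts underpinning the whole argument—that ${\sf supp}({\sf plays}^\sigma_n(x))=\mt{Im}(\sigma_n)$, and that $\widebar{H^n_X}(!)$, $H^n_X(\pi_1)$ and $\widebar{H^n_X}(c^*)$ all behave as the evident injections on the completed-play summand $(XA)^{<n}XB$—to be the crux, and I would verify them directly from the definitions of the extensions and the distributive law $\lambda_n$. Once these are pinned down, one must check that the images $\mt{Im}(\sigma_{n+k})$ stay confined to completed plays so that the trivial extension iterates through all $k$, and that the right condition, being an inequality in the Kleisli order, is genuinely met at each stage; both then become routine.
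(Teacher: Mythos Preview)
Your proposal is correct and follows essentially the same approach as the paper's proof: both directions are argued exactly as you describe, with restriction of a full strategy to its first $n+1$ maps for $(\supseteq)$, and extension of an $n$-step strategy by sending each completed play $\chi$ to $\eta^T(\chi)$ for $(\subseteq)$, after using Proposition~\ref{prop:filter_incomplete} to read $\widebar{H^n_X}(!)\odot(-)$ as a filter on completed-play collections. Your write-up is more explicit about the bookkeeping (the identification ${\sf supp}({\sf plays}^\sigma_n(x))=\mt{Im}(\sigma_n)$ and the verification of both conditions in Definition~\ref{def:strategy} for the extended maps), which the paper leaves implicit.
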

\begin{proof}
    $(\subseteq)$
    Let $\sigma\in\Sigma_{n,c}(x)$,
    note every trace in the $n$-step outcome is completed by Proposition \ref{prop:filter_incomplete}.
    To extend $\sigma$, we take
    $\sigma_{n+i}:{\sf Im}(\sigma_{n+i})\to TH^{n+i+1}_X(X)$ to map each completed trace
    $\chi$ to $\eta^T(\chi)$.
    This has no impact on the $n$-step partial outcome.
    $(\supseteq)$ Let $\sigma\in\Sigma_c(x)$,
    we immediately see that $\{\sigma_i\}_{i\le n}\in\Sigma_{n,c}(x)$,
    and both will result in the same $n$-step partial outcome.
\end{proof}

Now for the final lemma.
We require the
map $\pi_n:(XA)^*XB\rightarrow \widetilde{PT}H^n_X(0)$, which is the unique
projection of the embedding $\widebar{\kappa_n}$ which is lifted from
${\bf Set}$. It is easily calculated as sending 
$\chi\in (XA)^*XB$ to
$\eta^{\widetilde{PT}}(\chi)$ if $\chi\in H^n_X(0)$ and $\emptyset$ otherwise.
Composition with $\pi$ in ${\bf Kl}(\widetilde{PT})$ thus 
filters out collections which contain
complete executions which have length greater than $n$
(are not elements of $H^n_X(0)$).

\[ \pi_n\odot U=\{u\in TH^n_X(0)\mid u \in U \} \]

\begin{lemma}
    \label{lemma:complete_n_deep_outcome}
    The union of the outcomes of strategies which
    only contain completed plays of length less than $n$,
    equals the set of $n$-step partial outcomes of strategies
    which only contain completed plays.
    \[ \pi_n\odot \bigcup_{\sigma\in\Sigma_c(x)}{\sf plays}^\sigma_c(x)
    =\widebar{H^n_X}(!)\odot\{{\sf plays}^\sigma_n(x)\mid\sigma\in\Sigma_c(x)\} \]
\end{lemma}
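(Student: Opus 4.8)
The plan is to reduce the entire statement to the single defining property of the completed outcome ${\sf plays}^\sigma_c(x)$: that it is the mediating map from the cone of Proposition~\ref{prop:cone} into the limit $(XA)^*XB$. Both sides of the claimed equation are built by composing a \emph{filtering} morphism ($\pi_n\odot(-)$ on the left, $\widebar{H^n_X}(!)\odot(-)$ on the right) with a union of per-strategy data, so the argument is: distribute the filter over the union, rewrite each term using the mediating property, and re-assemble on the other side.

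First I would record the per-strategy identity that the limiting property hands us. By Definition~\ref{def:strategy_outcome}, ${\sf plays}^\sigma_c(x)$ is the unique mediating map from the cone $\bigl(\widebar{H^m_X}(!)\odot K({\sf plays}^\sigma_m(x))\bigr)_{m\in\omega}$ (Proposition~\ref{prop:cone}) into $\bigl((XA)^*XB,\pi_m\bigr)$. Commutation of the mediating triangle at level $n$ therefore gives, for every $\sigma\in\Sigma_c(x)$,
\[ \pi_n\odot{\sf plays}^\sigma_c(x)=\widebar{H^n_X}(!)\odot K({\sf plays}^\sigma_n(x))\,. \]
This is the bridge between the left-hand filter $\pi_n\odot(-)$ and the right-hand filter $\widebar{H^n_X}(!)\odot(-)$.

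Next I would use that both composites are filters with explicit descriptions. The formula stated before the lemma gives $\pi_n\odot U=\{u\in TH^n_X(0)\mid u\in U\}$, while Proposition~\ref{prop:filter_incomplete} gives $\widebar{H^n_X}(!)\odot V=\{v\in TH^n_X(X)\mid v\in V,\ {\sf supp}\,v\subseteq H^n_X(0)\}$; these select the same collections since $TH^n_X(0)=\{v\in TH^n_X(X)\mid{\sf supp}\,v\subseteq H^n_X(0)\}$. Because a filter of a union is the union of the filters, applying $\pi_n\odot(-)$ on the left, then the identity above term by term, then pushing $\widebar{H^n_X}(!)\odot(-)$ back out of the union yields
\[ \pi_n\odot\bigcup_{\sigma}{\sf plays}^\sigma_c(x)=\bigcup_{\sigma}\bigl(\widebar{H^n_X}(!)\odot K({\sf plays}^\sigma_n(x))\bigr)=\widebar{H^n_X}(!)\odot\bigcup_{\sigma}K({\sf plays}^\sigma_n(x))\,. \]
It remains to identify $\bigcup_\sigma K({\sf plays}^\sigma_n(x))$ with $\{{\sf plays}^\sigma_n(x)\mid\sigma\in\Sigma_c(x)\}$. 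Since $K$ closes a single collection and a singleton is already convex (for $T=Q$ we have ${\sf cl}^{PQ}(\{U\})=\{U\}$, and for $T=D$ the only distribution supported on $\{U\}$ is the Dirac at $U$, so ${\sf cl}^{PD}(\{U\})=\{U\}$), each $K({\sf plays}^\sigma_n(x))$ is the singleton $\{{\sf plays}^\sigma_n(x)\}$, and the union of these singletons is exactly the set on the right-hand side.

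The point to handle with care is the interaction with convex closure: the unions $\bigcup_\sigma$ range over an arbitrary index set, so I cannot merely invoke the $\omega$-chain join preservation of Proposition~\ref{prop:kl_cpo_enriched}. What rescues the argument is that the two filters commute with \emph{arbitrary} closure-unions $\bigvee={\sf cl}^{PT}\circ\bigcup$, and this follows from support-additivity: a finite union (resp.\ convex combination) of collections lies in $TH^n_X(0)$ precisely when each of its constituents does, since ${\sf supp}$ of the combination is the union of the constituent supports. Hence a closure element of $\bigcup_\sigma{\sf plays}^\sigma_c(x)$ survives the filter iff it is a combination of surviving constituents, which is exactly the corresponding closure element of the filtered union. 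I expect this support-additivity check, together with confirming that the explicit filtering formulas extend through ${\sf cl}^{PT}$, to be the only genuinely delicate step; everything else is formal manipulation of the mediating identity.
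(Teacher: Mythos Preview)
Your proof is correct, and in fact the paper does not include a proof of this lemma at all---it is stated without argument and then invoked in Theorem~\ref{theorem:main}. Your approach is exactly the natural one: the per-strategy identity $\pi_n\odot{\sf plays}^\sigma_c(x)=\widebar{H^n_X}(!)\odot K({\sf plays}^\sigma_n(x))$ is immediate from the mediating property in Definition~\ref{def:strategy_outcome}, and the explicit filter formulas for $\pi_n\odot(-)$ and $\widebar{H^n_X}(!)\odot(-)$ given in the paper (just before the lemma and in Proposition~\ref{prop:filter_incomplete}) make the distribution over the union routine. Your observation that $K({\sf plays}^\sigma_n(x))$ is literally the singleton $\{{\sf plays}^\sigma_n(x)\}$, because the convex closure of a singleton is itself for both $T=Q$ and $T=D$, is the clean way to finish.

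The care you take in the final paragraph is well placed: since the index set $\Sigma_c(x)$ is not an $\omega$-chain, Proposition~\ref{prop:kl_cpo_enriched} does not directly give join-preservation, and one really does need the support-additivity argument (a finite union or convex combination has support contained in $H^n_X(0)$ iff each constituent does) to push the filter through ${\sf cl}^{PT}$. This is the only non-formal step, and you have identified and handled it correctly.
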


\begin{theorem}[Executions are Strategies]
    \label{theorem:main}
    The execution map recovers, at a state $x\in X$,
    the union of the completed outcomes of all the strategies
    originating from $x$.
    \[{\sf exec}_c(x)=\bigcup_{\sigma\in\Sigma_c(x)}{\sf plays}^\sigma_c(x)\]
\end{theorem}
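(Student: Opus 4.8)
The plan is to exploit the universal (limiting) property of $(XA)^*XB$ directly, reducing the desired equality of two morphisms $1\to (XA)^*XB$ in ${\bf Kl}(\widetilde{PT})$ to the equality of their composites with every projection $\pi_n$. By Proposition~\ref{prop:(co)limit-coincide}, $(XA)^*XB$ is the limit of the final sequence $0\xleftarrow{!}\widebar{H_X}(0)\xleftarrow{\widebar{H_X}(!)}\widebar{H^2_X}(0)\leftarrow\cdots$, so two maps out of $1$ into it coincide exactly when they agree after post-composition with each $\pi_n$. Hence it suffices to verify, for each $n\in\omega$, that $\pi_n\odot{\sf exec}_c(x)=\pi_n\odot\bigcup_{\sigma\in\Sigma_c(x)}{\sf plays}^\sigma_c(x)$.

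First I would rewrite the left-hand side. Since ${\sf exec}_c={\sf tr}_{c^*}$ arises as the mediating map of the cone $(\widebar{H^n_X}(!)\odot c^*_n)_{n\in\omega}$ (this is precisely how the trace map is produced from Theorem~\ref{theorem:traces} via Proposition~\ref{prop:(co)limit-coincide}, as recorded by the explicit $\bigvee$-formula for the execution map), its defining property gives $\pi_n\odot{\sf exec}_c=\widebar{H^n_X}(!)\odot c^*_n$, and so $\pi_n\odot{\sf exec}_c(x)=\widebar{H^n_X}(!)\odot c^*_n(x)$.

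The remainder is a chain of substitutions using the three lemmas. Applying Lemma~\ref{lemma:main}, I replace $c^*_n(x)$ by $\{{\sf plays}^\sigma_n(x)\mid\sigma\in\Sigma_{n,c}(x)\}$, obtaining $\widebar{H^n_X}(!)\odot\{{\sf plays}^\sigma_n(x)\mid\sigma\in\Sigma_{n,c}(x)\}$. Lemma~\ref{lemma:equate_n_deep} then lets me pass from $n$-step strategies to full strategies, yielding $\widebar{H^n_X}(!)\odot\{{\sf plays}^\sigma_n(x)\mid\sigma\in\Sigma_c(x)\}$. Finally, Lemma~\ref{lemma:complete_n_deep_outcome} identifies this expression with $\pi_n\odot\bigcup_{\sigma\in\Sigma_c(x)}{\sf plays}^\sigma_c(x)$, which is exactly the right-hand side composed with $\pi_n$. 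Since this equality of $\pi_n$-composites holds for every $n$, limit-uniqueness forces ${\sf exec}_c(x)=\bigcup_{\sigma\in\Sigma_c(x)}{\sf plays}^\sigma_c(x)$.

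Given the three lemmas, the theorem is pure assembly, so the only point requiring care in this final step is the justification that $\pi_n\odot{\sf exec}_c=\widebar{H^n_X}(!)\odot c^*_n$, i.e.\ that ${\sf exec}_c$ really is the limit-mediating map of that particular cone. I expect the genuine difficulty to lie upstream: in Lemma~\ref{lemma:main}, where the one-step behaviour of the weak distributive law $\delta^{PT}$ must be shown to reverse controller branching into exactly the family of admissible strategy choices, and, to a lesser extent, in Lemma~\ref{lemma:complete_n_deep_outcome}, where the interaction between the union over strategies, the truncation $\widebar{H^n_X}(!)$, and the projection $\pi_n$ must be reconciled. The assembly carried out here introduces no obstacle of its own beyond invoking the limiting property.
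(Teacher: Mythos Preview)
Your proposal is correct and essentially identical to the paper's proof: both establish $\widebar{H^n_X}(!)\odot c^*_n(x)=\pi_n\odot\bigcup_{\sigma\in\Sigma_c(x)}{\sf plays}^\sigma_c(x)$ by chaining Lemmas~\ref{lemma:main}, \ref{lemma:equate_n_deep}, and \ref{lemma:complete_n_deep_outcome}, and then conclude by the limiting property of $(XA)^*XB$. The only cosmetic difference is that the paper phrases the final step as ``the union-of-strategies map satisfies the cone condition, hence equals the unique mediating map'', whereas you phrase it as ``both maps have the same $\pi_n$-composites''; these are the same argument.
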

\begin{proof} We prove the triangle commutes discussed at the top of the section:
    \begin{align*}
        \widebar{H^n_X}(!)\odot c^*_n(x)
        &=
        \widebar{H^n_X}(!)\odot\bigcup_{\sigma\in\Sigma_n(x)}K({\sf plays}^\sigma_n(x))
        && \mt{Lemma~\ref{lemma:main}}
        \\
        &=
        \widebar{H^n_X}(!)\odot\bigcup_{\sigma\in\Sigma_c(x)}K({\sf plays}^\sigma_n(x))
        && \mt{Lemma~\ref{lemma:equate_n_deep}}
        \\
        &=
        \pi_n\odot\bigcup_{\sigma\in\Sigma_c(x)}{\sf plays}^\sigma_c(x)
        && \mt{Lemma~\ref{lemma:complete_n_deep_outcome}}
    \end{align*}
    which implies that
    $\bigcup_{\sigma\in\Sigma_c(-)}{\sf plays}^\sigma_c(-):X\rightarrow(XA)^*XB$
    is the unique map by the limiting property of $(XA)^*XB$
    with projections $\pi_n:(XA)^XB\rightarrow\widetilde{PT}H^n_X(X)$.
\end{proof}

Recall $f_{\pi_2}:(XA)^*XB\rightarrow A^*B$ from Proposition~\ref{prop:traces_via_executions}.
\begin{corollary}[Traces via Strategies]
    \label{trace_via_strat}
    $\ds{\sf tr}_c(x)=f_{\widebar{\pi_2}}\odot \bigcup_{\sigma\in\Sigma_c(x)}{\sf plays}^\sigma_c(x)$
\end{corollary}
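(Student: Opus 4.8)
The plan is to obtain the corollary directly from Theorem~\ref{theorem:main} and Proposition~\ref{prop:traces_via_executions}, the only subtlety being the passage from the equation of Kleisli morphisms in Proposition~\ref{prop:traces_via_executions} to its evaluation at the fixed state $x$.

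First I would recall from Proposition~\ref{prop:traces_via_executions} that ${\sf tr}_c = f_{\widebar{\pi_2}}\odot{\sf exec}_c$ as morphisms $X\to A^*B$ in ${\bf Kl}(\widetilde{PT})$. To read off the value at $x$, I would precompose both sides with the Kleisli point $\overline{x}:1\to X$ selecting $x$ (that is, $\eta^{\widetilde{PT}}\circ x$ in ${\bf Set}$). Associativity of Kleisli composition gives $(f_{\widebar{\pi_2}}\odot{\sf exec}_c)\odot\overline{x} = f_{\widebar{\pi_2}}\odot({\sf exec}_c\odot\overline{x})$, and a short check using the monad unit law shows that ${\sf exec}_c\odot\overline{x}$ is exactly the point $1\to(XA)^*XB$ corresponding to the value ${\sf exec}_c(x)\in\widetilde{PT}((XA)^*XB)$. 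Hence ${\sf tr}_c(x)=f_{\widebar{\pi_2}}\odot{\sf exec}_c(x)$.

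It then remains to substitute Theorem~\ref{theorem:main}, which identifies ${\sf exec}_c(x)$ with $\bigcup_{\sigma\in\Sigma_c(x)}{\sf plays}^\sigma_c(x)$. Plugging this in yields ${\sf tr}_c(x)=f_{\widebar{\pi_2}}\odot\bigcup_{\sigma\in\Sigma_c(x)}{\sf plays}^\sigma_c(x)$, which is precisely the claimed identity.

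I expect no genuine obstacle here, since all the substantive work is carried by Theorem~\ref{theorem:main} (together with Lemmas~\ref{lemma:main}, \ref{lemma:equate_n_deep} and \ref{lemma:complete_n_deep_outcome}) and by Proposition~\ref{prop:traces_via_executions}. The one point deserving care is confirming that evaluating the Kleisli composite $f_{\widebar{\pi_2}}\odot{\sf exec}_c$ at $x$ coincides with composing $f_{\widebar{\pi_2}}$ with the point ${\sf exec}_c(x)$; this follows immediately from associativity of $\odot$ and the unit law, and requires no further facts about $\widetilde{PT}$, the weak distributive law, or the strategy machinery.
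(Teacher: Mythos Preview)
Your proposal is correct and follows exactly the route the paper intends: the corollary is stated without proof precisely because it is the immediate combination of Theorem~\ref{theorem:main} with Proposition~\ref{prop:traces_via_executions}. Your extra care about reading $f_{\widebar{\pi_2}}\odot{\sf exec}_c$ at a point is harmless and matches the paper's convention of writing $g\odot U$ for Kleisli postcomposition with an element $U$ (cf.\ Proposition~\ref{prop:filter_incomplete}).
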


\begin{remark}

Our Theorem \ref{theorem:main} can
be phrased as saying,
that for any $U\subseteq (XA)^*XB$, we have:
\[U\in{\sf exec}_c(x)\iff
\mt{there exists a strategy which forces }U\]
Similarly, Corollary \ref{trace_via_strat} says
that for any $U\subseteq A^*B$, we have:
\[ U\in{\sf tr}_c(x)\iff
\mt{there exists a strategy which forces }U
\]
This allows us to compute the trace semantics
of games by enumerating the strategies:
\begin{center}
\vspace{2em}
\tikz[overlay]{
            \node at (1,-0.6) {$\{\{a_1b,a_2b\}\}$};
            \node at (5,-0.6) {$\{\{a_1b\},\{a_2b\},\{a_1b,a_2b\}\}$};
            \node at (3, 1.5) {$\subseteq$};
        }
\begin{tikzpicture}
        \node[draw]                           at (0,0.6)     (c0)  {};
        \node[circle,inner sep=1pt,fill] at (0,0)       (e0)  {};
        \node[draw]                           at (-0.5,-0.6) (c10) {};
        \node[draw]                           at (0.5,-0.6)  (c11) {};
        \node[draw,circle,inner sep=1pt,fill] at (-0.5,-1.2) (e10) {};
        \node[draw,circle,inner sep=1pt,fill] at (0.5,-1.2)  (e11) {};
        \node                                 at (-0.5,-2)   (t0) {$b$};
        \node                                 at (0.5,-2)    (t1) {$b$};

        \draw[-stealth]        (c0) -- (e0);
        \draw[-stealth,dashed] (e0) -- (c10) node[left,midway,yshift=1px]  {$a_1$};
        \draw[-stealth,dashed] (e0) -- (c11) node[right,midway,yshift=1px] {$a_2$};
        \draw[-stealth]        (c10) -- (e10);
        \draw[-stealth]        (c11) -- (e11);
        \draw[-stealth,dashed] (e10) -- (t0);
        \draw[-stealth,dashed] (e11) -- (t1);

        \node[draw]                           at (4,0.6)    (c0)  {};
        \node[draw,circle,inner sep=1pt,fill] at (3.5,0)    (e0)  {};
        \node[draw,circle,inner sep=1pt,fill] at (4.5,0)    (e1)  {};
        \node[draw]                           at (3.5,-0.6) (c10) {};
        \node[draw]                           at (4.5,-0.6) (c11) {};
        \node[draw,circle,inner sep=1pt,fill] at (3.5,-1.2) (e10) {};
        \node[draw,circle,inner sep=1pt,fill] at (4.5,-1.2) (e11) {};
        \node                                 at (3.5,-2)   (t0) {$b$};
        \node                                 at (4.5,-2)   (t1) {$b$};

        \draw[-stealth]        (c0) -- (e0);
        \draw[-stealth]        (c0) -- (e1);
        \draw[-stealth,dashed] (e0) -- (c10) node[left,midway,yshift=1px]  {$a_1$};
        \draw[-stealth,dashed] (e1) -- (c11) node[right,midway,yshift=1px] {$a_2$};
        \draw[-stealth]        (c10) -- (e10);
        \draw[-stealth]        (c11) -- (e11);
        \draw[-stealth,dashed] (e10) -- (t0);
        \draw[-stealth,dashed] (e11) -- (t1);

    \end{tikzpicture}\hspace{4em}%
        \tikz[overlay]{
            \node at (4,-0.6) {$\{\{ab\},\{ac\},\{ab,ac\}\}$};
            \node at (2.25, 1.5) {$=$};
            \node at (5.4, 1.5) {$=$};
        }%
    \begin{tikzpicture}
        \node[draw]                           at (0,0.6)     (c0)  {};
        \node[draw,circle,inner sep=1pt,fill] at (0,0)       (e0)  {};
        \node[draw]                           at (0,-0.6)    (c1)  {};
        \node[draw,circle,inner sep=1pt,fill] at (-0.5,-1.2) (e10) {};
        \node[draw,circle,inner sep=1pt,fill] at (0.5,-1.2)  (e11) {};
        \node                                 at (-0.5,-2)   (t0)  {$b$};
        \node                                 at (0.5,-2)    (t1)  {$c$};

        \draw[-stealth]        (c0)  -- (e0);
        \draw[-stealth,dashed] (e0)  -- (c1) node[left,midway,yshift=1px]  {$a$};
        \draw[-stealth]        (c1)  -- (e10);
        \draw[-stealth]        (c1)  -- (e11);
        \draw[-stealth,dashed] (e10) -- (t0);
        \draw[-stealth,dashed] (e11) -- (t1);

        \node[draw]                           at (3,0.6)     (c0)  {};
        \node[draw,circle,inner sep=1pt,fill] at (3,0)       (e0)  {};
        \node[draw]                           at (2.5,-0.6)  (c10) {};
        \node[draw]                           at (3.5,-0.6)  (c11) {};
        \node[draw,circle,inner sep=1pt,fill] at (2.5,-1.2)  (e10) {};
        \node[draw,circle,inner sep=1pt,fill] at (3.25,-1.2) (e11) {};
        \node[draw,circle,inner sep=1pt,fill] at (3.75,-1.2) (e12) {};
        \node                                 at (2.5,-2)    (t0)  {$b$};
        \node                                 at (3.25,-2)   (t1)  {$b$};
        \node                                 at (3.75,-2)   (t2)  {$c$};

        \draw[-stealth]        (c0)   -- (e0);
        \draw[-stealth,dashed] (e0)   -- (c10) node[left,midway,yshift=1px]  {$a$};
        \draw[-stealth,dashed] (e0)   -- (c11) node[right,midway,yshift=1px] {$a$};
        \draw[-stealth]        (c10)  -- (e10);
        \draw[-stealth]        (c11)  -- (e11);
        \draw[-stealth]        (c11)  -- (e12);
        \draw[-stealth,dashed] (e10)  -- (t0);
        \draw[-stealth,dashed] (e11)  -- (t1);
        \draw[-stealth,dashed] (e12)  -- (t2);

        \node[draw]                           at (6,0.6)     (c0)  {};
        \node[draw,circle,inner sep=1pt,fill] at (5.5,0)     (e00)  {};
        \node[draw,circle,inner sep=1pt,fill] at (6.5,0)     (e01)  {};
        \node[draw]                           at (5.5,-0.6)  (c10) {};
        \node[draw]                           at (6.5,-0.6)  (c11) {};
        \node[draw,circle,inner sep=1pt,fill] at (5.5,-1.2)  (e10) {};
        \node[draw,circle,inner sep=1pt,fill] at (6.25,-1.2) (e11) {};
        \node[draw,circle,inner sep=1pt,fill] at (6.75,-1.2) (e12) {};
        \node                                 at (5.5,-2)    (t0)  {$b$};
        \node                                 at (6.25,-2)   (t1)  {$b$};
        \node                                 at (6.75,-2)   (t2)  {$c$};

        \draw[-stealth]        (c0)   -- (e00);
        \draw[-stealth]        (c0)   -- (e01);
        \draw[-stealth,dashed] (e00)  -- (c10) node[left,midway,yshift=1px]  {$a$};
        \draw[-stealth,dashed] (e01)  -- (c11) node[right,midway,yshift=1px] {$a$};
        \draw[-stealth]        (c10)  -- (e10);
        \draw[-stealth]        (c11)  -- (e11);
        \draw[-stealth]        (c11)  -- (e12);
        \draw[-stealth,dashed] (e10)  -- (t0);
        \draw[-stealth,dashed] (e11)  -- (t1);
        \draw[-stealth,dashed] (e12)  -- (t2);
    \end{tikzpicture}
    \vspace{3em}
\end{center}

The left example shows why convexity
is a very natural assumption.
The only difference between
in the two games is the controller decides
on the right, whereas
the environment chooses on the left:
thus we should have a trace inclusion,
as the right game is strictly better
for the controller.
Convexity insists on this inclusion,
as there is a convex choice on the right game,
when the controller doesn't pick between
going left or right.
We comment on how we could drop convexity,
and other possible variants, in the next subsection.

The right example demonstrates that different
strategies can result in the same set of traces
in the trace map. There are 3 strategies
on the left and middle game,
but a total of 7 strategies on the right game.
In contrast, for executions, the elements of the execution map {\it are} in bijective correspondence with finitely completing strategies.

\end{remark}

\subsection{Variations}
\begin{wrapfigure}{r}{0cm}
\begin{tikzpicture}
    \node[draw]                           at (0,0.7)     (c0) {};
    \node[draw,circle,inner sep=1pt,fill] at (0,0)    (e0) {};
    \node                                 at (0,-0.8) (c1) {$b$};
    \draw[-stealth]         (c0) -- (e0);
    \draw[-stealth, dashed] (e0) -- (c1);
\end{tikzpicture}\hspace{1em}
\begin{tikzpicture}
    \node[draw]                           at (0,0.7)     (c0) {};
    \node[draw,circle,inner sep=1pt,fill] at (-0.5,0)    (e0) {};
    \node[draw,circle,inner sep=1pt,fill] at (0.5,0)     (e1) {};
    \node                                 at (-0.5,-0.8) (c1) {$b$};
    \node                                 at (0.5,-0.8)  (c2) {$c$};

    \draw[-stealth]         (c0) -- (e0);
    \draw[-stealth]         (c0) -- (e1);
    \draw[-stealth, dashed] (e1) -- (c1);
    \draw[-stealth, dashed] (e0) -- (c1);
    \draw[-stealth, dashed] (e1) -- (c2);
\end{tikzpicture}
\hspace{1em}
\end{wrapfigure}
When modelling a two-player game
with a nondeterministic environment,
the choice of monad affects
how fine or coarse the induced trace equivalence
relation is.
For example,
using the monotone neighbourhood monad (assuming the prerequisites were established to obtain finite trace semantics),
where sets of subsets are upwards closed,
would yield a strictly coarser trace equivalence relation.
The trace semantics of the two games on the right would be identified
if we were to use the monotone neighbourhood monad.

We also highlight that if one wanted to drop convexity,
then a suitable combination would be the powerset monad $P$
with the finite multiset monad $M$.
There exists a distributive law with the analogous
one-step strategy picking behaviour which we rely on.
This would yield a strictly finer trace equivalence relation.
It is also more expressive in that 
we can have multiplicities on environment transitions which represent
how many ways the environment can reach a successor.
We illustrate this by adapting a previous example to use multiset semantics instead.
\begin{center}
\vspace{2em}
\tikz[overlay]{
            \node at (1,-0.6) {$\{\llbracket a_1b,a_2b\rrbracket\}$};
            \node at (5,-0.6) {$\{\llbracket a_1b\rrbracket,
                \llbracket a_2b\rrbracket\}$};
            \node at (3, 1.5) {$\ne$};
        }
\begin{tikzpicture}
        \node[draw]                           at (0,0.6)     (c0)  {};
        \node[circle,inner sep=1pt,fill] at (0,0)       (e0)  {};
        \node[draw]                           at (-0.5,-0.6) (c10) {};
        \node[draw]                           at (0.5,-0.6)  (c11) {};
        \node[draw,circle,inner sep=1pt,fill] at (-0.5,-1.2) (e10) {};
        \node[draw,circle,inner sep=1pt,fill] at (0.5,-1.2)  (e11) {};
        \node                                 at (-0.5,-2)   (t0) {$b$};
        \node                                 at (0.5,-2)    (t1) {$b$};

        \draw[-stealth]        (c0) -- (e0);
        \draw[-stealth,dashed] (e0) -- (c10) node[left,midway,yshift=1px]  {$a_1$};
        \draw[-stealth,dashed] (e0) -- (c11) node[right,midway,yshift=1px] {$a_2$};
        \draw[-stealth]        (c10) -- (e10);
        \draw[-stealth]        (c11) -- (e11);
        \draw[-stealth,dashed] (e10) -- (t0);
        \draw[-stealth,dashed] (e11) -- (t1);

        \node[draw]                           at (4,0.6)    (c0)  {};
        \node[draw,circle,inner sep=1pt,fill] at (3.5,0)    (e0)  {};
        \node[draw,circle,inner sep=1pt,fill] at (4.5,0)    (e1)  {};
        \node[draw]                           at (3.5,-0.6) (c10) {};
        \node[draw]                           at (4.5,-0.6) (c11) {};
        \node[draw,circle,inner sep=1pt,fill] at (3.5,-1.2) (e10) {};
        \node[draw,circle,inner sep=1pt,fill] at (4.5,-1.2) (e11) {};
        \node                                 at (3.5,-2)   (t0) {$b$};
        \node                                 at (4.5,-2)   (t1) {$b$};

        \draw[-stealth]        (c0) -- (e0);
        \draw[-stealth]        (c0) -- (e1);
        \draw[-stealth,dashed] (e0) -- (c10) node[left,midway,yshift=1px]  {$a_1$};
        \draw[-stealth,dashed] (e1) -- (c11) node[right,midway,yshift=1px] {$a_2$};
        \draw[-stealth]        (c10) -- (e10);
        \draw[-stealth]        (c11) -- (e11);
        \draw[-stealth,dashed] (e10) -- (t0);
        \draw[-stealth,dashed] (e11) -- (t1);

    \end{tikzpicture}\hspace{4em}%
        \tikz[overlay]{
            \node at (0.75,-0.6) {$\{\llbracket ab\rrbracket,\llbracket ac\rrbracket\}$};
            \node at (3.75,-0.6) {$\{\llbracket ab,ab\rrbracket,
                    \llbracket ab,ac\rrbracket\}$};
            \node at (6.8,-0.6) {$\{\llbracket ab\rrbracket,
                    \llbracket ac\rrbracket\}$};
            \node at (2.25, 1.5) {$\ne$};
            \node at (5.4, 1.5) {$\ne$};
        }%
    \begin{tikzpicture}
        \node[draw]                           at (0,0.6)     (c0)  {};
        \node[draw,circle,inner sep=1pt,fill] at (0,0)       (e0)  {};
        \node[draw]                           at (0,-0.6)    (c1)  {};
        \node[draw,circle,inner sep=1pt,fill] at (-0.5,-1.2) (e10) {};
        \node[draw,circle,inner sep=1pt,fill] at (0.5,-1.2)  (e11) {};
        \node                                 at (-0.5,-2)   (t0)  {$b$};
        \node                                 at (0.5,-2)    (t1)  {$c$};

        \draw[-stealth]        (c0)  -- (e0);
        \draw[-stealth,dashed] (e0)  -- (c1) node[left,midway,yshift=1px]  {$a$};
        \draw[-stealth]        (c1)  -- (e10);
        \draw[-stealth]        (c1)  -- (e11);
        \draw[-stealth,dashed] (e10) -- (t0);
        \draw[-stealth,dashed] (e11) -- (t1);

        \node[draw]                           at (3,0.6)     (c0)  {};
        \node[draw,circle,inner sep=1pt,fill] at (3,0)       (e0)  {};
        \node[draw]                           at (2.5,-0.6)  (c10) {};
        \node[draw]                           at (3.5,-0.6)  (c11) {};
        \node[draw,circle,inner sep=1pt,fill] at (2.5,-1.2)  (e10) {};
        \node[draw,circle,inner sep=1pt,fill] at (3.25,-1.2) (e11) {};
        \node[draw,circle,inner sep=1pt,fill] at (3.75,-1.2) (e12) {};
        \node                                 at (2.5,-2)    (t0)  {$b$};
        \node                                 at (3.25,-2)   (t1)  {$b$};
        \node                                 at (3.75,-2)   (t2)  {$c$};

        \draw[-stealth]        (c0)   -- (e0);
        \draw[-stealth,dashed] (e0)   -- (c10) node[left,midway,yshift=1px]  {$a$};
        \draw[-stealth,dashed] (e0)   -- (c11) node[right,midway,yshift=1px] {$a$};
        \draw[-stealth]        (c10)  -- (e10);
        \draw[-stealth]        (c11)  -- (e11);
        \draw[-stealth]        (c11)  -- (e12);
        \draw[-stealth,dashed] (e10)  -- (t0);
        \draw[-stealth,dashed] (e11)  -- (t1);
        \draw[-stealth,dashed] (e12)  -- (t2);

        \node[draw]                           at (6,0.6)     (c0)  {};
        \node[draw,circle,inner sep=1pt,fill] at (5.5,0)     (e00)  {};
        \node[draw,circle,inner sep=1pt,fill] at (6.5,0)     (e01)  {};
        \node[draw]                           at (5.5,-0.6)  (c10) {};
        \node[draw]                           at (6.5,-0.6)  (c11) {};
        \node[draw,circle,inner sep=1pt,fill] at (5.5,-1.2)  (e10) {};
        \node[draw,circle,inner sep=1pt,fill] at (6.25,-1.2) (e11) {};
        \node[draw,circle,inner sep=1pt,fill] at (6.75,-1.2) (e12) {};
        \node                                 at (5.5,-2)    (t0)  {$b$};
        \node                                 at (6.25,-2)   (t1)  {$b$};
        \node                                 at (6.75,-2)   (t2)  {$c$};

        \draw[-stealth]        (c0)   -- (e00);
        \draw[-stealth]        (c0)   -- (e01);
        \draw[-stealth,dashed] (e00)  -- (c10) node[left,midway,yshift=1px]  {$a$};
        \draw[-stealth,dashed] (e01)  -- (c11) node[right,midway,yshift=1px] {$a$};
        \draw[-stealth]        (c10)  -- (e10);
        \draw[-stealth]        (c11)  -- (e11);
        \draw[-stealth]        (c11)  -- (e12);
        \draw[-stealth,dashed] (e10)  -- (t0);
        \draw[-stealth,dashed] (e11)  -- (t1);
        \draw[-stealth,dashed] (e12)  -- (t2);
    \end{tikzpicture}
    \vspace{3em}
\end{center}

Moreover, one could consider using finite multisets for the controller
and the environment, with the law $MM\to MM$ -
however restricting the controller to have finite branching
would stop us being able to apply Theorem \ref{theorem:traces}. Using
the list monad $L$ for the environment, together with the distributive law $LP\to PL$ would
likely also work;
this would make it possible to express an order over environment transitions.

\section{Conclusion}

We have shown how to fit two-player controller-versus-environment games
into the finite trace semantics framework of Hasuo, Jacobs and Sokolova \cite{hasuo_generic_2007}, by
identifying that the monads $\widetilde{PQ}$ and $\widetilde{PD}$,
built from a weak distributive laws,
and the class of linear functors satisfy
the necessary requirements.
Along the way we uncovered two mistakes in \cite{bonchi_convexity_2022}
and \cite{jacobs_coalgebraic_2008},
which manifest in our work as composition in $\widetilde{PP}$
not being left-strict, and $\widetilde{PQ}$ not being a commutative monad respectively.
We gave a categorical definition of strategies in games,
and showed how the execution map recovers the set of collections of plays
which can be forced by a finitely completing controller strategy.

Building on the automata-theoretic approach to synthesis,
future work will include using a non-deterministic automaton to describe the desired linear-time behaviour of game plays, and using a product construction to aid synthesis.
By our main result, if the set of collections of traces
from a state in the product is non-empty (and if traces on the product give traces on the game),
then the controller has a strategy to realise the desired behaviour.

In the future, we would also like to extend our results to \textit{infinite} traces,
since in the context of verification and synthesis, one is mainly interested in linear-time properties of potentially infinite computations.
Results in \cite{GoubaultLarrecq2024WeakDL} show the
appropriate weak distributive laws exist for monads
modelling continuous probability on certain categories of
topological spaces,
which will be required when moving to infinite traces.
Another possible extension would be incorporating simple stochastic
games into our results,
however work in \cite{aristote} indicates that the natural approach of lifting
the weak distributive law $PP\to PP$ to the category
of convex algebras may not be possible.

\bibliographystyle{entics} 
\bibliography{refs}

\appendix

\paragraph{Proof Details for Proposition~\ref{prop:kl_cpo_enriched}}
    \label{appendix:kl_cpo_enriched}
    \[\mt{(LHS)}=\{\tts\bigcup
        \ds\bigcup_{y\in U}\mc{U}_y\mid U\in f(x),\,
            \emptyset\subset\mc{U}_y\subseteq_\omega
    {\sf cl}_Z\bigcup_{i\in\omega}g_i(y)\}\]
    \[\mt{(RHS)}=
        {\sf cl}_Z\{\tts\bigcup\ds\bigcup_{y\in V}\mid
        V\in f(x),\,i\in\omega,\,\emptyset\subset\mc{V}_y
        \subseteq g_i(y)\}\]
    $(\subseteq)$ Fix $U\in f(x)$ and $\mc{U}_y$
    for each $y\in U$.
    Let $\mc{U}_y=\{\bigcup\mc{U}_{yj}\}_{j\in J}$
    where
    $\emptyset\subset \mc{U}_{yj}\subseteq_\omega\ds\bigcup_{i\in\omega}g_i(y)$
    and $J$ is some finite set.
    Let
    $\mc{A}_{iy}:=\ds\bigcup_{j\in J}\mc{U}_{yj}\cap g_i(y)$.
    Notice $\mc{A}_{0y}\subseteq\mc{A}_{1y}\subseteq\mc{A}_{2y}\dots$
    and
    \[\bigcup_{i\in\omega}\mc{A}_{iy}=
    \bigcup_{i\in\omega}(\bigcup_{j\in J}\mc{U}_{yj}\cap g_i(y))
    =\bigcup_{j\in J}\mc{U}_{yj}\cap\bigcup_{i\in\omega}g_i(y)=
    \bigcup_{j\in J}\mc{U}_{yj}\]
    is finite,
    meaning that for all $y\in U$, we have some $k_y\in\omega$
    with $\mc{A}_{k_yy}=\ds\bigcup_{i\in\omega}\mc{A}_{iy}$.
    We set $k:=\ds\max_{y\in U}k_y$,
    so we have $\bigcup\ds\bigcup_{y\in U}\mc{A}_{ky}\in\mt{(RHS)}$.
    Now we show this is equal to the element in the $(\mt{LHS})$
    we started with.
    \[\tts\bigcup\ds\bigcup_{y\in U}\mc{A}_{ky}=\tts\bigcup
        \ds\bigcup_{y\in U}\bigcup_{j\in J}\mc{U}_{yj}=
        \bigcup_{y\in U}\bigcup_{j\in J}\tts\bigcup\mc{U}_{yj}
        =\bigcup\ds\bigcup_{y\in U}\mc{U}_y
        \]
    $(\supseteq)$ Because the $(\mt{LHS})$ is a convex set,
    it suffices to show
    \[(\mt{LHS})\supseteq
        {\sf cl}_Z\{\tts\bigcup\ds\bigcup_{y\in V}\mid
        V\in f(x),\,i\in\omega,\,\emptyset\subset\mc{V}_y
        \subseteq g_i(y)\}\]
    Suppose some $V\in f(x)$, $i\in\omega$, and
    $\emptyset\subset\mc{V}_y\subseteq g_i(y)$
    for each $y\in V$.
    We have $\mc{V}_y\subseteq_\omega g_i(y)\subseteq_\omega{\sf cl}_Z
    \ds\bigcup_{i\in\omega}g_i(y)$,
    hence
    $\bigcup\ds\bigcup_{y\in V}\mc{V}_y\in(\mt{LHS})$.

    Now
    we show that case for $T=D$.
$(-)^\$_{XY}:\widetilde{PD}(Y)^X\rightarrow\widetilde{PD}(Y)^{\widetilde{PD}(X)}$
\begin{align*}
    f^\$(U)
    &=\mu^P\mu^D\circ P\delta^{PD}\circ PD(f)(U)
    \\
    &=\mu^P\mu^D\circ P\delta^{PD}(\{[f(x_i)\mapsto p_i]
        \mid [x_i\mapsto p_i]\in U\})
    \\
    &=\mu^P\mu^D(\{
        \{\mu^D[\Phi_i\mapsto p_i]\mid {\sf supp}\,\Phi_i\subseteq f(x_i)\}
        \mid [x_i\mapsto p_i]\in U\})
    \\
    &=\{
        \mu^D\mu^D[\Phi_i\mapsto p_i]\mid[x_i\mapsto p_i]\in U,\,
        {\sf supp}\,\Phi_i\subseteq f(x_i)
        \}
\end{align*}

Show that $\odot$ is bilinear.
It preserves joins in it's first argument:
\begin{align*}
    &\bigvee_{n\in\omega} g_n\odot f(x)
    \\
    &=(\bigvee_{n\in\omega}g_n)^\$(f(x))
    \\
    &=\{\mu^D\mu^D[\Phi_i\mapsto p_i]\mid [y_i\mapsto p_i]\in f(x),\
    {\sf supp}\,\Phi_i\subseteq {\sf cl}\bigcup_{n\in\omega} g_n(y_i)\}
    \\
    &\overset{(*)}{=}{\sf cl}
    \{\mu^D\mu^D[\Phi_i\mapsto p_i]\mid[y_i\mapsto p_i]\in f(x),\,n\in\omega,\,
        {\sf supp}\,\Phi_i\subseteq g_n(y_i)\}
    \\
    &={\sf cl}\bigcup_{n\in\omega}
    \{\mu^D\mu^D[\Phi_i\mapsto p_i]\mid[y_i\mapsto p_i]\in f(x),\,
        {\sf supp}\,\Phi_i\subseteq g_n(y_i)\}
    \\
    &={\sf cl}\bigcup_{n\in\omega}g_n^\$(f(x))
    \\
    &=\bigvee_{n\in\omega}(g_n\odot f)(x)
\end{align*}

We prove $\mt{(LHS)}\overset{(*)}{=}{\sf cl}\,\mt{(RHS)}$.
$(*)(\subseteq)$ Fix some $[y_i\mapsto p_i]\in f(x)$.
Take each $\Phi_i$ with
${\sf supp}\,\Phi_i\subseteq{\sf cl}\ds\bigcup_{n\in\omega}g_n(y_i)$.
We have $\Phi_i=[\mu^D(\Phi_{ij})\mapsto p_{ij}]_{j\in J}$
where ${\sf supp}\,\Phi_{ij}\subseteq\ds\bigcup_{n\in\omega}g_n(y_i)$.
From ${\sf supp}\,\Phi_{ij}$ being finite
and $g_n$'s being increasing,
there must exist some $n_{ij}$ such that
${\sf supp}\,\Phi_{ij}\subseteq g_{n_{ij}}(y_i)$.
Choose $n:=\ds\max_{i,j}\,n_{ij}$
so we have
\[ {\sf supp}\,\Phi_{ij}\subseteq g_n(y_i)
    \implies
{\sf supp}\,\mu^D([\Phi_{ij}\mapsto p_{ij}])\subseteq g_n(y_i)\]
Hence we have an element $\mu^D\mu^D[\mu^D([\Phi_{ij}\mapsto p_{ij}])\mapsto p_i]$
of the $\mt{(RHS)}$, which is thus
an element of ${\sf cl}\,(\mt{RHS})$ because ${\sf cl}$ is extensive.
Finally, must show this is equivalent to the element we started with:
\begin{align*}
    \mu^D\mu^D[\mu^D([\Phi_{ij}\mapsto p_{ij}])\mapsto p_i]
    &
    =\mu^D\mu^DD(\mu^D)[[\Phi_{ij}\mapsto p_{ij}]\mapsto p_i]
    \\
    &
    =\mu^D\mu^DDD(\mu^D)[[\Phi_{ij}\mapsto p_{ij}]\mapsto p_i]
    \\
    &
    =\mu^D\mu^D[[\mu^D(\Phi_{ij})\mapsto p_{ij}]\mapsto p_i]
\end{align*}
which follows from
\[ \mu^D\mu^DD(\mu^D)\overset{\mt{assoc.}}=\mu^D\mu^D\mu^D
\overset{\mt{assoc.}}=\mu^DD(\mu^D)\mu^D\overset{\mt{nat.}}
=\mu^D\mu^DDD(\mu^D)\]
$(*)(\supseteq)$ We show $(\mt{LHS})\supseteq(\mt{RHS})$
which implies $(\mt{LHS})={\sf cl}\,(\mt{LHS})\supseteq{\sf cl}\,(\mt{RHS})$
by the $\mt{(LHS)}$ being closed and montonicity of ${\sf cl}$.
Fix $[y_i\mapsto p_i]\in f(x)$, $n\in\omega$,
and some $\Phi_i$ with ${\sf supp}\,\Phi_i\subseteq g_n(y_i)$.
We have ${\sf supp}\,\Phi_i\subseteq\ds\bigcup_{n\in\omega}g_n(y_i)$
so $\mu^D\mu^D[\Phi_i\mapsto p_i]\in(\mt{LHS})$.

Now show that $\odot$ preserves joins in it's second argument:

\begin{align*}
    &g\odot\bigvee_{n\in\omega}f_n(x)
    \\
    &=g^\$({\sf cl}\,\bigcup_{n\in\omega}f_n(x))
    \\
    &=\{\mu^D\mu^D[\Phi_i\mapsto p_i]\mid
        [y_i\mapsto p_i]\in{\sf cl}\,\bigcup_{n\in\omega}f_n(x),\,
        {\sf supp}\,\Phi_i\subseteq g(y_i)\}
    \\
    &\overset{(*)}={\sf cl}
    \{\mu^D\mu^D[\Phi_i\mapsto p_i]\mid n\in\omega,\,
        [y_i\mapsto p_i]\in f_n(x),\,
    {\sf supp}\,\Phi_i\subseteq g(y_i)\}
    \\
    &={\sf cl}\,\bigcup_{n\in\omega}
    \{\mu^D\mu^D[\Phi_i\mapsto p_i]\mid [y_i\mapsto p_i]\in f_n(x),\,
    {\sf supp}\,\Phi_i\subseteq g(y_i)\}
    \\
    &={\sf cl}\,\bigcup_{n\in\omega}g^\$(f_n(x))
    \\
    &=\bigvee_{n\in\omega}(g\odot f_n)(x)
\end{align*}
Where $(*)$ follow from:
\begin{align*}
    \mu^D(\Phi)\in{\sf cl}\bigcup_{n\in\omega}f_n(x)
    &\iff{\sf supp}\,\Phi\subseteq\bigcup_{n\in\omega}f_n(x)
    \\
    &\iff n\in\omega,\, {\sf supp}\,(\Phi)\subseteq f_n(x)
    \\
    &\iff n\in\omega,\, \mu^D(\Phi)\in{\sf cl}\, f_n(x)
    \\
    &\iff n\in\omega,\, \mu^D(\Phi)\in f_n(x)
\end{align*}

\paragraph{Proof Details for Counterexample~\ref{counter:left_strict}}
\label{appendix:left_strict_counter_info}
Compute the Kleisli extension $(-)^\$$ of $\bot:X\rightarrow \widetilde{PP_f}(Y)$.
\begin{align*}
    \bot^\$(\mc{U})
    &=\mu^{\widetilde{PP_f}}\circ \widetilde{PP_f}(\bot)(\mc{U})
    \\
    &=\mu^{\widetilde{PP_f}}\{\{\bot(x)\mid x\in U\}\mid U\in\mc{U}\}
    \\
    &=\mu^P\mu^{P_f}\circ P\delta^{PP_f}\{\{\bot(x)\mid x\in U\}\mid U\in\mc{U}\}
    \\             
    &=\mu^P\mu^{P_f}\circ \{\delta^{PP_f}\{\bot(x)\mid x\in U\}\mid U\in\mc{U}\}
    \\             
    &=\mu^P\mu^{P_f} \{\delta^{PP_f}\{\emptyset\mid x\in U\}\mid U\in\mc{U}\}
    \\             
    &=\mu^P\mu^{P_f} \{\begin{cases}
        \{\emptyset\}& \mt{if }U =\emptyset\\
        \emptyset & \mt{otherwise}
    \end{cases}\mid U\in\mc{U}\}
    \\
    \\
    &=\mu^P \{\begin{cases}
        \{\emptyset\}& \mt{if }U =\emptyset\\
        \emptyset & \mt{otherwise}
    \end{cases}\mid U\in\mc{U}\}
    \\
    &=\begin{cases}
        \{\emptyset\} & \mt{if }\emptyset\in\mc{U}
        \\
        \emptyset & \mt{otherwise}
    \end{cases}
\end{align*}
Hence for $f:\{x\}\rightarrow Y$ which maps $x\mapsto \{\emptyset\}$
\[\bot \odot f(x)=\bot^\$(\{\emptyset\})=\{\emptyset\}\ne \emptyset=\bot(x)\]

\paragraph{Proof Details for Counterexample~\ref{counter:commutativity}}
\label{appendix:commutativity_counter_info}

Recall that for the monad $\widetilde{PQ}$ to be commutative,
the following must commute:

\begin{center}
\begin{tikzcd}
    \widetilde{PQ}(\widetilde{PQ}(X)\times Y)
    \ar[d, "\widetilde{PQ}({\sf str})"]
    &
    \ar[l, "{\sf stl}"']
    \widetilde{PQ}(X) \times \widetilde{PQ}(Y)
    \ar[r, "{\sf str}"]
    &
    \widetilde{PQ}(X\times \widetilde{PQ}(Y))
    \ar[d, "\widetilde{PQ}({\sf stl})"]
    \\
    \widetilde{PQ}\widetilde{PQ}(X\times Y)
    \ar[r, "\mu"]
    &
    \widetilde{PQ}(X\times Y)
    &
    \widetilde{PQ}\widetilde{PQ}(X\times Y)
    \ar[l, "\mu"']
\end{tikzcd}
\end{center}
We can draw the set $\{\{x_1\},\{x_2\},\{x_1,x_2\}\}$
as a game on the left (ignoring sets which arise from convexity), and the set $\{\{y_1\},\{y_2\}\}$ as a game
on the right.

\begin{center}
\begin{tikzpicture}
    \node[circle,fill,inner sep=1.5pt] at (0,0) (n1) {};
    \node[inner sep=0pt] at (1,0.5) (n2) {};
    \node[inner sep=0pt] at (1,-0.5) (n3) {};
    \node at (2,0.5) (n4) {$x_1$};
    \node at (2,-0.5) (n5) {$x_2$};

    \draw[-stealth] (n1) -- (n2);
    \draw[-stealth] (n1) -- (n3);
    \draw[-stealth, dotted] (n2) -- (n4);
    \draw[-stealth, dotted] (n3) -- (n5); 
\end{tikzpicture}\hspace{3em}
\begin{tikzpicture}
    \node[circle,fill,inner sep=1.5pt] at (0,0) (n1) {};
    \node[inner sep=0pt] at (1,0) (n2) {};
    \node at (2,0.5) (n4) {$y_1$};
    \node at (2,-0.5) (n5) {$y_2$};

    \draw[-stealth] (n1) -- (n2);
    \draw[-stealth, dotted] (n2) -- (n4);
    \draw[-stealth, dotted] (n2) -- (n5); 
\end{tikzpicture}
\end{center}
Using the left stength map first and then
the right we get the game:

\begin{center}
\begin{tikzpicture}
    \node[circle,fill,inner sep=1.5pt] at (0,0) (n1) {};
    \node[inner sep=0pt] at (1,0) (n2) {};

    \node[circle,fill,inner sep=1.5pt] at (2,1) (m1) {};
    \node[inner sep=0pt] at (3,1.5) (m2) {};
    \node[inner sep=0pt] at (3,0.5) (m3) {};
    \node at (4.5,1.5) (m4) {$x_1,y_1$};
    \node at (4.5,0.5) (m5) {$x_2,y_1$};
    \node[circle,fill,inner sep=1.5pt] at (2,-1) (m1') {};
    \node[inner sep=0pt] at (3,-1.5) (m2') {};
    \node[inner sep=0pt] at (3,-0.5) (m3') {};
    \node at (4.5,-1.5) (m4') {$x_2,y_2$};
    \node at (4.5,-0.5) (m5') {$x_1,y_2$};

    \draw[-stealth] (n1) -- (n2);
    \draw[-stealth, dotted] (n2) -- (m1);
    \draw[-stealth, dotted] (n2) -- (m1'); 
    \draw[-stealth] (m1) -- (m2);
    \draw[-stealth] (m1) -- (m3);
    \draw[-stealth, dotted] (m2) -- (m4);
    \draw[-stealth, dotted] (m3) -- (m5);
    \draw[-stealth] (m1') -- (m2');
    \draw[-stealth] (m1') -- (m3');
    \draw[-stealth, dotted] (m2') -- (m4');
    \draw[-stealth, dotted] (m3') -- (m5');
\end{tikzpicture}
\end{center}
which has enforceable sets ${\sf cl}\{\{x_1y_1,x_1y_2\}, \{x_1y_1,x_2y_2\},\{x_2y_1,x_1y_2\},\{x_2y_1,x_1y_2\}\}$.
When we use the right strength map followed
by the left, we get the game:
\begin{center}
\begin{tikzpicture}
    \node[circle,fill,inner sep=1.5pt] at (0,0) (n1) {};
    \node[inner sep=0pt] at (1,1) (n2) {};
    \node[inner sep=0pt] at (1,-1) (n3) {};

    \draw[-stealth] (n1) -- (n2);
    \draw[-stealth] (n1) -- (n3);
    \draw[-stealth, dotted] (n2) -- (m1);
    \draw[-stealth, dotted] (n3) -- (m1'); 

    \node[circle,fill,inner sep=1.5pt] at (2,1) (m1) {};
    \node[inner sep=0pt] at (3,1) (m2) {};
    \node at (4.5,1.5) (m4) {$x_1,y_1$};
    \node at (4.5,0.5) (m5) {$x_1,y_2$};

    \draw[-stealth] (m1) -- (m2);
    \draw[-stealth, dotted] (m2) -- (m4);
    \draw[-stealth, dotted] (m2) -- (m5); 

    \node[circle,fill,inner sep=1.5pt] at (2,-1) (m1') {};
    \node[inner sep=0pt] at (3,-1) (m2') {};
    \node at (4.5,-0.5) (m4') {$x_2,y_1$};
    \node at (4.5,-1.5) (m5') {$x_2,y_2$};

    \draw[-stealth] (m1') -- (m2');
    \draw[-stealth, dotted] (m2') -- (m4');
    \draw[-stealth, dotted] (m2') -- (m5'); 
\end{tikzpicture}
\end{center}
which has enforceable sets ${\sf cl}\{\{x_1y_1,x_1y_2\},\{x_2y_1, x_2y_2\}\}$.
Hence:

\begin{align*}
        &\mu^{\widetilde{PQ}}\circ\widetilde{PQ}(\mt{stl})
        \circ\mt{str}(\{\{x_1\},\{x_2\},\{x_1,x_2\},\{\{y_1,y_2\}\})
            \\
        &={\sf cl}_{X\times Y}\{\{x_1y_1,x_1y_2\},\{x_2y_1,x_2y_2\}\}
        \\
        &\subset{\sf cl}_{X\times Y}\{\{x_1y_1,x_1y_2\},\{x_2y_1,x_2y_2\},
        \{x_2y_1,x_1y_2\},\{x_1y_1,x_2y_2\}\}
        \\
        &=\mu^{\widetilde{PQ}}\circ\widetilde{PQ}(\mt{str})
        \circ\mt{stl}(\{\{x_1\},\{x_2\}\},\{y_1,y_2\}\})
\end{align*}

\paragraph{Proof Details for Lemma~\ref{prop:beh_factors_nat_trans}}
\label{appendix:details_beh_factors_nat_trans}
\begin{center}
    \begin{tikzcd}
        X \ar[d, "c"]
            \ar[r, "\mt{beh}_c"]
            \ar[rr, bend left=20, "\mt{beh}_{E_\alpha(c)}"]
        &[3em]
        Z_F \ar[r, "f_\alpha"] \ar[d, "\zeta_F", "\rotatebox{90}{$\sim$}"']
        &[3em]
        Z_G \ar[dd, "\zeta_G", "\rotatebox{90}{$\sim$}"']
        \\[-1em]
        F(X) \ar[d, "\alpha_X"]
            \ar[r, "F(\mt{beh}_c)"]
        &
        F(Z_F)
            \ar[d, "\alpha_{Z_F}"]
        \\[-1em]
        G(X) \ar[r, "G(\mt{beh}_c)"]
            \ar[rr, bend right=20, "G(\mt{beh}_{E_\alpha(c)})"']
        &
        G(Z_F) \ar[r, "G(f_\alpha)"]
        &
        G(Z_G)
    \end{tikzcd}
    \end{center}
    $E_\alpha$ is an identity-on-morphisms functor defined as
    mapping an $F$-coalgebra $c:X\rightarrow F(X)$ to
    a $G$-coalgebra $\alpha_X\circ c:X\rightarrow G(X)$.
    That this maps $F$-coalgebra morphisms to $G$-coalgebra morphisms
    follows from naturality of $\alpha$.
    The map $f_\alpha:=\mt{beh}_{H_\alpha(\zeta_F)}$ is obtained from finality.

\paragraph{Proof Details for Lemma~\ref{lemma:main}}
\label{appendix:proof:prop:main}

We prove $\{{\sf plays}^\sigma_n(x)\mid\sigma\in\Sigma_n(x)\}=c^*_n(x)$
for all $x\in X$ by induction.
The base case holds $(n=0)$ as both sides equal $\{\eta^T(x)\}$.
Example both sides for the $n+1$ case:
\begin{align*}
    \mt{(LHS)}
    &=
    \{{\sf plays}^\sigma_{n+1}(x)\mid\sigma\in\Sigma_{n+1}(x)\}
    \\
    &=
    \{\sigma_{n+1}\odot{\sf plays}^\sigma_n(x)\mid\sigma\in\Sigma_{n+1}(x)\}
    \\
    &=
    \{\mu^T\circ T(\sigma_{n+1})\circ
        {\sf plays}^\sigma_n(x)\mid\sigma\in\Sigma_{n+1}(x)\}
\end{align*}
\begin{align*}
    \mt{(RHS)}
    &=c^*_{n+1}(x)
    \\
    &=\widebar{H^n_X}(c)\odot c^*_n(x)
    \\
    &=\mu^P\mu^T\circ P\delta^{PT}\circ{PT}(\lambda_n)
    \circ{PT}H^n_X(c^*)\circ c^*_n(x)
    \\
    &\overset{\mt{(IH)}}
    =\mu^P\mu^T\circ P\delta^{PT}\circ PT(\lambda_n)
    \circ PT H^n_X(c^*)
    (\{{\sf plays}^\sigma_n(x)\mid \sigma\in\Sigma_n(x)\})
    \\
    &=\bigcup_{\sigma\in\Sigma_n(x)}
(P\mu^T\circ\delta^{PT}\circ T(\lambda_n)\circ TH^n_X(c^*)({\sf plays}^\sigma_n(x)))
\end{align*}

For the case $T=Q$, assume ${\sf plays}^\sigma_n(x)=
\{\chi_1^\sigma,\dots,\chi_{m^\sigma}^\sigma,\rho^\sigma_1x^\sigma_1,
\dots,\rho^\sigma_{l^\sigma}x^\sigma_{l^\sigma}\}$
where $\chi^\sigma_i\in (XA)^{<n}XB$ for $0<i\le m^\sigma$
and $\rho_i^\sigma x^\sigma_i\in (XA)^nX$ for $0<i\le l^\sigma$.
We can compute more detailed expressions for each side.
Recall that the definition of a strategy gives that
$\sigma_{n+1}(\chi^\sigma_i)=\chi^\sigma_i$, i.e. it preserves completed plays.
\begin{align*}
    \mt{(LHS)}
    &=
        \{\mu^Q\circ Q(\sigma_{n+1})
        (
        \{\chi_1^\sigma,\dots,\chi_{m^\sigma}^\sigma,\rho^\sigma_1x^\sigma_1,
        \dots,\rho^\sigma_{l^\sigma}x^\sigma_{l^\sigma}\})
        \mid \sigma\in\Sigma_{n+1}(x)\}
    \\
    &=
        \{\mu^Q
        (
        \{\{\chi_1^\sigma\},\dots,\{\chi_{m^\sigma}^\sigma\},
        \sigma_{n+1}(\rho^\sigma_1x^\sigma_1),
        \dots,\sigma_{n+1}(\rho^\sigma_{l^\sigma}x^\sigma_{l^\sigma})\})
        \mid \sigma\in\Sigma_{n+1}(x)\}
    \\
    &=
        \{
        \{\chi_1^\sigma,\dots,\chi_{m^\sigma}^\sigma\}\cup
        \sigma_{n+1}(\rho^\sigma_1x^\sigma_1)\cup
        \dots\cup\sigma_{n+1}(\rho^\sigma_{l^\sigma}x^\sigma_{l^\sigma})
        \mid \sigma\in\Sigma_{n+1}(x)\}
\end{align*}
We introduce shorthand
$\mc{W}^\sigma_i=\{\{\rho^\sigma_ix^\sigma_iu\mid u\in U\}\mid U\in c^*(x^\sigma_i)\}$
\begin{align*}
    (\mt{RHS})
    &=
    P\mu^Q\circ\delta^{PQ}\circ Q(\lambda_n)\circ QH^n_X(c^*)
    (\{\chi_1^\sigma,\dots,\chi_{m^\sigma}^\sigma,\rho^\sigma_1x^\sigma_1,
    \dots,\rho^\sigma_{l^\sigma}x^\sigma_{l^\sigma}\})
    \\
    &=
    P\mu^Q\circ\delta^{PQ}\circ Q(\lambda_n)
    (\{\chi_1^\sigma,\dots,\chi_{m^\sigma}^\sigma,\rho^\sigma_1c^*(x^\sigma_1),
    \dots,\rho^\sigma_{l^\sigma}c^*(x^\sigma_{l^\sigma})\})
    \\
    &=
    P\mu^Q\circ\delta^{PQ}
    (\{\{\{\chi_1^\sigma\}\},\dots,\{\{\chi_{m^\sigma}^\sigma\}\},
        \mc{W}^\sigma_1,\dots,\mc{W}^\sigma_{l^\sigma}\})
    \\
    &=
    P\mu^Q
    (\{
    \{\{\chi_1^\sigma\},\dots,\{\chi_{m^\sigma}^\sigma\}\}\cup
    \bigcup_{i\le l^\sigma}\mc{V}^\sigma_i\mid
    \emptyset\subset\mc{V}^\sigma_i\subseteq_\omega\mc{W}^\sigma_i\})
    \\
    &=
    \{
    \{\chi_1^\sigma,\dots,\chi_{m^\sigma}^\sigma\}\cup
    \tts\bigcup\ds\bigcup_{i\le l^\sigma}\mc{V}^\sigma_i\mid
    \emptyset\subset\mc{V}^\sigma_i\subseteq_\omega\mc{W}^\sigma_i\}
\end{align*}
So $\mt{RHS}=
    \{
    \{\chi_1^\sigma,\dots,\chi_{m^\sigma}^\sigma\}\cup
    \tts\bigcup\ds\bigcup_{i\le l^\sigma}\mc{V}^\sigma_i\mid
    \sigma\in\Sigma_n(x),\,
    \emptyset\subset\mc{V}^\sigma_i\subseteq_\omega\mc{W}^\sigma_i\}$.
We now justify that $(\mt{LHS})=(\mt{RHS})$.

$(\subseteq)$ Assume we have element
$\{\chi^\sigma_1,\dots,\chi^\sigma_{m^\sigma}\}\cup\sigma_{n+1}
(\rho^\sigma_1x^\sigma_1)\cup\dots\cup\sigma_{n+1}
(\rho^\sigma_{l^\sigma}x^\sigma_{l^\sigma})\in\mt{LHS}$ for some
$(n+1)$-deep strategy
$\sigma\in\Sigma_{n+1}(x)$.
Note that we have an $n$-deep strategy
$\sigma':=\{\sigma_i\}_{i\le n}\in\Sigma_n(x)$,
with the same $n$-deep outcome set
${\sf plays}^{\sigma'}_n(x)={\sf plays}^\sigma_n(x)$.
From now on we just think of $\sigma$ as $n$-deep and $(n+1)$-deep
simultaneously.
We pick $\mc{V}^\sigma_i=\{\sigma_{n+1}(\rho^\sigma_ix^\sigma_i)\}$,
we have $\mc{V}_i^\sigma\subseteq_\omega\mc{W}^\sigma_i$ because
by the definition of a strategy we have
$\sigma_{n+1}(\rho x)=\{\rho u\mid u \in U\}\mt{ for some } U\in c^*(x)$,
i.e. that strategies are confined to choosing successors in $c$.
We have
$
    \sigma_{n+1}(\rho^\sigma_1x^\sigma_1)\cup\dots
    \cup\sigma_{n+1}(\rho^\sigma_{l^\sigma}x^\sigma_{l^\sigma})
    =\tts\bigcup\ds\bigcup_{i\le l^\sigma}\mc{V}^\sigma_i$,
so the element we started with is in the $\mt{RHS}$.

$(\supseteq)$ Suppose we have
$\{\chi^\sigma_1,\dots,\chi^\sigma_{m^\sigma}\}\cup
\bigcup\ds\bigcup_{i\le l^\sigma}\mc{V}^\sigma_i\in\mt{RHS}$ for
some $n$-deep strategy $\sigma\in\Sigma_n(x)$
and set $\mc{V}^\sigma_i\subseteq\mc{W}^\sigma_i$ for each $i\le l^\sigma$.
Construct an $(n+1)$-strategy $\sigma'$ with $\sigma'_i:=\sigma_i$ for
all $i\le n$, $\sigma'_{n+1}(\chi^\sigma_i)=\chi^\sigma_i$
for each $i\le m^\sigma$,
and $\sigma'_{n+1}(\rho^\sigma_ix^\sigma_i):=\bigcup\mc{V}^\sigma_i$
for each $i\le m^\sigma$.
To show $\sigma'\in\Sigma_{n+1}(x)$,
we already have that $\{\sigma'_i\}_{i\le n}$ as it is defined
to be $\sigma\in\Sigma_n(x)$, so we are left to check that $\sigma'_{n+1}$
satisfies the conditions in the definition of a strategy
(Definition~\ref{def:strategy}).
It preserves completed plays.
Finally we show
$\bigcup\mc{V}^\sigma_i=\{\rho^\sigma_ix^\sigma_i u\mid u\in U\}$
for some $U\in c^*(x^\sigma_i)$ (for all $i\le l^\sigma$).
Let $\mc{V}^\sigma_i=\{V_{ij}^\sigma\}_{j\in J}$,
we have some $U^\sigma_{ij}\in c^*(x^\sigma_i)$
with $\{\rho^\sigma_ix^\sigma_iu\mid u\in U^\sigma_{ij}\}=V^\sigma_{ij}$.
Because $c^*(x^\sigma_i)$ is closed under finite unions
we have $U:=\ds\bigcup_{j\in J}U^\sigma_{ij}\in c^*(x^\sigma_i)$,
so:
\begin{align*}
\tts\bigcup\mc{V}^\sigma_i
=\ds\bigcup_{j\in J}V^\sigma_{ij}
&=\bigcup_{j\in J}\{\rho^\sigma_ix^\sigma_iu\mid u\in U^\sigma_{ij}\}
\\
&=\{\rho^\sigma_ix^\sigma_iu\mid j\in J,\, u\in U^\sigma_{ij}\}
\\
&=\{\rho^\sigma_ix^\sigma_iu\mid u\in \bigcup_{j\in J}U^\sigma_{ij}\}
\\
&=\{\rho^\sigma_ix^\sigma_iu\mid u\in U\}
\end{align*}

Now we consider the case $T=D$.
We follow the same as for $T=P$.
Adopting a slightly more compact notation,
for all $\sigma\in\Sigma_n(x)$:
assume
${\sf plays}^\sigma_n(x)=
[\chi^\sigma_i\mapsto p^\sigma_i,\rho^\sigma_jx^\sigma_j\mapsto q^\sigma_j]
_{i\in I^\sigma,j\in J^\sigma}$
where $I^\sigma$ and $J^\sigma$ are two disjoint finite sets,
$\chi^\sigma_i\in (XA)^{<n}XB$, $\rho^\sigma_jx^\sigma_j\in (XA)^nX$,
and $p^\sigma_i,q^\sigma_j\in(0,1]$
are positive probabilities, such that
$\ds\sum_{i\in I}p_i+\sum_{j\in J}q_j=1$.
We thus have
\begin{align*}
    \mt{(LHS)}
    &=\{\mu^D\circ D(\sigma_{n+1})(
    [\chi^\sigma_i\mapsto p^\sigma_i,\rho^\sigma_jx^\sigma_j\mapsto q^\sigma_j]
    _{i\in I^\sigma,j\in J^\sigma})
    \mid \sigma\in\Sigma_{n+1}(x)\}
    \\
    &=\{\mu^D(
    [[\chi^\sigma_i\mapsto 1]\mapsto p^\sigma_i,\sigma_{n+1}(\rho^\sigma_jx^\sigma_j)\mapsto q^\sigma_j]
    _{i\in I^\sigma,j\in J^\sigma}
    )\mid
    \sigma\in\Sigma_{n+1}(x)\}
\end{align*}
Let
$W^\sigma_j=\{[\rho^\sigma_jx^\sigma_ju\mapsto\varphi(u)]_{u\in{\sf supp}\,\varphi}
\mid\varphi\in c^*(x^\sigma_j)\}$.
\begin{align*}
    &
    P\mu^D\circ\delta^{PD}\circ D(\lambda_n)\circ DH^n_X(c^*)
        ({\sf plays}^\sigma_n(x))
    \\
    &=
    P\mu^D\circ\delta^{PD}\circ D(\lambda_n)\circ DH^n_X(c^*)
    ([\chi^\sigma_i\mapsto p^\sigma_i,\rho^\sigma_jx^\sigma_j\mapsto q^\sigma_j]
    _{i\in I^\sigma,j\in J^\sigma}
    )
    \\
    &=
    P\mu^D\circ\delta^{PD}\circ D(\lambda_n)
    ([\chi^\sigma_i\mapsto p^\sigma_i,\rho^\sigma_jx^\sigma_jc^*(x^\sigma_j)\mapsto q^\sigma_j]
    _{i\in I^\sigma,j\in J^\sigma}
    )
    \\
    &=
    P\mu^D\circ\delta^{PD}
    ([\{[\chi^\sigma_i\mapsto 1]\}\mapsto p^\sigma_i,
    W^\sigma_j\mapsto q^\sigma_j]_{i\in I^\sigma,j\in J^\sigma}
    )
    \\
    &=
    P\mu^D(
    \{\mu^D[\Phi^\sigma_i\mapsto p_i^\sigma,\Phi_j^\sigma\mapsto q^\sigma_j]_{i\in I^\sigma,j\in J^\sigma}
    \mid {\sf supp}\,\Phi^\sigma_i\subseteq\{[\chi^\sigma_i\mapsto 1]\},\,
        {\sf supp}\,\Phi^\sigma_j\subseteq W^\sigma_j\}
    )
    \\
    &=
    P\mu^D(
    \{\mu^D[[[\chi^\sigma_i\mapsto 1]\mapsto 1]\mapsto p_i^\sigma,\Phi_j^\sigma\mapsto q^\sigma_j]_{i\in I^\sigma,j\in J^\sigma}
    \mid
        {\sf supp}\,\Phi^\sigma_j\subseteq W^\sigma_j\}
    )
    \\
    &=
    \{\mu^D\mu^D[[[\chi^\sigma_i\mapsto 1]\mapsto 1]\mapsto p_i^\sigma,\Phi_j^\sigma\mapsto q^\sigma_j]
        _{i\in I^\sigma,j\in J^\sigma}
    \mid
        {\sf supp}\,\Phi^\sigma_j\subseteq W^\sigma_j\}
\end{align*}
As is common when notating distributions, we rewrite by leaving
the $\mu^D$'s implicit.
We can also simplify because complete plays and incomplete plays
are disjoint.
\[
    \mt{LHS}=\{
    [\chi^\sigma_i\mapsto p^\sigma_i,\sigma_{n+1}(\rho^\sigma_jx^\sigma_j)\mapsto q^\sigma_j]
_{i\in I^\sigma,j\in J^\sigma}\mid\sigma\in\Sigma_{n+1}(x)\}
\]
\[ \mt{RHS}=\{[\chi^\sigma_i\mapsto p^\sigma_i,\Phi^\sigma_j\mapsto
q^\sigma_j]_{i\in I^\sigma,j\in J^\sigma}\mid \sigma\in\Sigma_{n}(x),\,
{\sf supp}\,\Phi^\sigma_j\subseteq W^\sigma_j\}\]

$(\subseteq)$ Assume some $\sigma\in\Sigma_{n+1}$. Using the same trick
as in the $\widetilde{PQ}$ case, we have that $\sigma\in\Sigma_{c,n}(x)$.
We can choose $\Phi^\sigma_j:=[\sigma_{n+1}(x^\sigma_j)\mapsto 1]$.
We first verify that ${\sf supp}\,\Phi^\sigma_j\subseteq W^\sigma_j$,
which involves checking $\{\sigma_{n+1}(x^\sigma_j)\}\subseteq W^\sigma_j$,
this follows from $\sigma_{n+1}(x^\sigma_j)\in c^*(x^\sigma_j)$
(which holds by def. of strategy).

$(\supseteq)$ Suppose $\sigma\in\Sigma_n(x)$
and $\Phi^\sigma_j\in DDH^{n+1}_X(X)$ such that
${\sf supp}\,\Phi^\sigma_j\subseteq W^\sigma_j$.
We can extend $\sigma$ by choosing
$\sigma_{n+1}(\rho^\sigma_jx^\sigma_j):=\mu^D(\Phi^\sigma_j)$.
This results in a element of the LHS (as again, by convexity,
this element must exist). It is equal to the original because
$\rho^\sigma_jx^\sigma_ju \mapsto \mu^D(\Phi^\sigma_j)(\rho^\sigma_jx^\sigma_ju)\cdot q^\sigma_j$
by both elements.

\end{document}